\documentclass[11pt]{article}
\pdfoutput=1

\title{Abelian Hypergroups and Quantum Computation}

 
\hyphenation{hy-per-group hy-per-groups}
\hyphenation{pol-y-no-mi-al}  

 \usepackage{amssymb}
 \usepackage{amsthm}
 \usepackage{authblk}
 \usepackage{mathtools}
 \usepackage{vmargin}

 \usepackage{comment}
 \usepackage{mdwlist}
 \usepackage{enumitem }


\usepackage[hyphens]{url}
\usepackage[colorlinks=true,
            citecolor=magenta,
            linkcolor=blue]
           {hyperref}
           
\usepackage[hyphenbreaks]{breakurl}
 
 \usepackage{sectsty}
 \allsectionsfont{\normalfont\sffamily\bfseries}
 
\usepackage{shadethm}
\usepackage{thmtools}
\usepackage{mdframed}

\definecolor{mybg}{rgb}{0.94,0.94,0.94}


 \theoremstyle{definition}
 
 \newshadetheorem{theorem}{Theorem} 
 \newshadetheorem{lemma}{Lemma}
 \newshadetheorem{comparison}{Comparison}
 \newshadetheorem{conjecture}{Conjecture}
 \newshadetheorem{corollary}{Corollary} 
 \newshadetheorem{claim}{Claim}    
 \newshadetheorem{proposition}{Proposition}
 \newshadetheorem{remark}{Remark}
 \newshadetheorem{definition}{Definition}

 \newmdtheoremenv[backgroundcolor=mybg, %
     innertopmargin =1.5pt ,
     innerbottommargin=1.5pt,
     innerleftmargin=2pt,
     innerrightmargin=2pt,
     leftmargin=-1.5pt,
     rightmargin=-1.5pt,
     linewidth=\topskip, %
     innerrightmargin=0pt, %
     skipabove=\topskip, skipbelow=\topskip,%
     topline=false,bottomline=false,leftline=false,rightline=false]{example}{Example}

 \usepackage[square,numbers,compress]{natbib}

 \usepackage{faktor}

  \newcommand{\abs}[1]{|{#1}|}
  \newcommand{\ket}[1]{|{#1}\rangle}
  \newcommand{\bra}[1]{\langle{#1}|}
  
  \newcommand{\set}[2]{\{{#1}\,|\,{#2}\}}

  \newcommand{\defeq}{\vcentcolon=}

  \newcommand{\Comp}{\mathcal{H}}
  
  \newcommand{{\triv}}{{\rm triv}}
  \newcommand{\Conj}[1]{{\overline{#1}}}
  \newcommand{\Repr}[1]{{\widehat{#1}}}
 
 \newcommand{\Fourier}[1]{\mathcal{F}_{#1}}
 
 \DeclareMathOperator{\Ker}{ker}
  
  \newcommand{\Hchi}{\mathcal{X}}
  \newcommand{\A}{\mathcal{A}}

  \newcommand{\Complex}{\mathbb{C}}
  \newcommand{\Integer}{\mathbb{Z}}

  \def\Z{\mathbb{Z}}
  
  \def\C{\mathbb{C}}

  
  \newcommand\polylog[1]{\textnormal{polylog$\,#1$}}

  \usepackage[dvipsnames]{xcolor}

 
 \usepackage{relsize}
 \newcommand{\PX}[1]{ X_{\mathsmaller{#1}} }
 \newcommand{\PZ}[1]{ Z_{\mathsmaller{#1}} }

 \newcommand{\w}[1]{ w_{\mathsmaller{#1}} } 
 \newcommand{\ws}[1]{ w_{\mathsmaller{#1}} } 
 \newcommand{\varw}[1]{ \varpi_{\mathsmaller{#1}} } 
 
 \newcommand{\Integers}{\Z}
 
 \DeclareFontFamily{OT1}{pzc}{}
 \DeclareFontShape{OT1}{pzc}{m}{it}{<-> s * [1.10] pzcmi7t}{}
 \DeclareMathAlphabet{\mathpzc}{OT1}{pzc}{m}{it}



  \newcommand{\myparagraph}[1]{%
    \bigbreak 
    \noindent\textsf{\textbf{#1}}\enspace\ignorespaces}
  \makeatother

  


 \newcommand{\subhyp}[2]{ \overline{#1}_{#2} }

\title{Abelian Hypergroups and Quantum Computation}

\author[1,3]{Juan Bermejo-Vega\thanks{jbermejovega@gmail.com}}

\author[2]{Kevin C. Zatloukal\thanks{kevinz@mit.edu}}

\affil[1]{\small Max-Planck-Institut f\"ur Quantenoptik, Theory Division, Garching, Germany}

\affil[2]{\small \ Center for Theoretical Physics, Massachusetts Institute of Technology,  Cambridge, MA, USA}

\affil[3]{\small \ Dahlem Center for Complex Quantum Systems, Freie Universit\"{a}t Berlin, 14195 Berlin, Germany}

\begin{document}

\maketitle

\begin{abstract} 
Motivated by a connection, described here for the first time, between the hidden normal subgroup problem (HNSP) and abelian hypergroups (algebraic objects that model collisions of physical particles), we develop a stabilizer formalism using abelian hypergroups and an associated classical simulation theorem (a la Gottesman-Knill). Using these tools, we develop the first provably efficient quantum algorithm for finding hidden subhypergroups of nilpotent abelian hypergroups and, via the aforementioned connection, a new, hypergroup-based algorithm for the HNSP on nilpotent groups. We also give efficient methods for manipulating non-unitary, non-monomial stabilizers and an adaptive Fourier sampling technique of general interest. 

\end{abstract}

\section{Introduction}

\subsubsection*{Background and motivation}

Ever since Shor's groundbreaking discovery of an efficient quantum algorithm
for factoring \cite{Shor}, researchers have strived to understand the source of its quantum speed up and  find new applications for quantum computers. An era of breakthroughs followed, in which researchers  found that factoring and discrete log are instances of the so-called  \emph{Hidden Subgroup Problem} (HSP), a more general problem about \emph{finite groups};  developed efficient quantum algorithms for the abelian group HSP \cite{kitaev_phase_estimation,Kitaev97_QCs:_algorithms_error_correction,Brassard_Hoyer97_Exact_Quantum_Algorithm_Simons_Problem,Hoyer99Conjugated_operators,MoscaEkert98_The_HSP_and_Eigenvalue_Estimation,mosca_phd,cheung_mosca_01_decomp_abelian_groups,Damgard_QIP_note_HSP_algorithm}; and discovered that solving the nonabelian group HSP over symmetric and dihedral groups would lead to a revolutionary algorithm  for  Graph Isomorphism \cite{Ettinger99aquantum} and break lattice-based cryptography \cite{Regev:2004:QCL:976327.987177}.

Motivated by these breakthroughs, there has been a great deal of research work over the last decade aimed at finding efficient quantum algorithms for nonabelian HSPs, leading to many  successes
\cite{Hallgren00NormalSubgroups:HSP,EttingerHoyerKnill2004_Hidden_Subgroup,Kuperberg2005_Dihedral_Hidden_Subgroup,Regev2004_Dihedral_Hidden_Subgroup,Kuperberg2013_Hidden_Subgroup,RoettelerBeth1998_Hidden_Subgroup,IvanyosMagniezSantha2001_Hidden_Subgroup,MooreRockmoreRussellSchulman2004,InuiLeGall2007_Hidden_Subgroup,BaconChildsVDam2005_Hidden_Subgroup,ChiKimLee2006_Hidden_Subgroup,IvanyosSanselmeSantha2007_Hidden_Subgroup,MagnoCosmePortugal2007_Hidden_Subgroup,IvanyosSanselmeSantha2007_Nil2_Groups,FriedlIvanyosMagniezSanthaSen2003_Hidden_Translation,Gavinsky2004_Hidden_Subgroup,ChildsVDam2007_Hidden_Shift,DenneyMooreRussel2010_Conjugate_Stabilizer_Subgroups,Wallach2013_Hidden_Subgroup,lomont_HSP_review,childs_lecture_8,VanDamSasaki12_Q_algorithms_number_theory_REVIEW}, though  efficient quantum algorithms for dihedral and symmetric HSP have still not been found.

Thus far, the foundation of nearly all known  quantum algorithms for nonabelian HSPs has been the seminal work of Hallgren, Russell, and Ta-Shma \cite{Hallgren00NormalSubgroups:HSP}, which showed that hidden \emph{normal} subgroups can be found efficiently  for \emph{any} nonabelian group. For example, the algorithms for (near) Hamiltonian groups \cite{Gavinsky:2004:QSH:2011617.2011625} work because all subgroups of such groups are (nearly) normal. Likewise, the sophisticated algorithm of Ivanyos et al. for 2-nilpotent groups \cite{Ivanyos:2008:EQA:1792918.1792983} cleverly reduces the problem of finding a hidden non-normal subgroup to
two problems of finding hidden normal subgroups. 

Surprisingly, given the importance of the nonabelian HSP program in the  history of quantum computing, the success of the quantum algorithm for the hidden \emph{normal} subgroup problem (HNSP) \cite{Hallgren00NormalSubgroups:HSP} remains poorly explained. The initial motivation for this work was improve our understanding of the quantum algorithm for the HNSP up to the same level as those for abelian HSPs.

Our approach was inspired by a  (fairly unexpected) recently discovered  connection by Bermejo-Vega-Lin-Van den Nest \cite{BermejoLinVdN13_BlackBox_Normalizers}, between Shor's algorithm and two other foundational  results in quantum computation, namely, Gottesman's \emph{Pauli stabilizer formalism} (PSF) \cite{Gottesman_PhD_Thesis},  widely used in quantum error correction, and the Gottesman-Knill theorem \cite{Gottesman_PhD_Thesis,Gottesman99_HeisenbergRepresentation_of_Q_Computers,Gottesman98Fault_Tolerant_QC_HigherDimensions}, which proves the efficient classical simulability of Clifford circuits. In short, the {BVLVdN connection} states that the quantum algorithms for \emph{abelian} HSPs  belong to a common family of highly structured quantum circuits built of \emph{normalizer gates over abelian groups} \cite{VDNest_12_QFTs,BermejoVega_12_GKTheorem,BermejoLinVdN13_Infinite_Normalizers}  (quantum Fourier transforms, group automorphism gates, and quadratic phase gates). This fact, combined with the generalized Group Stabilizer Formalism (GSF) for simulating normalizer circuits  \cite{VDNest_12_QFTs,BermejoVega_12_GKTheorem,BermejoLinVdN13_Infinite_Normalizers}, was used to prove a sharp \emph{no-go theorem} for finding new quantum algorithms with the standard abelian group Fourier sampling techniques.

Given the success of \cite{BermejoLinVdN13_BlackBox_Normalizers} at understanding abelian HSP quantum algorithms  using an abelian group stabilizer formalism, our aim in this work is to gain a deeper understanding of the algorithm for HNSPs on nonabelian groups using a more sophisticated stabilizer formalism. Furthermore, because the PSF (and generalizations)  have  seminal applications in fault tolerance \cite{Gottesman98Fault_Tolerant_QC_HigherDimensions,BravyiKitaev05MagicStateDistillation}, measurement based quantum computation \cite{raussen_briegel_onewayQC}, and condensed matter theory \cite{kitaev_anyons}, we expect a new stabilizer formalism to  find new uses outside of  quantum algorithm analysis.

\subsubsection*{Main results}

While it would be natural to generalize the abelian group stabilizer formalisms into a nonabelian group stabilizer formalism, we find that the proper way to understand the quantum algorithm for the HNSP is not to generalize the ``abelian'' property but rather the ``group'' property. In particular, we will work with \emph{abelian hypergroups}. These are generalizations of groups and can be thought of as collections of particles (and anti-particles) with a ``collision'' operation that creates new particles. A group is a special case of a hypergroup where each collision produces exactly one resulting particle. 

Our first result is a formal connection between the HNSP and abelian hypergroups which will be helpful to understand why quantum computers can solve this problem:
\begin{enumerate}
\item[I.] \textbf{Connecting the HNSP to the abelian HSHP.} We demonstrate (\textbf{section \ref{sect:HNSP}})	that, in many natural cases, the HNSP can be reduced to a problem on abelian hypergroups, called the hidden \emph{subhypergroup} problem (HSHP) \cite{Amini_hiddensub-hypergroup,Amini2011fourier}. This occurs because all of the information about the normal subgroups of a nonabelian group is captured in its hypergroup of conjugacy classes. Even in a nonabelian group, there is a multiplication operation on conjugacy classes that remains abelian. Our results show that, in many natural cases, finding hidden normal subgroups remains a problem about an abelian algebraic structure even when the group is nonabelian.
\end{enumerate}
Our next results show that the tools that proved successful for understanding quantum algorithms for abelian group HSPs (as well as many other problems) can be generalized to the setting of abelian hypergroups:
\begin{enumerate}

\item[II.] \textbf{A hypergroup stabilizer formalism.} We extend the PSF \cite{Gottesman_PhD_Thesis,Gottesman99_HeisenbergRepresentation_of_Q_Computers,Gottesman98Fault_Tolerant_QC_HigherDimensions}, a powerful tool for describing quantum many-body states  as eigenstates of commuting \emph{groups} of Pauli operators, and the GSF, the abelian group extension of \cite{VDNest_12_QFTs,BermejoVega_12_GKTheorem}, to a stabilizer formalism using commuting \emph{hypergroups} of generalized Pauli operators. The latter are no longer unitary nor monomial but still exhibit rich Pauli-like features that let us manipulate them with (new) hypergroup techniques and are normalized by associated Clifford-like gates. We also provide a normal form for  hypergroup stabilizer states (\textbf{theorem \ref{thm:Normal Form CSS States}}) that are CSS-like \cite{CalderbankShor_good_QEC_exist,Steane1996_Multiple_Particle_Interference_QuantumErrorCorrection,Calderbank97_QEC_Orthogonal_Geometry} in our setting.

\item[III.] \textbf{A hypergroup Gottesman-Knill theorem.} We introduce models of  \emph{normalizer circuits over abelian hypergroups}, which contain hypergroup quantum Fourier transforms (QFTs) and other entangling gates.  Our models extend the known families of Clifford circuits \cite{Gottesman_PhD_Thesis,Gottesman99_HeisenbergRepresentation_of_Q_Computers,Gottesman98Fault_Tolerant_QC_HigherDimensions} and (finite) abelian group normalizer circuits  \cite{VDNest_12_QFTs,BermejoVega_12_GKTheorem,BermejoLinVdN13_BlackBox_Normalizers}. We show (\textbf{theorem \ref{thm:Evolution of Stabilizer States}}) that the dynamical evolution of such circuits can be tracked in our hypergroup stabilizer picture and,  furthermore, that for large hypergroup families (including  products $\mathcal{T}^m$ of  constant size hypergroups), many hypergroup  normalizer circuits can be efficiently simulated classically (\textbf{theorem \ref{thm:simulation}}).\footnote{Here, we rely  on computability assumptions (section \ref{sect:Assumptions on Hypergroups})  that are always fulfilled in \cite{VDNest_12_QFTs,BermejoVega_12_GKTheorem}.}

\end{enumerate}
	We complete our analysis of the HNSP, which we reduced to the abelian HSHP (result I.),  showing that our normalizer circuit model encompasses an earlier HSHP quantum algorithm based on a variant of Shor-Kitaev's quantum phase estimation, which was proposed by  not fully analyzed by  Amini-Kalantar-Roozbehani in \cite{Amini_hiddensub-hypergroup,Amini2011fourier}. Using our stabilizer formalism,  we prove the latter to be  inefficient on easy instances, and, thereby, point out the  abelian HSHP as the \emph{first} known commutative hidden substructure problem in quantum computing that \emph{cannot} be solved by standard phase estimation. In spite of this  no-go result, we also show, in our last main contribution, that in the interesting cases from the nonabelian HSP perspective, the abelian HSHP can actually be solved with a novel \emph{adaptive/recursive} quantum Fourier sampling approach:
\begin{enumerate}
\item[IV.] \textbf{New quantum algorithms.} We present the first provably efficient quantum algorithm for finding hidden subhypergroups of \emph{nilpotent}\footnote{These are conjugacy class hypergroups associated to \emph{nilpotent groups} \cite{Humphrey96_Course_GroupTheory}. The latter form a \emph{large} group class that includes abelian groups, Pauli/Heisenberg groups over $\Z_{p^r}$ with prime $p$, dihedral groups $D_{2N}$ with $N=2^n$, groups of prime-power order and their direct products.} abelian hypergroups, provided we have efficient circuits for the required QFTs. This algorithm also leads, via the connection above (result I.), to a new efficient quantum algorithm for the HNSP over nilpotent groups that directly exploits the abelian hypergroup structure and is fundamentally different from the algorithm of Hallgren et al. \cite{Hallgren00NormalSubgroups:HSP}.
\end{enumerate}
Our correctness proofs for these last quantum algorithms  can further be extended to  crucial non-nilpotent groups\footnote{We give another algorithm that works for all groups  under some additional mild assumptions.} (and their associated class hypergroups) such as  the dihedral and symmetric groups.\footnote{For dihedral groups/hypergroups we give a quantum algorithm; for symmetric ones, a \emph{classical} one already does the job  because symmetric groups/hypergroups have few normal subgroups/subhypergroups. } In contrast, no efficient quantum algorithm for the nilpotent, dihedral and symmetric HSPs is known. This provides strong evidence that abelian HSHP  is a much \emph{easier} problem for quantum computers than  nonabelian HSP, and, because of its Shor-like  connection with a stabilizer formalism, perhaps even a more \emph{natural} one.

\subsubsection*{Applications} Though  lesser known than nonabelian groups, abelian hypergroups  have a wide range of applications in  convex   optimization (cf.\ association schemes \cite{KlerkSDP_AssociationSchemes,anjos2011handbook}), classical cryptography, coding theory \cite{corsini2003applications},  particle physics \cite{Dehhan_Hypergrou_Particles},  conformal field theory   \cite{Wildberger1994HypergroupsApplications}. In  topological quantum computation \cite{kitaev_anyons}, fusion-rule hypergroups  \cite{Kitaev2006_Anyons_Exactly_Solved_Model}  are indispensable  in the study of nonabelian anyons  \cite{Kitaev2006_Anyons_Exactly_Solved_Model}. Our stabilizer formalism over the latter hypergroups likely has applications for quantum error correction and for the simulation of protected gates over topological quantum field theories \cite{Beverland14_ProtectedGates_Topological}. 

The stabilizer formalism and classical simulation techniques presented in this paper are unique in that they are the first and only available methods  to manipulate stabilizer operators that neither \emph{unitary}, nor \emph{ monomia}l, nor \emph{sparse} that we are aware of \cite{nest_MMS}. Furthermore, our stabilizer formalism yields the first known families of qubit/qudit stabilizer operators for any arbitrary finite dimension $d$ that are not the standard Weyl-Heisenberg operators \cite{Gottesman98Fault_Tolerant_QC_HigherDimensions}, with associated normalizer gates that are \emph{not} the standard qudit Clifford gates. Additionally, our methods  allow great flexibility to construct new codes because the stabilizer families can be chosen over any hypergroup of interest.

\subsubsection*{Relationship to prior work}

Normalizer circuits associated to   \emph{abelian groups} have been extensively studied in earlier works of  Van den Nest, Bermejo-Vega and Lin \cite{VDNest_12_QFTs,BermejoVega_12_GKTheorem,BermejoLinVdN13_Infinite_Normalizers,BermejoLinVdN13_BlackBox_Normalizers}: in \cite{VDNest_12_QFTs,BermejoVega_12_GKTheorem}, the groups were finite and given in an explicitly \emph{decomposed} form $\Integers_{N_1}\times \cdots \times \Integers_{N_m}$;  \cite{BermejoLinVdN13_Infinite_Normalizers} further considered infinite group factors\footnote{Though we have not considered infinite hypergroups, many of our results should extend to locally compact abelian hypergroups (see \cite{Amini2011fourier} and the discussion in \cite{BermejoLinVdN13_Infinite_Normalizers} about locally compact abelian groups).} (integers, hypertori) and infinite dimensional quantum gates;  \cite{BermejoLinVdN13_BlackBox_Normalizers} added matrix group factors (like $\Integers_N^\times$) and black box groups \cite{BermejoLinVdN13_BlackBox_Normalizers}. The circuits in \cite{VDNest_12_QFTs,BermejoVega_12_GKTheorem,BermejoLinVdN13_Infinite_Normalizers} were proven to be efficiently  simulable by classical computers using abelian group stabilizer formalisms. Those in \cite{BermejoLinVdN13_BlackBox_Normalizers} were shown to be powerful enough to implement Shor's and  abelian HSP quantum algorithms. Moreover, simulating them was shown to be at least as hard as factoring and \emph{exactly} as hard as decomposing finite abelian groups.

Our efficient classical simulation result (\textbf{theorem \ref{thm:simulation}}) is not a full generalization of the Gottesman-Knill theorem \cite{Gottesman_PhD_Thesis,Gottesman99_HeisenbergRepresentation_of_Q_Computers}, like the one developed in \cite{VDNest_12_QFTs,BermejoVega_12_GKTheorem}, but of its CSS-preserving variant \cite{Delfosee14_Wigner_function_Rebits} without intermediate measurements. In our work, we dedicate most effort to cope with the highly nontrivial difficulty that our Pauli operators are \emph{non-monomial} and \emph{non-unitary}, which renders \emph{all} existing stabilizer formalism techniques \cite{Gottesman_PhD_Thesis,Gottesman99_HeisenbergRepresentation_of_Q_Computers,Gottesman98Fault_Tolerant_QC_HigherDimensions,VDNest_12_QFTs,BermejoVega_12_GKTheorem,BermejoLinVdN13_Infinite_Normalizers,BermejoLinVdN13_BlackBox_Normalizers, AaronsonGottesman04_Improved_Simul_stabilizer,dehaene_demoor_coefficients,dehaene_demoor_hostens,VdNest10_Classical_Simulation_GKT_SlightlyBeyond,deBeaudrap12_linearised_stabiliser_formalism,nest_MMS,NiBuerschaperVdNest14_XS_Stabilizers} inapplicable. To tackle this issue, we develop new simulation techniques based on hypergroup methods, up to a fairly mature state, though further improvement remains possible (see\ \textbf{section \ref{sect:Simulation}} for a discussion and a related conjecture). 

The Group Stabilizer Formalism of \cite{VDNest_12_QFTs,BermejoVega_12_GKTheorem,BermejoLinVdN13_Infinite_Normalizers} was applied in \cite{BermejoLinVdN13_BlackBox_Normalizers} to show that the problem of decomposing finite abelian black-box groups, which can be solved by a quantum algorithm due to Cheung and Mosca \cite{mosca_phd,cheung_mosca_01_decomp_abelian_groups},  is  complete   for the complexity class associated to black-box group normalizer circuits; thereby, proving no-go theorem for finding new quantum algorithm better than Cheung-Mosca's. The authors of \cite{BermejoLinVdN13_BlackBox_Normalizers} raised the open question of whether normalizer circuits over \emph{different algebraic structures} could be found and be used to bypass their no-go theorem. Our quantum algorithms for abelian HSHPs answer their question in the affirmative: the circuits we use to solve that problem are instances of normalizer circuits over nonabelian groups / hypergroups (\textbf{section \ref{sect:Quantum Algorithms}}).

The  hidden subhypergroup problem (HSHP) we discuss was first considered by Amini, Kalantar and Roozbehani in \cite{Amini_hiddensub-hypergroup,Amini2011fourier}, yet (to the best of our knowledge) no  provably efficient quantum algorithm for this problem has been given before. We show that an earlier quantum algorithm proposed in \cite{Amini2011fourier} for solving the problem using a variant of Shor-Kitaev's quantum phase estimation \cite{kitaev_phase_estimation} is inefficient on easy instances (section \ref{sect:Quantum Algorithms}). Interestingly, this means that abelian HSHP is the first known \emph{commutative}  hidden substructure problem that cannot be solved by standard phase estimation. Instead, our quantum algorithm is based on  a novel \emph{adaptive/recursive Fourier sampling} quantum approach.

In this work,  classical simulation techniques played a role in the development of the quantum algorithms we present. In this way, our results relate to other projects where classical simulations methods helped to find new quantum algorithms \cite{VdNest_Q_alg_spin_models_simulable_gate_sets,Ni13Commuting_Circuits} or  complexity theoretic hardness results \cite{Aaronson11_Computational_Complexity_Linear_Optics,BremnerJozsaShepherd08,BermejoLinVdN13_BlackBox_Normalizers,MorimaeFujiiFitzsimons14_Hardness_Simulating_DQC1}.

For any non-abelian group $G$, the simulation results we present lead  to efficient  classical algorithms for simulating quantum Fourier transforms over $G$ (specifically,  as employed in weak Fourier sampling routines) acting on coset states $\ket{aH}$, $a\in G$, $H\subset G$ such that $aH$ is invariant under conjugation\footnote{This happens, e.g., if $aH=N$ for normal $N$ or if the subgroup $H$ contains the derived subgroup $[G,G]$.}. In this sense, our work connects with \cite{bermejo2011classical}, where efficient classical algorithms  were given for simulating weak and strong quantum Fourier sampling on arbitrary coset states of  semi-direct products group $\Z_p\ltimes A$, where $p$ is  prime  and $A$ is an  abelian group given in a canonical  form $\Z_{N_1}\times \cdots \times \Z_{N_m}$.

Finally, we mention that the results in \cite{VDNest_12_QFTs,BermejoVega_12_GKTheorem,BermejoLinVdN13_Infinite_Normalizers,BermejoLinVdN13_BlackBox_Normalizers} and the results in the present work are not contradictory.  Briefly, the efficient simulations in \cite{VDNest_12_QFTs,BermejoVega_12_GKTheorem,BermejoLinVdN13_Infinite_Normalizers} are possible because more knowledge about the normalizer circuits  structure is given (namely, cyclic factor decompositions of the associated finite groups), but in  \cite{BermejoLinVdN13_BlackBox_Normalizers} this information is missing and normalizer circuits become useful to identify  hidden structures  (cf. reference for extended discussion). Similarly, in our work,  some hypergroups with stronger computability properties lead to efficiently classically simulable circuits, while other lead to valuable quantum algorithms (see section \ref{sect:Assumptions on Hypergroups} and appendix \ref{app:Discret Log}).

\subsubsection*{Structure of the paper}

We give a non-technical introduction to the theory of hypergroups in \textbf{section \ref{sect:Hypergroups}}. We then re-introduce the hidden normal subgroup problem (HNSP) and prove its connection to the hidden subhypergroup problem (HSHP) in \textbf{section \ref{sect:HNSP}}. We present our models of hypergroup normalizer circuits, our hypergroup stabilizer formalism, and our simulation results in \textbf{sections \ref{sect: Circuit Model}-\ref{sect:Simulation}} and describe these on some examples. Finally, we use these tools to develop new quantum algorithms for abelian HSHP and HNSP in \textbf{section \ref{sect:Quantum Algorithms}}.

While our motivation for developing our hypergroup stabilizer formalism was to understand more about the HNSP, we note that the results of sections \ref{sect: Circuit Model}--\ref{sect:Simulation} are more general, as they apply to arbitrary hypergroups. We expect that these tools will have applications outside of the analysis of quantum algorithms such as to the development of new error correcting codes.

\section{Abelian hypergroups and hypergroup duality}\label{sect:Hypergroups}

This section is an introduction for quantum computer scientists to the beautiful theory of  \emph{finite abelian hypergroups}\footnote{The hypergroups we consider are frequently called ``finite commutative hypergrops'' in mathematics. We call them ``abelian'' because of the focus of this work on abelian and nonabelian HSPs. In some of our references \cite{Roth75_Character_Conjugacy_Hypergroups,McMullen79_Algebraic_Theory_Hypergroups,McMullen_Duality_abelian_Hypergroups}, the hypergroups in this work are called ``\emph{reversible abelian hypergroups}''.}, whose origin dates back to works by Dunkl \cite{Dunkl1973}, Jewett \cite{Jewett19751}, Spector \cite{Spector1978} in the  70s. Our account is based on \cite{Roth75_Character_Conjugacy_Hypergroups,McMullen79_Algebraic_Theory_Hypergroups,McMullen_Duality_abelian_Hypergroups,Wildberger97_Duality_Hypergroups,Wildberger_Lagrange,Wildberger1994HypergroupsApplications,Wildberger2001algebraic,BloomHeyer95_Harmonic_analysis} and borrows most notation and terminology from \cite{Wildberger1994HypergroupsApplications,Wildberger97_Duality_Hypergroups,Wildberger_Lagrange,Wildberger2001algebraic}. Throughout the paper, hypergroups and groups are assumed to be \emph{finite} unless said otherwise.

In brief, abelian hypergroups are algebraic structures that generalize  abelian  groups, although   in a  \emph{different} way than nonabelian groups. Despite being less known than the latter, abelian hypegroups   have a wide number of applications in multiple fields, including  combinatorics, convex optimization \cite{KlerkSDP_AssociationSchemes,anjos2011handbook}; cryptography, classical error correction \cite{corsini2003applications}; classical information theory \cite{Wildberger1994HypergroupsApplications};  particle physics \cite{Dehhan_Hypergrou_Particles} and conformal field theory  \cite{Wildberger1994HypergroupsApplications}, to begin with. In  topological quantum computation \cite{kitaev_anyons}, certain hypergroups known by the names of ``fusion theories or categories'' \cite{Wildberger1994HypergroupsApplications,Kitaev2006_Anyons_Exactly_Solved_Model}  are invaluable in the study of topological order and nonabelian anyons  \cite{Kitaev2006_Anyons_Exactly_Solved_Model}. 

On top of their versatility, abelian hypergroups also admit a simple and intuitive \emph{physical} definition, which we give now before going into the full mathematical details of their theory. In simple terms, a \emph{finite abelian hypergroup} $\mathcal{T}$ is a set of particle types $\{x_0, x_1,
\dots, x_n\}$ that can collide. When $x_i$ collides with $x_j$ a particle $x_k$ is created with probability $n_{ij}^k$. Furthermore, a non reactive \emph{vacuum} particle $x_0$ will be created with non-zero probability by such process iff $x_i$  is the antiparticle of $x_k$ (which always exists).

\subsection{Definition}\label{sect:DefinitionsHypergroup}

\newcommand{\B}{\mathcal{B}}

We now turn the  intuitive definition of hypergroup above into a precise mathematical one.

A finite abelian hypergroup $\mathcal{T} = \{x_0, x_1,
\dots, x_n\}$ is a basis of a commutative complex $C^*$ algebra $\mathcal{A}(\mathcal{T})=\C \mathcal{T}$, called the \emph{hypergroup algebra} of $\mathcal{T}$, with a particular structure. $\mathcal{A}(\mathcal{T})$ is endowed with an associative commutative \emph{hyper}operation 
\begin{equation}
\label{eq:Hyperoperation}
x_i x_j = \sum_{k=0}^n n_{ij}^k x_k \ \ \ \forall x_i, x_j \in \mathcal{T},
\end{equation}
which returns a superposition of outcomes in $\mathcal{T}$ (we write ``$x_k\in x_ix_j$'' when $x_k$ is a possible outcome of $x_ix_j$, with $n_{ij}^k\neq 0$); a multiplicative identity $x_0=1$; and an involution $x_i \rightarrow \overline{x}_i$. Note that commutativity and the presence of the involution imply that  $n_{\overline{a},\overline{b}}^{\overline{c}}=n_{ba}^c=n_{ab}^c $ holds for any $a,b,c\in\mathcal{T}$.

Furthermore, the ``structure constants'' $n_{ij}^k \ge 0$ are \emph{real} numbers with three properties:
\begin{itemize}
\item[(i)] \textbf{Anti-element property.}  For every $x_i$ and any $x_j$, the identity $x_0=1$ can be an outcome of $x_ix_j$ if and only if $x_j = \overline{x}_i$. We call  $\overline{x}_i$  the \emph{anti-element} of $x_i$.
\item[(ii)] \textbf{Normalization property.} For all values of  $k=0,\ldots, n$ we have $\sum_{k=0}^n
  n_{i,j}^k = 1$;  in other words, $n_{ij}^k$ is a probability distribution (of outcomes) over $k$.

\item[(iii)]\textbf{Reversibility.}\footnote{This last property (iii) and (\ref{eq:Reversibility Property}) can both be derived from the previous axioms \cite{McMullen79_Algebraic_Theory_Hypergroups}.} For every $x,y,z\in\mathcal{T}$, it holds that $z\in xy$ if and only if $y\in\overline{x}z$. Moreover, if the \emph{weight of $x$} is defined as $\w{x}:=1/n_{x\overline{x}}^0$, the following identity holds:
\begin{equation}\label{eq:Reversibility Property}
\frac{n_{xy}^z}{\w{z}}=\frac{n_{\overline{x}z}^y}{\w{y}}=\frac{n_{z\overline{y}}^x}{\w{x}}
\end{equation}
\end{itemize} 

As a simple example,  any finite abelian group $G$ is an abelian hypergroup. The elements of $G$ define the basis of the group algebra $\Complex G$ and the
involution is $\overline{x} := x^{-1}$. In the case of a group, though, for any
$i, j \in \Integer_{n+1}$, there is only a single nonzero $n_{i,j}^k$ since
$x_i x_j = x_k$ for some $k$; though hypergroups have a more complicated
multiplication than groups,  they preserve the property that the product of
$x$ and  $\overline{x}$ includes the identity.

\myparagraph{Hypergroups in this work.} Though nonabelian hypergroups exist\footnote{In fact, every nonabelian group $G$ is also a kind of nonabelian hypergroup.}, this paper focuses on  \emph{abelian} ones because they  fulfill certain useful dualities (see below). In sections \ref{sect:HNSP} and \ref{sect:Quantum Algorithms},  we further focus on specific abelian hypergroups that arise from \emph{finite groups} (section \ref{sect:Examples Hypergroups}). 

\subsection{Glossary}\label{sect:Glossary}
\label{sect:Orthogonality and QFT}\label{sect:Subhypergroups, Quotients}

We now give a glossary of hypergroup theoretic concepts for future reference. In all definitions below $\mathcal{T}$ is fixed to be an arbitrary \emph{abelian} finite hypergroup. 

\myparagraph{Weight functions.}Every subset $X\subset\mathcal{T}$ has a \emph{weight} $\varpi_X:=\sum_{x\in X}w_x$, with $w_x$ as in (\ref{eq:Reversibility Property}).

\myparagraph{Subhypergroup.}{A \emph{subhypergroup $\mathcal{N}$}}   is a subset of $\mathcal{T}$ that is also a hypergroup with the same identity, involution, structure constants and weights.

\myparagraph{Quotient hypergroup.}For any  subhypergroup $\mathcal{N}$ the {\emph{quotient hypergroup}} $\mathcal{T/N}$  is an abelian hypergroup whose elements are the cosets $a\mathcal{N}:=\{x\in\mathcal{T}:x\in a b \text{ for some }b\in \mathcal{N}
\}$. Its  hyperoperation is  defined \cite{McMullen79_Algebraic_Theory_Hypergroups,Wildberger_Lagrange} by, first, identifying each $a\mathcal{N}$ with an element of the  $\mathcal{A}(\mathcal{T})$ algebra\footnote{See \cite{Roth75_Character_Conjugacy_Hypergroups} for a set theoretic definition} via $a\mathcal{N} :=\sum_{x\in a\mathcal{N}}\w{x} x /\varpi_{a\mathcal{N}}$. Then, $\mathcal{T/N}$ inherits a hyperoperation with  structure constants $r_{a\mathcal{N},b\mathcal{N}}^{c\mathcal{N}} =\sum_{d\in c\mathcal{N}}n_{ab}^{d}$ and weights $\ws{a\mathcal{N}}=1/(\sum_{b\in\mathcal{N}}n_{x,y}^{b})=\varw{a\mathcal{N}}/\varw{\mathcal{N}}$.

\myparagraph{Morphisms.} A map between two hypergroups $f:\mathcal{T}\rightarrow\mathcal{T'}$ is a \emph{homomorphism} if $f(ab)=f(a)f(b)=\sum_{c} n_{ab}^c f(c)$ and $f(\overline{a})=\overline{f(a)}$. An invertible homomorphism is an \emph{isomorphism}. An isomorphism from $\mathcal{T}$ to $\mathcal{T}$ is an \emph{automorphism}. As with groups, isomorphic hypergroups have identical hypergroup-theoretic properties (weights, subhypergroups, etc.).

\myparagraph{Character hypergroup $\mathcal{T^*}$.} A complex function $\mathcal{X}_\mu : \mathcal{T} \rightarrow \Complex$ is a \emph{character of $\mathcal{T}$} if it is not identically zero and satisfies the identity\footnote{If characters are linearly extended to act on the hypergroup algebra $\mathcal{A(T)}$, condition (\ref{eq:Character DEFINITION}) becomes  $\mathcal{X}_\mu(ab) =
\mathcal{X}_\mu(a) \mathcal{X}_\mu(b)$, $\forall a,b \in \mathcal{A(T)}$; in other words, the characters of $\mathcal{T}$ are also the characters of  $\mathcal{A(T)}$.}
\begin{equation}\label{eq:Character DEFINITION}
\mathcal{X}_\mu(ab)=\mathcal{X}_\mu(a)\mathcal{X}_\mu(b)= \sum_{c}n_{ab}^c \mathcal
X(c)\quad\textnormal{and}\quad \mathcal{X}_\mu(\overline{a})=\overline{\mathcal{X}_\mu}(a)\quad\textnormal{for all $a,b\in\mathcal{T}$}.
\end{equation}
For any abelian hypergroup $\mathcal{T}$, its set  $\mathcal{T^*}$ of character functions defines an  abelian \emph{signed} hypergroup with  the point-wise functional product as hyperoperation, the trivial character $\Hchi_1(a)=1$ as identity and  the complex conjugate map $\mathcal{X}_\mu\rightarrow \overline{\mathcal{X}_\mu}$ as involution: here, ``signed''  means that $\mathcal{T^*}$ fulfills (i-ii-iii) but may have some negative structure constants $m_{\mu\nu}^\gamma$, which represent negative probabilities. If all constants $m_{\mu\nu}^\gamma$ are non-negative, $\mathcal{T}^*$ is a hypergroup called the \emph{character hypergroup of $\mathcal{T}$}, and $\mathcal{T}$ is said to be \emph{strong} \cite{BloomHeyer95_Harmonic_analysis}. Throughout the paper, we assume all hypergroups  to be strong (without notice) so that the associated character hypergroups $\mathcal{T}^*$ define  new ``dual theories'' of particle collisions\footnote{Many of the hypergroup concepts and properties presented in this section as well as our results in sections \ref{sect: Circuit Model}, \ref{sect:Simulation} can be effortlessly extended to the setting where $\mathcal{T}$ is an {abelian signed hypergroup}, in which case $\mathcal{T}^*$ is also an abelian signed hypergroup  \cite{McMullen79_Algebraic_Theory_Hypergroups,McMullen_Duality_abelian_Hypergroups,BloomHeyer95_Harmonic_analysis,Ichihara_thesis_Hypergroup_Extensions,Yamanaka2013_Thesis}). Though it seems plausible, we have not investigated whether our results  in section \ref{sect:Stabilizer Formailsm} can be extended to signed hypergroups. In the remaining sections, we focus on class and character hypergroups that arise from finite groups, which are always strong.}

\myparagraph{Weight-order duality.} The hypergroups $\mathcal{T}$ and $\mathcal{T^*}$ have the same cardinalities and weights:
\begin{equation}\label{eq:Weight Duality}
\varw{\mathcal{T}}=\sum_{a\in\mathcal{T}}\w{a} = \sum_{\mathcal{X}_\mu\in\mathcal{T}^*}\w{\mathcal{X}_\mu}=\varw{\mathcal{T}^*}.
\end{equation}

\myparagraph{Abelian hypergroup duality.}\label{sect:Duality} The  hypergroup  $\mathcal{T}^{**}$ of characters of $\mathcal{T}^*$ is \emph{isomorphic} to the original hypergroup $\mathcal{T}$. This isomorphism is constructed canonically by sending $a \in \mathcal{T}$ to a character 
\begin{equation}\label{eq:Hypergroup Duality}
\widetilde{\mathcal{X}_a}(\mathcal{X}_\mu)=\overline{\Hchi_{\mu}}(a).
\end{equation} In particular, this shows that  the hypergroups $\mathcal{T}$, $\mathcal{T}^*$  have the same number of elements.

\begin{remark}[\textbf{Notation}] Throughout the text,  we identify dual characters $\widetilde{\mathcal{X}}_{a}\in\mathcal{T}^{**}$ with elements $a\in\mathcal{T}$ via the isomorphism (\ref{eq:Hypergroup Duality}). We write the hyperoperation of $\mathcal{T}^*$ compactly as $\Hchi_\mu \Hchi_\nu = \sum_{\gamma} m_{\mu\nu}^{\gamma} \Hchi_\gamma$ and, occasionally, use the expression $\Hchi_{\overline{\mu}}$ as a shorthand for $\overline{\Hchi_\mu}$.
\end{remark}
The notions of character and duality lead to a family of related concepts that are extremely valuable in hypergroup theory and in the present work:

\myparagraph{Annihilators.} The \emph{annihilator $\mathcal{N^\perp}$} of a subhypergroup $\mathcal{N\subset T}$  is a  subhypergroup  of $\mathcal{T^*}$
\begin{equation}\label{eq:annhilator}
\mathcal{N}^\perp := \{\mathcal{X}_\mu \in \mathcal{T}^*: \mathcal{X}_\mu(a)=1 \textnormal{ for all } a\in \mathcal{N}\}.
\end{equation} 
\myparagraph{Subhypergroup duality} 
A stronger form of hypergroup duality relates the notions of annihilator, subhypergroup and quotient: the annihilator $\mathcal{N}^\perp$ is \emph{isomorphic} to the characters $(\mathcal{T/N})^*$ of  $\mathcal{T/N}$; moreover, the character hypergroup $\mathcal{N^*}$ is isomorphic to $\mathcal{T^*/\mathcal{N}^\perp}$. 

\myparagraph{Character orthogonality.} Character functions are orthogonal with the inner product
\begin{equation}\label{eq:Character Orthogonality}
\langle \Hchi_{\mu},\Hchi_\nu\rangle =
\sum_{a\in\mathcal{T}} \frac{\ws{\mathcal{X}_\nu}\w{a}}{\varpi_{\mathcal{T}}} \overline{\Hchi_{\mu}}(a)\Hchi_\nu(a)=\delta_{\mu,\nu}.
\end{equation}
Moreover, due to hypergroup and subhypergroup duality, for any subhypergroup $\mathcal{N}\subset\mathcal{T}$, any two cosets $a\mathcal{N},b\mathcal{N}\in\mathcal{T/N}$ and any $\mathcal{X_\mu,X_\nu\in \mathcal{N}^\perp}$, the following generalized orthogonality relationships are always fulfilled
\begin{equation}\label{eq:Character Orthogonality Subhypergroups}
\sum_{a\in \mathcal{N}} \frac{\ws{\Hchi_\nu\mathcal{N}^\perp}\w{a} }{\varpi_\mathcal{N}} \overline{\Hchi_{\mu}}(a) \mathcal{X}_\nu(a) =\delta_{\mathcal{X}_\mu \mathcal{N}^\perp ,\mathcal{X}_\nu \mathcal{N}^\perp} ,\quad   \sum_{\mathcal{X}_\mu \in \mathcal{N}^\perp} \frac{\ws{b\mathcal{N}} \w{\mathcal{X}_\mu}}{{\varpi_{\mathcal{N}^\perp}}} \overline{\Hchi_{\mu}}(a) \mathcal{X}_\mu(b) ={\delta_{a\mathcal{N},b\mathcal{N} }},
\end{equation}

\subsection{Examples from group theory}
\label{sect:Examples Hypergroups}

We now introduce two  examples of hypergroups that play a central role in our work (namely, in sections \ref{sect:HNSP} and \ref{sect:Quantum Algorithms}).  For an arbitrary finite  group, these are the hypergroups of conjugacy classes and of characters, which are dual to each other in the sense of (\ref{eq:Hypergroup Duality}). The existence of these hypergroups linked to arbitrary groups lets us apply our hypergroup normalizer circuit and stabilizer formalisms (sections \ref{sect: Circuit Model}--\ref{sect:Stabilizer Formailsm}) to nonabelian groups.

\subsubsection{The hypergroup of  conjugacy classes of $G$}\label{sect:ConjClassHypergroup}

Let $G$ be any finite group. For any $g \in G$, we let $C_g := \set{g^a}{a
\in G}$, where $g^a := a^{-1}ga$ denotes the conjugacy class of $g$. We let
$\Conj{G}$ be the set of distinct conjugacy classes of $G$.

Let $C = \{g_1, g_2, \dots\}$ and $D = \{h_1, h_2, \dots\}$ be two conjugacy
classes.  Then, for any product, $g_i h_j$, its conjugate $(g_i h_j)^a = g_i^a
h_j^a$ is a product of conjugates, so it can be written as $g_k h_\ell$ for
some $k$ and $\ell$. Furthermore, if there are $M$ distinct products $g_{i_1}
h_{j_1}, \dots, g_{i_M} h_{j_M}$ producing some element $x$, then the distinct
products $g_{i_1}^a h_{j_1}^a, \dots, g_{i_M}^a h_{j_M}^a$ all produce $x^a$.
Thus, for each conjugacy class $E$ arising in the product of elements of $C$
and $D$, we get a well defined number of ``how many times'' that class arises,
which we denote $M_{C,D}^E$.

We will denote by $\A(\Conj{G})$ the complex vector space with the distinct
conjugacy classes as a basis, which make into a $\Complex$-algebra by defining
the product $C D := \sum_{E \in \Conj{G}} M_{C,D}^E E$.

We take the map $C_g \mapsto C_{g^{-1}}$, extended to all of $\A(\Conj{G})$ by
linearity, as our involution.

It is easy to see that $C_e$ arises in a product $C_g C_h$ iff $C_h$ contains
$g^{-1}$, which occurs iff $C_h = C_{g^{-1}}$. Thus, we can see that the first
of the two required properties holds for the product with structure constants
$M_{C,D}^E$.

To get the normalization property to hold, though, we must make a minor change.
For each $C_g \in \Conj{G}$, define $c_g$ to be the vector
$(1/\abs{C_g})\,C_g$. Then we will take $\set{c_g}{C_g \in \Conj{G}}$
to be a new basis. The structure constants become $m_{C,D}^E :=
M_{C,D}^E \abs{E} / \abs{C} \abs{D}$. Since the total number of products of
elements formed multiplying $C$ by $D$ is $\abs{C} \abs{D}$, we can see that
$\sum_{E \in \Conj{G}} M_{C,D}^E \abs{E} = \abs{C} \abs{D}$, which means that
these new structure constants, $m_{C,D}^E$, are properly normalized. Thus,  conjugacy classes define a hypergroup, up to this normalization, which we call the \textbf{class hypergroup} $\overline{G}$. 

Finally, we note that this hypergroup is abelian, even if the underlying group
is not abelian. To see this, we calculate $gh = h h^{-1} g h = h g^h = (hg)^h$
(since $h^h = h$), which shows that $gh$ and $hg$ are in the same conjugacy
class. Hence, if we are multiplying conjugacy classes instead of elements, we
do not distinguish between $gh$ and $hg$, and we get an abelian structure.

\subsubsection{The hypergroup of characters}
\label{sect:CharacterHypergroup}

Let $\Repr{G}$ denote the set of irreducible characters of the finite group  of $G$, $\A(\Repr{G})$ the complex vector space with basis $\widehat{G}$, and $\chi_\mu$ the character of the irreducible representation $\mu$. As we explain next, $\Repr{G}$ has a natural hypergroup structure.

First, the  involution of $\widehat{G}$ will be the linear extension of the map  $\chi_\mu \mapsto
\overline{\chi}_\mu$, for $\chi_\mu \in \Repr{G}$, the image also being an
irreducible character.

Second, for any two characters, $\chi_\mu$ and $\chi_\sigma$, the pointwise product of
these functions $\chi_\mu \chi_\sigma$ is also a character, though it is not
necessarily irreducible. However, as is well known, any representation
can be written as a linear combination of irreducible characters: $\chi_\mu
\chi_\sigma = \sum_{\tau \in \Repr{G}} N_{\mu,\sigma}^\tau \chi_\tau$ for some
non-negative integers $N_{\mu,\sigma}^\tau$. Using this as our product,
$\A(\Repr{G})$ becomes a $C^*$-algebra, where the identity element is the
trivial representation, $\chi_1$, given by ${\chi_1}(g) \equiv 1$.

Third, the coefficient $N_{\mu,\sigma}^\tau$, as is also well known from
representation theory, is given by $\langle{\chi_\mu \chi_\sigma},{\chi_\tau}\rangle$,
where $\langle{\cdot},{\cdot}\rangle$ is the inner product 
 $\langle{\chi_\mu},{\chi_\sigma}\rangle = \abs{G}^{-1} \sum_{g \in G}
\chi_\mu(g) \overline{\chi_\sigma(g)}$.  From this, we can see that
$N_{\mu,\sigma}^{1} = \langle{\chi_\mu \chi_\sigma},{{\chi_1}}\rangle =
\langle{\chi_\mu},{\overline{\chi_\sigma}\,{\chi_1}}\rangle =
\langle{\chi_\mu},{\overline{\chi_\sigma}}\rangle$.  Since $\chi_\mu$ and
$\overline{\chi_\sigma}$ are both irreducible, this is 1 if $\chi_\sigma =
\overline{\chi_\mu}$ and 0 otherwise.  Hence, we can see that the structure
constants $N_{\mu,\sigma}^\tau$ have the first required property.

Finally, we will normalize the characters, as in section \ref{sect:ConjClassHypergroup}, in order to have (ii). For this, we define  $\widehat{G}$, the \textbf{character hypergroup of $G$}, to be the hypergroup with elements
\begin{equation}\label{eq:Normalized Character}
\mathcal{X}_\mu:=\frac{\chi_\mu}{d_\mu}
\end{equation}
where $d_\mu$ is the dimension of the irrep $\mu$. The  structure constants now become $n_{\mu,\sigma}^\tau
:= N_{\mu,\sigma}^\tau d_\tau / d_\mu d_\sigma$. Since $\chi_\mu
\chi_\sigma$ is actually the character of the representation $\mu \otimes
\sigma$, which splits into a direct sum of irreducible
representations (as described  above), we must have $\sum_{\tau
\in \Repr{G}} N_{\mu,\sigma}^\tau d_\tau = d_\mu d_\sigma$ as the latter is the
dimension of the tensor product. This implies that (ii) is fulfilled and that  $\widehat{G}$ (now suitably normalized) is indeed a hypergroup.

Finally, we note that, in this case,  our product is manifestly
abelian since $\chi_\mu \chi_\sigma$ denotes the element-wise product of these
functions, which takes place in the abelian group $\Complex$.

\subsubsection{The relationship between $\overline{G}$ and $\widehat{G}$}

Crucially, the characters of the hypergroup $\Conj{G}$ turn out to be the  normalized characters $\mathcal{X}_\mu=\chi_\mu/d_\mu$ of $G$ and, due to duality (\ref{eq:Hypergroups}),  conjugacy classes are the characters of $\widehat{G}$. Classes and characters have weights $\ws{C_g}=|C_g|$ and $\ws{\mathcal{X}_\mu}=d_\mu^2$, respectively, where $d_\mu$ is the dimension of the irrep $\mu$. This fantastic connection between groups and hypergroups lets one easily derive many well known results in nonabelian group character theory \cite{Humphrey96_Course_GroupTheory,Isaacs1994character} using the properties of section \ref{sect:DefinitionsHypergroup}, including the usual character orthogonality relationships and the famous $|G|=\sum_{\Hchi_\mu\in \widehat{G}} d_\mu^2$ identity: the latter can be derived from (\ref{eq:Weight Duality}), which leads to $\varw{\overline{G}}=\sum_{C_g\in\overline{G}} |C_g| =|\overline{G}|= \sum_{\Hchi_\mu\in \widehat{G}} d_\mu^2$ and also implies, $\varw{\overline{G}}=|\overline{G}|$ and $\varw{\widehat{G}}= |\widehat{G}|$.

\section{Understanding the Hidden Normal Subgroup Problem}
\label{sec:hnsp}
\label{sect:HNSP}

In this section, we demonstrate a formal connection between the hidden \emph{normal} subgroup problem (HNSP) and a problem on abelian hypergroups, defined below, which we call the CC-HSHP. Specifically, we show that, in many cases, we can efficiently reduce the HNSP to the CC-HSHP, classically. This reduction tells us that, even though the HNSP is defined in terms of nonabelian groups, it can be translated into a problem about an algebraic structure that is abelian, albeit one that is more complex than a group (a hypergroup).

In the remainder of the paper, we will see the effects of moving from nonabelian groups to abelian hypergroups. While the switch from groups to hypergroups creates some new difficulties, we also gain a great deal by working with an abelian structure. In particular, we will see that the mathematical structure of abelian hypergroups leads to a beautiful stabilizer formalism and to new quantum algorithms. Here, we explain how abelian hypergroups arise specifically when looking for hidden normal subgroups  before moving to the more general setting.

In the first subsection, we formally define the two problems mentioned above, the HNSP and the CC-HSHP. Afterwards, we show how to reduce the former to the latter.

\subsection{The HSNP and the CC-HSHP}

In the HNSP, we are given an oracle $f:G\rightarrow \{0,1\}^*$, assigning
labels to group elements, that is promised to \emph{hide} some normal subgroup
$N \lhd G$. The latter means that we have $f(x) = f(x')$ for $x, x' \in G$ if
and only if $x' = xn$ for some $n \in N$. An algorithm solves the HNSP if it
can use this oracle and other quantum computation in order to determine the
subgroup $N$ with high probability.

The algorithm of Hallgren et al. for the HNSP finds the hidden subgroup $N$
using exclusively information provided by characters of the group. They showed
that this works only for normal subgroups as it cannot distinguish a non-normal
subgroup $H \le G$ from a conjugate subgroup $H^a \not= H$.

If we are only examining the characters of the group $G$, then it stands to
reason that we can get the same information from the hypergroup of characters
$\Repr{G}$ or, equivalently, from the hypergroup of conjugacy classes
$\Conj{G}$ since these two hypergroups contain the same
information.\footnote{After all, each can be recovered from the other as its
dual hypergroup.} Hence, we may expect that the HNSP on $G$ is related to some
problem on the abelian hypergroup $\Conj{G}$.

A natural question for abelian hypergroups like these is the hidden
subhypergroup problem \cite{Amini_hiddensub-hypergroup}. For our
abelian hypergroup of conjugacy classes, we will refer to this problem as the
conjugacy class hidden subhypergroup problem or CC-HSHP.  Here, we are given an
oracle $f : \Conj{G} \rightarrow \{0,1\}^*$, assigning labels to conjugacy
classes, that hides some subhypergroup, and we are
asked to determine that subhypergroup via oracle queries and quantum
computation. We will see next how this is related to the HNSP.

\subsection{Reducing the HNSP to the CC-HSHP}

Since a normal subgroup $N$ is (the union of) a set of conjugacy classes that
is closed under multiplication and taking inverses, it also defines a
subhypergroup of $\Conj{G}$, which we denote by $\subhyp{N}{G}$.\footnote{We
distinguish this from $N$, which is a set of group elements, because
$\subhyp{N}{G}$ is a set of conjugacy classes.} Hence, any subgroup that can be
found as the solution of the HNSP can also be found as the solution of the
CC-HSHP. Indeed, as we will see next, in many cases, we can directly reduce
the HNSP to the CC-HSHP.

In order to perform this reduction, we need to provide a CC-HSHP oracle. Our proofs will show how to translate an oracle for the HNSP into an oracle for the CC-HSHP. These translations assume that we can perform certain computations with conjugacy classes, described in detail in appendix~\ref{sect:class basis operations}, which we refer to as ``computing efficiently with conjugacy classes''. (While formally an assumption, we know of no group for which these calculations cannot be performed efficiently.)

\begin{theorem}[HNSP $\le$ CC-HSHP, I]
\label{thm:reduction1}
Let $G$ be a group. Suppose that we are given a hiding function $f : G \rightarrow \{0,1\}^*$ that is also a class function\footnote{This means that $f$ is constant on conjugacy classes. This will occur iff $G / N$ is abelian, where $N$ is the hidden subgroup.}. If we can compute efficiently with conjugacy classes, then we can efficiently reduce this HNSP to the CC-HSHP.
\end{theorem}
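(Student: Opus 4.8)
The plan is to manufacture a CC-HSHP oracle directly out of the given HNSP oracle. Concretely, I would define $f' : \Conj{G} \to \{0,1\}^*$ by $f'(C_g) := f(g)$ for an arbitrary representative $g$ of the class $C_g$. The first point to settle is well-definedness: since $f$ is a class function it takes a single value on $C_g$, so the choice of representative is irrelevant. To evaluate $f'$ on a class handed to us by a CC-HSHP solver, I would extract a representative group element and query $f$ once; producing such a representative is exactly one of the primitives covered by the assumption that we can compute efficiently with conjugacy classes, so the simulated oracle is efficient.

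The heart of the argument is to identify which subhypergroup $f'$ hides. My claim is that it hides $\subhyp{N}{G}$, the subhypergroup of $\Conj{G}$ consisting of the conjugacy classes contained in $N$ (a subhypergroup, as noted just before the theorem statement, because $N$ is a union of conjugacy classes closed under products and inverses). The key preliminary observation is that, since $f$ is simultaneously a class function and a hiding function for $N$, every element of a class $C_g$ lies in the single coset $gN$; hence $C_g \subseteq gN$, each coset $gN$ is a union of complete conjugacy classes, and (equivalently) $G/N$ is abelian. This is what makes the level sets of $f'$ well behaved as sets of classes.

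The main lemma I would then prove is the coset identity
\begin{equation*}
C_g\, \subhyp{N}{G} = \set{C_x}{x \in gN},
\end{equation*}
establishing both inclusions. For $\supseteq$, if $x \in gN$ write $x = gn$ with $n \in N$; then $gn$ occurs among the products defining $C_g C_n$, so $C_x \in C_g C_n$ with $C_n \in \subhyp{N}{G}$. For $\subseteq$, if $C_x \in C_g C_m$ with $m \in N$, pick $g' \in C_g$ and $m' \in C_m$ with $g'm' \in C_x$; then $g' \in C_g \subseteq gN$ and $m' \in C_m \subseteq N$, whence $g'm' \in gN$ and $x \in gN$. Combining this identity with the hiding promise ($f(x)=f(g) \iff x \in gN$) shows that $f'$ is constant exactly on the cosets of $\subhyp{N}{G}$ and separates distinct cosets, which is precisely the statement that $f'$ hides $\subhyp{N}{G}$.

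Finally, I would assemble the reduction: replace $f$ by the efficiently simulable oracle $f'$, invoke a CC-HSHP solver to obtain $\subhyp{N}{G}$, and recover the normal subgroup as $N = \bigcup_{C \in \subhyp{N}{G}} C$, again an operation on conjugacy classes covered by our computability assumption. I expect the main obstacle to be the coset identity above, together with verifying that the representative-selection, membership, and union primitives it implicitly requires really are among the ``computing efficiently with conjugacy classes'' operations formalized in the appendix; the well-definedness of $f'$ and the final recovery of $N$ are routine once the hiding property is in hand.
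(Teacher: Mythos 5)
Your proposal is correct and follows essentially the same route as the paper's proof: both turn $f$ into a class-level oracle by evaluating $f$ on a representative (using the conjugacy-class computability assumptions) and show that this oracle hides $\subhyp{N}{G}$ by identifying the hypergroup cosets $C_g\,\subhyp{N}{G}$ with the set of conjugacy classes contained in the group coset $gN$. Your explicit two-inclusion coset identity is precisely the content the paper compresses into its one-line conjugation computation $f(xn) = f((xn)^a) = f(x^a n^a) = f(x^a) = f(x^a n')$, so the difference is one of presentation detail rather than of approach.
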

\begin{proof}
The assumptions about computing efficiently with conjugacy classes imply that, given a conjugacy class $C_g$, we can efficiently find an element $x \in C_g$ and apply $f$ to get a label. (Since $f$ is a class function, the label is the same for any $x' \in C_g$.) Let $N$ be the hidden subgroup.  Since $f$ hides $N$ and $N$ is normal, we can see that $f(xn) = f((xn)^a) = f(x^a n^a) = f(x^a) = f(x^a n')$ for any $n, n' \in N$. This shows that $f$ is constant on $C_g\subhyp{N}{G}$, which corresponds to a coset of the subhypergroup $\subhyp{N}{G} \le \Conj{G}$. It follows immediately that $f$ has distinct values on distinct cosets of $\subhyp{N}{G}$, so we can see that $f$ is a hiding function for this subhypergroup corresponding (uniquely) to $N$.
\end{proof}

\begin{theorem}[HNSP $\le$ CC-HSHP, II]
\label{thm:reduction2}
Let $G$ be a group. Suppose that we are given a hiding function $f : G \rightarrow H$ that is also a homomorphism.  If we can efficiently compute with conjugacy classes of $G$ and $H$, then we can efficiently reduce this HNSP to the CC-HSHP.
\end{theorem}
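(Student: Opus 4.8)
The plan is to mirror the strategy of Theorem~\ref{thm:reduction1}: translate the HNSP oracle $f$ into a CC-HSHP oracle $\tilde f : \Conj{G} \to \{0,1\}^*$ whose hidden subhypergroup is exactly $\subhyp{N}{G}$, where $N = \Ker f$ is the (necessarily normal) hidden subgroup. The new difficulty, relative to Theorem~\ref{thm:reduction1}, is that a homomorphism into a nonabelian $H$ need \emph{not} be a class function, since $f(x^a) = f(x)^{f(a)}$ may differ from $f(x)$. So the first step is to manufacture, out of $f$, a genuine class function of $G$ with the right level sets.

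The key observation is that a homomorphism carries conjugacy classes to conjugacy classes: for $x \in G$ one has $f(C_x) = \set{f(x)^{h}}{h \in f(G)}$, i.e.\ the image of the $G$-class $C_x$ is precisely the conjugacy class of $f(x)$ \emph{inside the image subgroup} $f(G) \le H$. Consequently the assignment $C_x \mapsto f(C_x)$ is well defined on $\Conj{G}$ (independent of the chosen representative $x$), and I would take $\tilde f(C_x)$ to be a canonical label for this image class. I would then verify that $\tilde f$ hides $\subhyp{N}{G}$ by unfolding both sides to statements about group elements: using $f(G) \cong G/N$, the equality $\tilde f(C_x) = \tilde f(C_{x'})$ says that $f(x)$ and $f(x')$ are conjugate in $f(G)$, equivalently that $x'N$ and $xN$ are conjugate in $G/N$; on the other hand $C_x$ and $C_{x'}$ lie in the same coset of $\subhyp{N}{G}$ exactly when $x' = h n$ with $h$ conjugate to $x$ in $G$ and $n \in N$, which is again the condition $x'N \sim_{G/N} xN$. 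These coincide, and since one checks that the cosets of $\subhyp{N}{G}$ correspond to the conjugacy classes of $G/N$, the map $\tilde f$ is a bona fide hiding function for $\subhyp{N}{G}$, from which $N$ is recovered.

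For efficiency I would evaluate $\tilde f(C_x)$ in three steps, each covered by the computability assumptions: obtain a representative $x \in C_x$ using conjugacy-class operations in $G$; evaluate $f(x) \in H$; and canonicalize its conjugacy class inside $f(G) = \langle f(g_1), \dots, f(g_k) \rangle$ (the image of a generating set of $G$) using conjugacy-class operations in $H$.

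The main obstacle I anticipate is precisely the distinction between conjugacy in $H$ and conjugacy in the image $f(G)$: the conjugacy class of $f(x)$ in all of $H$ is in general \emph{strictly coarser} than its class in $f(G)$, so the tempting shortcut of labelling $C_x$ by the $H$-class of $f(x)$ would merge distinct cosets of $\subhyp{N}{G}$ and fail to hide any subhypergroup at all (already for $G = \Z_3 \hookrightarrow S_3 = H$, where the two nontrivial classes of $G$ land in a single class of $S_3$). The reduction must therefore conjugate by $f(G)$ rather than by $H$, and the technical heart is to confirm that this intra-image canonicalization remains efficient given only the ability to compute with conjugacy classes of $H$, now applied to the subgroup $f(G)$.
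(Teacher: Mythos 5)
Your core strategy coincides with the paper's: use $f(x^a)=f(x)^{f(a)}$ to convert the homomorphism oracle into a class function by composing $f$ with conjugacy-class labeling, then hand off to the argument of theorem~\ref{thm:reduction1}. The noteworthy twist is that the ``tempting shortcut'' you explicitly reject---labeling $C_x$ by the class of $f(x)$ in the codomain $H$---is precisely what the paper's proof does: its $\tilde f$ sends $x$ to the conjugacy class label of $f(x)$ in $H$, and the only hiding check offered is (in a footnote) that the preimage of the identity label is $N$. Your $\Z_3\hookrightarrow S_3$ example is a correct counterexample to that literal construction: constancy on cosets of $\subhyp{N}{G}$ holds, but distinctness across cosets fails (the level set $\{C_a,C_{a^2}\}$ is not a coset of the trivial subhypergroup), so no subhypergroup is hidden at all. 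The paper's proof is sound exactly when conjugacy in $H$ agrees with conjugacy in $f(G)$ on the image, e.g.\ when $f$ is surjective; that this is the intended reading is confirmed by the paper's corollary (\textbf{HNSP is easy, II}), which invokes theorem~\ref{thm:reduction2} with the codomain taken to be the image $H_K$ of $f|_K$. Under that reading your proof and the paper's are identical; in the general setting, your insistence on conjugating within $f(G)$ is the correct statement and is a genuine repair of the written proof.

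What remains open on your side is the efficiency point you flag but do not close. The theorem's hypothesis (and the operations listed in appendix~\ref{sect:class basis operations}) give class sizes and element $\leftrightarrow$ (label, index) conversions for the classes of $H$, not for classes of a subgroup of $H$ presented by generators, and brute-force canonicalization is unavailable because $f(C_x)=f(x)^{f(G)}$ can be exponentially large. So in the only regime where your argument actually differs from the paper's, namely $f(G)\lneq H$, it is not yet an efficient reduction from the stated assumptions. The clean resolution---implicitly the paper's---is to read (or restate) the hypothesis as ``we can compute efficiently with conjugacy classes of $G$ and of the image $f(G)$,'' after which the intra-image canonicalization you need is available by assumption and both proofs go through verbatim.
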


\begin{proof}
Consider any element $x \in G$. For any conjugate $x^a$, for some $a \in G$, we
see that $f(x^a) = f(a^{-1} x a) = f(a^{-1}) f(x) f(a)$ since $f$ is a
homomorphism. Furthermore, since $a^{-1} a = e$, we see that $f(a^{-1}) f(a) =
f(e) = e$, which shows that $f(a^{-1}) = f(a)^{-1}$. Putting these together, we
have $f(x^a) = f(a)^{-1} f(x) f(a) = f(x)^{f(a)}$. This means that the function
$\tilde{f}$ taking $x$ to the conjugacy class label of $f(x)$ is a class
function, which we can compute efficiently by assumption.\footnote{Also note
that, since $e$ is the only element in its conjugacy class, $\tilde{f}$ hides
the same subgroup as $f$.} Thus, by the same proof as in previous theorem, we can reduce this to
the CC-HSHP.
\end{proof}

This latter theorem applies to many of the important examples of HSPs. This
includes the oracles used for factoring, discrete logarithm over cyclic groups
and elliptic curves, and abelian group decomposition
\cite{BermejoLinVdN13_BlackBox_Normalizers}.
While all of these examples are abelian groups, it is true in general that, for
any normal subgroup of any group, there is always some hiding function that is a group
homomorphism.\footnote{If $H \trianglelefteq G$ is the hidden subgroup, then one example is
the canonical oracle $G \rightarrow G /H$ given by $x \mapsto xH$.}

From these proofs, we can see that the essential difference between the HNSP
and the CC-HSHP is the slightly differing requirements for their oracles.  We
have seen that, whenever we can convert an oracle for the former into one for
the latter, we can reduce the HNSP to the CC-HSHP.\footnote{This also assumes
the relatively minor assumption that we can compute with conjugacy classes.}
Above, we showed this can be done in the case that the two sets of requirements
are actually the same (\textbf{theorem~\ref{thm:reduction1}}) and the case where the
labels produced by the oracle are not opaque but rather come with enough
information to compute with their conjugacy classes
(\textbf{theorem~\ref{thm:reduction2}}).

Apart from this, it is worth reflecting on which of the types of oracle is the
most sensible for the problem of finding hidden normal subgroups. With this in
mind, we note that the oracle in the HNSP is not specific to normal
subgroups: the same type of oracle can hide non-normal subgroups as well --- we
are simply promised that, in these cases, the hidden subgroup happens to be
normal. In contrast, the oracle in the CC-HSHP can \emph{only} hide normal
subgroups because it is required to be constant on conjugacy classes. Hence,
even though we came upon the oracle definition from the CC-HSHP by looking at
hypergroups, it is arguable that this is actually a \emph{better} definition of 
hiding function for  normal subgroups.  Our proofs above demonstrate that,
whenever we are given an oracle of this type, we can reduce finding the hidden normal subgroup to the CC-HSHP.

We will return to the HNSP in section~\ref{sect:Quantum Algorithms}. There, we will show that the CC-HSHP can be efficiently solved on a quantum computer under reasonable assumptions. This, together with the theorems above, show that the HNSP is easy because the CC-HSHP is easy, which gives an explanation for why the HNSP is easy in terms of the presence of an \emph{abelian} algebraic structure.

Before we can do that, however, we need to first develop some tools for analyzing quantum algorithms using abelian hypergroups. These tools will be of independent interest.

\section{Normalizer circuits over abelian hypergroups}\label{sect: Circuit Model}

In section \ref{sect:HNSP}, we described our motivating example (the hidden normal subgroup problem) for considering how abelian hypergroups can be used to understand quantum computation. There, the abelian hypergroups arose from nonabelian groups. However, there are a vast number of interesting hypergroups with applications in physics and mathematics \cite{Wildberger1994HypergroupsApplications}, including many of the ones used in topological quantum computation \cite{kitaev_anyons,kitaev_anyons}, that do \emph{not} arise from groups. So in the next three sections, we will work with a general abelian hypergroup $\mathcal{T}$, which could come from any of these settings.

Our plan in these next few sections is to follow the model that allowed abelian groups to be used so successfully to understand quantum computation \cite{VDNest_12_QFTs,BermejoVega_12_GKTheorem,BermejoLinVdN13_Infinite_Normalizers,BermejoLinVdN13_BlackBox_Normalizers}. We will see that much of the same machinery developed with abelian groups can also be developed with abelian hypergroups. While new difficulties appear, many of the most useful properties remain. Since abelian hypergroups generalize abelian groups, our constructions generalize those for abelian groups as well.

We start, in this section, by defining the class of circuits that we will analyze. We call this model, defined in section \ref{sect: Circuit Model Definitions}, normalizer circuits over hypergroups. In section \ref{sect:Examples Normalizer Circuits}, we go through a few examples of what these models consist of for different hypergroups. In later sections, we develop a stabilizer formalism and a Gottesman-Knill-type theorem that applies to these circuits.

\subsection{Circuit model}\label{sect: Circuit Model Definitions}

Fix $\mathcal{T}$  to be an arbitrary \emph{finite abelian hypergroup}. We now define a circuit model, which we call \emph{normalizer circuits} over $\mathcal{T}$. The gates of these circuits are called \emph{normalizer gates}.

\myparagraph{The Hilbert space:} Normalizer gates over  $\mathcal{T}$  act on a Hilbert space  $\Comp_\mathcal{T}$ with two orthonormal bases, $\mathsf{B}_{\mathcal{T}}=\{\ket{a}, a \in \mathcal{T}\}$ and  $\mathsf{B}_{\mathcal{T}^*}=\{\ket{\mathcal{X}_\mu}, \mathcal{X}_\mu \in \mathcal{T}^*\}$, labeled by elements and characters of $\mathcal{T}$,\footnote{Note that duality (\ref{eq:Hypergroup Duality}) implies $\dim \Comp_\mathcal{T} = \dim \Comp_\mathcal{T^*}$.} that  are related via the \emph{quantum Fourier transform} (QFT) of $\mathcal{T}$:
\begin{equation}\label{eq:Quantum Fourier Transform over Hypergroup T}
\Fourier{\mathcal{T}}\ket{a}=\sum_{\mathcal{X_\mu}\in\mathcal{T^*}}\sqrt{\frac{\w{\mathcal{X}_\mu} \w{a}}{\varpi_{\mathcal{T^*} }}} \mathcal{X}_\mu(a)\ket{\mathcal{X}_\mu},\qquad \Fourier{\mathcal{T}}^\dagger \ket{\mathcal{X_\mu}} = \sum_{a\in\mathcal{T}}\sqrt{\frac{\w{a}\w{\overline{\mathcal{X}_\mu}}}{\varpi_{\mathcal{T} }}} \overline{\mathcal{X}_\mu}(a)\ket{a}.
\end{equation}
Character orthogonality  (\ref{eq:Character Orthogonality}) implies that (\ref{eq:Quantum Fourier Transform over Hypergroup T}) is a unitary transformation.

\myparagraph{Registers:} Because in many settings it is important  to split a   quantum computation in multiple registers, we let   $\mathcal{T}$ and $\mathcal{H_T}$  have a general direct product and tensor product form
\begin{equation}\label{eq:Product Hypergroup and Hilbert Space}
\mathcal{T} = \mathcal{T}_1 \times \dots \times \mathcal{T}_m \qquad\longleftrightarrow \qquad
\Comp_{\mathcal{T}} \cong \Comp_{\mathcal{T}_1} \otimes \dots \otimes \Comp_{\mathcal{T}_m}.
\end{equation}
In this case, the QFT over $\mathcal{T}$ is the tensor product  of the QFTs over the $\mathcal{T}_i$'s:
\begin{equation}\label{eq:partial QFT}
\Fourier{\mathcal{T}}=\Fourier{\mathcal{T}_1}\otimes \cdots\otimes  \Fourier{\mathcal{T}_m}.
\end{equation}
\myparagraph{Input states:} Each register $\mathcal{H}_{\mathcal{T}_i}$ is initialized to be in either an \emph{element state} $\ket{{x_i}},{x_i}\in \mathcal{T}_i$ or in a \emph{character state} $\ket{\Hchi_\mu}, \Hchi_\mu\in\mathcal{T}_i^*$.

\myparagraph{Gates:}The allowed \emph{normalizer gates} at step $t$ of a normalizer circuit depend on a parameter $\mathcal{T}(t)$, which is a hypergroup, related to $\mathcal{T}$, of the form
\begin{equation}\label{eq:Hypergroups}
\mathcal{T}(t)=\mathcal{T}(t)_{1}\times\cdots \times \mathcal{T}(t)_{m} \quad \textnormal{with}\quad  \mathcal{T}(t)_i \in \{{\mathcal{T}}_i, {\mathcal{T}}_i^*\}.
\end{equation}
The role of $\mathcal{T}(t)$ is to indicate whether the operations carried out by circuit at time $t$ will be on the element or character basis.  At step $0$, $\mathcal{T}(0)$ is chosen so that  $\mathcal{T}_{i}(0)\in \{\mathcal{T}_i, \mathcal{T}_i^*\}$ indicates whether $\mathcal{H}_{\mathcal{T}_i}$ begins on an element or character state.  At any steps $t> 0$, $\mathcal{T}(t)$  depends on the gates that have been applied at earlier steps, following  rules given below.

Normalizer gates at time $t$  can be of four types:
\begin{enumerate}
\item \textbf{Pauli gates.}  Pauli gates of type X  implement the $\mathcal{T}(t)$ hyperoperation $\PX{\mathcal{T}(t)}(a)\ket{b}=\ket{ab}$ for invertible elements $a\in\mathcal{T}(t)$. Pauli gates of type Z multiply by phases $\PZ{\mathcal{T}(t)}(\mathcal{X}_\mu)\ket{b}=\mathcal{X}_\mu(b)\ket{b}$ which correspond to invertible characters in $\mathcal{T}(t)^*$. 
\item \textbf{Automorphism Gates.} Let $\alpha : \mathcal{T}(t) \rightarrow \mathcal{T}(t)$ be an
automorphism of the hypergroup $\mathcal{T}(t)$. Then the automorphism gate $U_\alpha$ taking $\ket{g}\mapsto \ket{\alpha(g)}$ is a valid normalizer gate. 

\item \textbf{Quadratic Phase Gates} A complex function  $\xi : \mathcal{T}(t) \rightarrow U(1)$ is
called ``quadratic'' if the map $B(g,h):\mathcal{T}(t)\times
\mathcal{T}(t)\rightarrow U(1)$ defined by
$\xi(gh)=\xi(g)\xi(h)B(g,h)$ is a bi-character, i.e., a character of
the hypergroup in either argument. A quadratic phase gate is a diagonal map
$D_\xi$ taking $\ket{g} \mapsto \xi(g) \ket{g}$ for some quadratic function
$\xi$.
\item \textbf{Quantum Fourier Transforms}. A \emph{global QFT} implements the  gate $\mathcal{F}_{\mathcal{T}(t)}$ over $\mathcal{T}(t)$  (\ref{eq:Quantum Fourier Transform over Hypergroup T}). \emph{Partial QFTs}  implement the gates $\mathcal{F}_{{\mathcal{T}(t)}_i}$ on single registers $\mathcal{H}_{\mathcal{T}(t)_i}$ (while the other registers remain unchanged).
\end{enumerate}
\indent \myparagraph{Update rule:} Because QFTs change the hypergroup that labels the standard basis (\ref{eq:Quantum Fourier Transform over Hypergroup T}), the rules above do not specify which  normalizer gates should be applied on the second step. For this reason, in our gate model, we  \emph{update} the value of $\mathcal{T}(t+1)$ at time $t+1$ so that $\mathcal{T}(t+1)_i=\mathcal{T}(t)_i^*$ if a QFT acts on $\mathcal{H}_{\mathcal{T}(t)_i}$ and $\mathcal{T}(t+1)_i=\mathcal{T}(t)_i$ otherwise.

\myparagraph{Measurements:} At the final step $T$, every register $\mathcal{H}_{\mathcal{T}_i}$ is measured in either the element or the character basis depending on the configuration of the QFTs in the circuit: specifically,  $\mathcal{H}_{\mathcal{T}_{i}}$ is measured in basis $\textsf{B}_{\mathcal{T}_i}$ labeled by elements of $\mathcal{T}_i$  when  $\mathcal{T}(T)_i=\mathcal{T}_i$, and in the character basis $\textsf{B}_{\mathcal{T}_i^*}$ when  $\mathcal{T}(T)_i=\mathcal{T}_i^*$. In the end, the final string of measurement outcomes identifies an element of the hypergroup $\mathcal{T}(T)$.

\subsection{Examples from group theory}\label{sect:Examples Normalizer Circuits}

We now give examples of normalizer gates over conjugacy class and character hypergroups with the aim to illustrate our definitions and, furthermore, show how our results can be applied to define models of \emph{normalizer circuits over nonabelian groups}.

\subsubsection*{Example 1: Clifford and abelian-group normalizer circuits}

For an abelian group $G$, all conjugacy classes contain a single group element. Consequently, the class hypergroup $\overline{G}$ is always a \emph{group} and it is equal to $G$. In this scenario, our gate model coincides with the finite abelian-group normalizer-circuit model studied in \cite{VDNest_12_QFTs,BermejoVega_12_GKTheorem}, which contain numerous examples of normalizer gates. Choosing  $G=\Integers_2^n$ or  $G=\Integers_d^n$, normalizer circuits become the standard Clifford circuits for $n$-qubits \cite{Gottesman99_HeisenbergRepresentation_of_Q_Computers} and $n$-qudits \cite{Gottesman98Fault_Tolerant_QC_HigherDimensions}. More exotic examples are given in \cite{VDNest_12_QFTs,BermejoVega_12_GKTheorem} for the case $G=\Integers_{d_1}\times\cdots\times \Integers_{d_m}$, where  normalizer circuits can contain quantum Fourier transforms, powerful quantum gates used  Shor's algorithm \cite{Shor}.

All the above examples of normalizer circuits are efficiently classical simulable \cite{VDNest_12_QFTs,BermejoVega_12_GKTheorem}. More exotic instances  of abelian-group normalizer circuits that cannot be be simulated \cite{BermejoLinVdN13_BlackBox_Normalizers} include groups of the form  $G=\Integers_N^\times$, solutions of elliptic curves and, in general, finite abelian groups that  cannot be efficiently written in the form  $\Integers_{d_1}\times\cdots\times \Integers_{d_m}$. In this case, it is shown in \cite{BermejoLinVdN13_BlackBox_Normalizers} that Shor's discrete-log quantum algorithm is a normalizer circuit over $\Integers_{p-1}^2 \times \Integers_{p}^{\times}$; a similar result is proven in \cite{BermejoLinVdN13_BlackBox_Normalizers} for Shor's factoring, which can be understood as a normalizer circuit over an infinite group (which we do not consider in this paper).\footnote{See also \cite{BermejoLinVdN13_Infinite_Normalizers} for more general types of normalizer circuits over infinite abelian groups.}

\subsubsection*{Example 2: normalizer circuits over nonabelian groups}
\label{sect:Circuits nonabelian group}

We now apply our circuit formalism to introduce (new) models of normalizer gates over any finite nonabelian group $G$. For this, we  associate a Hilbert space $\mathcal{H}_{G}$ to $G$ with basis $\{\ket{g},g \in G\}$ and restrict the computation to act on its  (nontrivial) subspace  $\mathcal{I}_{\overline{G}}$ of  \emph{conjugation invariant} wavefunctions.\footnote{That is, wave functions $\psi(x)$ such that $\psi(x^g) = \psi(x)$ for all $x,g \in G$.}

As is well-known from representation theory \cite{Isaacs1994character}, the Dirac delta measures $\delta_{C_g}$ over conjugacy classes $C_g\in\overline{G}$ and the  character functions $\chi_\mu$ of the irreducible representations $\mu\in\mathrm{Irr}(G)$ form two dual orthonormal bases of $\mathcal{I}_\Conj{G}$. In our circuit model, recalling the definitions of class hypergroup $\overline{G}$ and character hypergroup $\widehat{G}$ (see section \ref{sect:Examples Hypergroups}), this means that $\mathcal{I}_{\overline{G}}$ can be viewed as the Hilbert space of the conjugacy class hypergroup $\mathcal{H}_{\overline{G}}$ with a \emph{conjugacy-class basis} $\mathsf{B}_{\overline{G}}=\{\ket{C_{g}},C_{g}\in G\}$ and a \emph{character basis} $\mathsf{B}_{\widehat{G}}=\{\ket{\mathcal{X}_{\mu}},\mathcal{X}_{\mu}\in\widehat{G}\}$ if we define these bases within $\mathcal{H}_G$ as the vectors
\begin{equation}\label{eq:Bases Nonabelian Group}
\ket{C_{g}}=\frac{1}{\sqrt{|C_{g}|}}\sum_{aga^{-1} \in C_{g}} \ket{aga^{-1}}\ \text{ and }\ \ket{\mathcal{X}_{{\mu}}}={\sqrt{\frac{d_\mu^2}{|G|}}}\sum_{g\in G}  \overline{\Hchi_{\mu}} (g)\ket{g}.
\end{equation}

With these identifications, we can now define a \emph{normalizer circuit over $G$} to be a normalizer circuit over the hypergroup $\overline{G}$: the latter acts on the conjugation invariant subspace, admits conjugacy class and character state inputs, and applies QFTs, group automorphisms, and quadratic phase functions associated to $\overline{G}$ and $\widehat{G}$. Furthermore, if we have a direct product $G=G_1\times \cdots \times G_m$, then $\overline{G}=\overline{G}_1\times \cdots \times \overline{G}_m$, $\widehat{G}=\widehat{G}_1\times \cdots \times \widehat{G}_m$ and $\mathcal{H}_{\Conj{G}}=\mathcal{H}_{\Conj{G}_1}\otimes \cdots \otimes \mathcal{H}_{\Conj{G}_m}$. In this setting, normalizer gates such as partial QFTs and entangling gates over different registers are allowed.

It is straightforward to check, using the identities $\ws{C_g}=|C_g|$, $\ws{\Hchi_\mu}=d_\mu^2$ and $\varw{\Conj{G}}=\varw{\widehat{G}}=|G|$ (section \ref{sect:Examples Hypergroups}), that the QFT defined with the bases from (\ref{eq:Bases Nonabelian Group}) is actually the identity map. Even so, the QFT performs a useful purpose in these circuits as it changes the basis used for subsequent gates, $\mathcal{T}(t+1)$. In particular, the QFT can change the final basis to the character basis, which means that the final measurement is performed in the character basis rather than the element basis.

As we show in appendix \ref{app:CC implementation details}, we can perform a final measurement in the character basis provided that we have an efficient QFT circuit for the group $G$. The same techniques also allow us to prepare initial states and perform all the gate types (Pauli gates, automorphisms, and quadratic phases) in the character basis efficiently.

Performing the gate types in the conjugacy class basis is straightforward if we make some modest assumptions about our ability to compute with conjugacy classes of the group. For example, we need a way to map an element $g \in G$ to a label of its conjugacy class $C_g$. These details are discussed in appendix~\ref{sect:class basis operations}, where we explain why these assumptions are easily satisfied for typical classes of groups.

\subsubsection*{Example 3: quaternionic circuits}\label{sect:Quaternionic circuits}

Lastly, we give concrete examples of normalizer gates over nonabelian groups  for systems of the form $Q_8^n$ where  $Q_8$ is the quaternion group with  presentation \begin{equation}
Q_8=\langle -1,i,j,k | (-1)^2=1,i^2=j^2=k^2=ijk=-1\rangle
\end{equation} Note that $Q_8$ is nonabelian and that, although it has  eight elements $\pm 1, \pm i, \pm j, \pm k$, it has only five conjugacy classes $\{1\}$, $\{-1\}$, $\{\pm i\}$, $\{\pm j\}$, $\{\pm k\}$. Hence, although the Hilbert space  $\mathcal{H}_{Q_8}^{\otimes n}=\{\ket{g}: g\in Q_8\}$ is $8^n$-dimensional, our quantum computation based on normalizer gates will never leave the $5^n$-dimensional  conjugation invariant subspace  $\Comp_{\Conj{Q_8}}^{\otimes n}$, which can be viewed as a system of 5-dimensional qudits. Using the group character table \cite{Atiyah61_Characters_Cohomology}, it is easy to write down the conjugacy class and character basis states of $\Comp_{\Conj{Q_8}}$:
\begin{itemize}
\item \textbf{Conjugacy-class states:}
\begin{equation*}
\ket{C_1}=\ket{1},\quad \ket{C_{-1}}=\ket{-1},\quad\ket{C_{i}}=\tfrac{\ket{i}+\ket{-i}}{\sqrt{2}},\quad\ket{C_{j}}=\tfrac{\ket{j}+\ket{-j}}{\sqrt{2}},\quad\ket{C_{k}}=\tfrac{\ket{k}+\ket{-k}}{\sqrt{2}}\vspace{-5pt}
\end{equation*}

\item \textbf{Character states:} 
\begin{align*}
\ket{\mathcal{X}_1} &= \tfrac{1}{\sqrt{8}} \left(\ket{C_1} +\ket{C_{-1}}+\sqrt{2}\ket{C_{i}}+\sqrt{2}\ket{C_{j}}+\sqrt{2}\ket{C_{k}}\right),\\
\ket{\mathcal{X}_i} &= \tfrac{1}{\sqrt{8}} \left(\ket{C_1} +\ket{C_{-1}}+\sqrt{2}\ket{C_{i}}-\sqrt{2}\ket{C_{j}}-\sqrt{2}\ket{C_{k}}\right),\\
\ket{\mathcal{X}_j} &= \tfrac{1}{\sqrt{8}} \left(\ket{C_1} +\ket{C_{-1}}-\sqrt{2}\ket{C_{i}}+\sqrt{2}\ket{C_{j}}-\sqrt{2}\ket{C_{k}}\right),\\
\ket{\mathcal{X}_k} &= \tfrac{1}{\sqrt{8}} \left(\ket{C_1} +\ket{C_{-1}}-\sqrt{2}\ket{C_{i}}-\sqrt{2}\ket{C_{j}}+\sqrt{2}\ket{C_{k}}\right),\\
\ket{\mathcal{X}_2} &= \tfrac{2}{\sqrt{8}} \left(\ket{C_1} -\ket{C_{-1}}\right),
\end{align*}
\end{itemize}
We now give a list of nontrivial normalizer gates (not intended to be exhaustive), which we obtain directly from the definitions in section \ref{sect: Circuit Model} applying basic properties of the quaternion group \cite{Humphrey96_Course_GroupTheory,Atiyah61_Characters_Cohomology}. For the sake of conciseness, the elementary group-theoretic derivations are omitted.
\begin{itemize}
\item \textbf{Quantum Fourier transform.} For one qudit, the QFT implements the change of basis between the conjugacy-class and character bases written above. For $n$-qudits, the total QFT implements this change of bases on all qudits. Partial QFTs, instead, implement the QFT on single qudits.
\vspace{5pt}
\item \textbf{Pauli gates:} $\PX{\overline{Q}_8}(-1)\ket{C_x}=\ket{-C_x},\quad \PZ{\overline{Q}_8}(\mathcal{X}_\ell)\ket{C_x}=\mathcal{X}_\ell(C_x)\ket{C_x}\quad$for $\ell = i, j, k$.
\vspace{5pt}
\item \textbf{Automorphism gates:} All automorphisms of the class-hypergroup can be obtained by composing  functions $\alpha_{xy}$ that swap pairs of conjugacy classes $C_x$, $C_y$ with $x, y\in \{i,j,k\}$. The corresponding \emph{swap gates} $U_{\alpha_{xy}}\ket{C_z}=\ket{\alpha_{xy}(C_z)}$ are instances of one-qudit quaternionic automorphism gates.

\item \textbf{Quadratic phase gate.} Next, we give examples of non-linear quadratic phase gates. For one qudit, quadratic phase gates $D_{\xi_{i}}$, $D_{\xi_{j}}$, $D_{\xi_{k}}$ defined as 
\begin{equation*}
D_{\xi_{x}}\ket{C_y}=\ket{C_y}, \textnormal{ if } y\in\langle x\rangle=\{\pm 1, \pm x\}  \quad\textnormal{and} \quad
D_{\xi_{x}}\ket{C_z}=i\ket{C_z}   \textnormal{ otherwise},
\end{equation*} 
  provide quaternionic analogues of the one-qubit $P=\mathrm{diag}(1,i)$ Clifford gate. 
  
  For two qudits, there is also a ``\emph{quaternionic controlled-Z gate}'' $D_\xi$, which implements a quadratic function  $\xi(C_x,C_y)=f_{C_x}(C_y)$, with $f_{C_x}$ being a linear character specified by the following rules: $f_{C_{\pm1}}=\mathcal{X}_1$ and $f_{C_{x}}=\mathcal{X}_x$ for $x=i,j,k$. We refer the reader to  appendix \ref{app:Quadratic functions} for a proof that the above functions are quadratic.
  
\end{itemize}

Most of the above gates act on a single copy of $\mathcal{H}_{Q_8}$ and, thus, cannot generate entanglement. Entangling normalizer gates can be found by considering two copies of $\mathcal{H}_{Q_8}$. The allowed  normalizer gates are now those associated to the group $Q_8\times Q_8$.

We give next three examples of two-qudit  automorphism gates $U_{\alpha_i}$, $U_{\alpha_j}$, $U_{\alpha_k}$, that can generate \textbf{quantum entanglement} and provide quaternionic analogues of the qubit CNOT \cite{Gottesman99_HeisenbergRepresentation_of_Q_Computers} and the qudit CSUM gates \cite{Gottesman98Fault_Tolerant_QC_HigherDimensions}. The three are defined as
\begin{equation}
U_{\alpha_x}\ket{C_1,C_2}=\ket{\alpha_x(C_1,C_2)}=\ket{C_1,f_x(C_1)C_2}
\end{equation}   where  $f_x(C_y)=C_1$ if $C_y$ is contained in the subgroup $\langle x \rangle = \{\pm 1,\pm x\}$ generated by $x$ and $f_{x}(C_y)=C_{-1}$ otherwise\footnote{The function   $f_x$ defines a group homomorphism from $\overline{Q}_8$ into its center $Z(Q_8)$. Using this fact, it is easy to show that $\alpha_x$ is a group automorphism.}. 
The action of any of these gates on  the product state
  $\ket{\mathcal{X}_1}\ket{C_1}$ generates an entangled  state; we show this explicitly for  $U_{\alpha_i}$:
\begin{equation*}
U_{\alpha_i}\ket{\mathcal{X}_1}\ket{C_1}= \tfrac{1}{2} \left( \tfrac{\ket{C_1} +\ket{C_{-1}}}{\sqrt{2}}+\ket{C_{i}}\right)\ket{C_1}+ \tfrac{1}{2}\left(\ket{C_{j}}+\ket{C_{k}}\right)\ket{C_{-1}}.
\end{equation*}
  Quaternionic quadratic-phase gates can also generate \textbf{highly entangled states}.
  For instance, the action of $D_\xi$ on a product state $\ket{\mathcal{X}_1}\ket{\mathcal{X}_1}$ creates an entangled bi-partite state with Schmidt rank 4, which is close to the maximal value of 5 achievable for a state in $\Comp_{\Conj{Q_8}}\otimes\Comp_{\Conj{Q_8}}$:
\begin{equation}\label{eq:Entangled State}
D_\xi\ket{\mathcal{X}_1}\ket{\mathcal{X}_1}= \tfrac{1}{4} \left( \left(\tfrac{\ket{C_1}+\ket{C_{-1}}}{\sqrt{2}}\right)\ket{\mathcal{X}_1}+\ket{C_{i}}\ket{\mathcal{X}_i}+\ket{C_{j}}\ket{\mathcal{X}_j}+\ket{C_{k}}\ket{\mathcal{X}_k}\right).
\end{equation}
A  quaternionic analogue of the ($d=4$) qudit \emph{cluster state} \cite{ZhouZengXuSun03} displaying  multi-partite entanglement can prepared by repeatedly  applying $D_\xi$ to all pairs of neighboring qudits on a lattice, chosen to be initially in the state $\ket{\mathcal{X}_1}$.

As this example shows, while normalizer circuits have fairly simple algebraic
properties, they can produce states that are very complicated and often highly
entangled. Thus, as in the abelian case, it comes as a surprise that these
circuits can often be classically simulated efficiently, as we will see in
section 6.

\section{A Hypergroup Stabilizer Formalism}\label{sect:Stabilizer Formailsm}

In this section we develop a stabilizer formalism based on  abelian hypergroups that extends Gottesman's PSF \cite{Gottesman_PhD_Thesis,Gottesman99_HeisenbergRepresentation_of_Q_Computers,Gottesman98Fault_Tolerant_QC_HigherDimensions} and the abelian group extension of \cite{VDNest_12_QFTs,BermejoVega_12_GKTheorem,BermejoLinVdN13_Infinite_Normalizers,BermejoLinVdN13_BlackBox_Normalizers}. We  apply our formalism to the description of new types of  quantum many-body states,  including hypergroup coset states and  those that appear at intermediate steps of quantum computations by normalizer circuits over hypergroups.

This section is organized as follows. In section \ref{sect:Pauli Operators}, we introduce new  types of Pauli operators based on hypergroups that have richer properties than  those of \cite{Gottesman_PhD_Thesis,Gottesman99_HeisenbergRepresentation_of_Q_Computers,Gottesman98Fault_Tolerant_QC_HigherDimensions,VDNest_12_QFTs,BermejoVega_12_GKTheorem,BermejoLinVdN13_Infinite_Normalizers,BermejoLinVdN13_BlackBox_Normalizers}: most remarkably, they can be  non-monomial and non-unitary matrices. In section (section \ref{sect:Stabilizer States over Hypergroups}) we show that commuting \emph{stabilizer hypergroups} built of the latter Paulis can be used to describe interesting families of quantum states, which we call \emph{hypergroup stabilizer states}, as well as track the dynamical evolution of hypergroup normalizer circuits (\textbf{theorem \ref{thm:Evolution of Stabilizer States}}). In section \ref{sect:Normal Form CSS hypergroup Stablizier States}, we give a  powerful \emph{normal form} (\textbf{theorem \ref{thm:Normal Form CSS States}}) for hypergroup stabilizer states that are CSS-like \cite{CalderbankShor_good_QEC_exist,Steane1996_Multiple_Particle_Interference_QuantumErrorCorrection}. The latter will be an invaluable tool in our paper, which we use to describe hypergroup coset states (equation \ref{eq:Hypergroup Coset State}, corollary \ref{corollary:Coset State Preparations}) and analyze the quantum algorithms of  section \ref{sect:Quantum Algorithms}. The techniques in this section will also be the basis of the classical simulation methods developed in section \ref{sect:Simulation}.

The fact that our hypergroup stabilizer formalism is based on non-monomial, non-unitary stabilizers introduces nontrivial technical difficulties that are discussed in detail in sections \ref{sect:Pauli Operators}-\ref{sect:Stabilizer States over Hypergroups}. The techniques we develop to cope with the issues are unique in the stabilizer formalism literature since both the original PSF and all of its previously known extensions \cite{Gottesman_PhD_Thesis,Gottesman99_HeisenbergRepresentation_of_Q_Computers,Gottesman98Fault_Tolerant_QC_HigherDimensions,VDNest_12_QFTs,BermejoVega_12_GKTheorem,BermejoLinVdN13_Infinite_Normalizers,BermejoLinVdN13_BlackBox_Normalizers, AaronsonGottesman04_Improved_Simul_stabilizer,dehaene_demoor_coefficients,dehaene_demoor_hostens,VdNest10_Classical_Simulation_GKT_SlightlyBeyond,deBeaudrap12_linearised_stabiliser_formalism,nest_MMS,NiBuerschaperVdNest14_XS_Stabilizers} were tailored to handle  unitary monomial stabilizer matrices. For this reason, we regard them as a main contribution of our paper.

Like  in  previous section, we develop our stabilizer formalism over arbitrary abelian hypergroups. Throughout the section, we fix $\mathcal{T} = \mathcal{T}_1 \times \dots \times \mathcal{T}_m$ to be an arbitrary finite abelian hypergroup with Hilbert space $·\Comp_{\mathcal{T}} \cong \Comp_{\mathcal{T}_1} \otimes \dots \otimes \Comp_{\mathcal{T}_m}$.

\subsection{Hypergroup Pauli operators}\label{sect:Pauli Operators}

We introduce generalized Pauli operators over $\mathcal{T}$ acting on $\Comp_{\mathcal{T}}$ with analogous properties to the qubit and abelian-group Pauli matrices \cite{VDNest_12_QFTs,BermejoVega_12_GKTheorem,BermejoLinVdN13_Infinite_Normalizers}. In our formalism Pauli operators perform operations associated to the abelian hypergroups of section \ref{sect: Circuit Model}. First, \emph{Pauli operators over a hypergroup $\mathcal{T}$} (\ref{eq:Pauli operators DEFINITION})  implement multiplications by hypergroup characters as well as the hypergroup operation of $\mathcal{T}$. For the character group $\mathcal{T^*}$, \emph{Pauli operators over $\mathcal{T^*}$} are defined analogously (\ref{eq:Pauli operators DUAL BASIS}).  More precisely, for all $x,y \in\mathcal{T}$, $\mathcal{X}_\mu, \mathcal{X}_\nu\in\mathcal{T^*}$, we define
\begin{align}\label{eq:Pauli operators DEFINITION}
 \PZ{\mathcal{T}}(\mathcal{X}_\mu)\ket{x}:=\Hchi_\mu(x)\ket{x},\qquad\:  & \PX{\mathcal{T}}(x)\ket{y}:= \sum_{z\in\mathcal{T}} \sqrt{\tfrac{\w{y}}{\w{z}}}\,  n_{x,y}^{z} \ket{z},\\
\label{eq:Pauli operators DUAL BASIS}
 \PZ{\mathcal{T^*}}(x)\ket{\mathcal{X}_\mu}:= \Hchi_\mu(x) \ket{\mathcal{X}_\mu}, \qquad\: & \PX{\mathcal{T^*}}(\mathcal{X}_\mu)\ket{\mathcal{X}_\nu}:= \sum_{\Hchi_\gamma\in\mathcal{T^*}} \sqrt{\tfrac{\w{\nu}}{\w{\gamma}}}\,  m_{\mu,\nu}^{\gamma} \ket{\mathcal{X}_\gamma}.
\end{align}
With this definition, Pauli operators over a product $\mathcal{T}=\mathcal{T}_1\times\cdots \times \mathcal{T}_m$ inherit a tensor product form $\PX{\mathcal{T}}(a):=\PX{\mathcal{T}_1}(a_1)\otimes \cdots \otimes \PX{\mathcal{T}_m}(a_m)$ and $\PZ{\mathcal{T}}(\mathcal{X}_\mu)=\PZ{\mathcal{T}_1}(\mathcal{X}_{\mu_1})\otimes \cdots \otimes \PZ{\mathcal{T}_m}(\mathcal{X}_{\mu_m})$. Any operator  that can be written as a product of operators of type $\PX{\mathcal{T}}(a)$ and $\PZ{\mathcal{T}}(\chi_\mu)$ will be called a generalized \emph{hypergroup Pauli operator}.

We  state a few \emph{main properties} of hypergroup Pauli operators. First, it follows from (\ref{eq:Pauli operators DEFINITION})  that the  Pauli operators $\PZ{\mathcal{T}}(\mathcal{X}_\mu)$ commute and form a hypergroup isomorphic to $\mathcal{T}^*$: 
\begin{equation}\label{eq:Pauli Z Properties}
\PZ{\mathcal{T}}(\mathcal{X}_\mu) \PZ{\mathcal{T}}(\mathcal{X}_\nu)= \PZ{\mathcal{T}}(\mathcal{X}_\nu) \PZ{\mathcal{T}}(\mathcal{X}_\mu)=\sum_{\gamma} m_{\mu\nu}^{\gamma} \PZ{\mathcal{T}}(\mathcal{X}_\gamma),  \quad  \textnormal{for any $\mathcal{X}_\mu,\mathcal{X}_\nu\in \mathcal{T}^*$}.
\end{equation}
Although it is not obvious from the definitions, we show later (theorem \ref{thm:Evolution of Stabilizer States}, eq.\ \ref{eq:QFTs are Clifford}) that the X-Paulis $\PX{\mathcal{T}}(a)$ are also pair-wise commuting normal operators, which are diagonal in the the character basis  $\{ \ket{\mathcal{X}_\mu}, \mathcal{X}_\mu \in \mathcal{T}^* \}$, and form a faithful representation of the conjugacy-class hypergroup. Precisely, for any $a,b\in \mathcal{T}$, $\mathcal{X}_\mu\in\mathcal{T}^*$, we have
\begin{equation}\label{eq:Pauli X Properties}
\PX{\mathcal{T}}(a)\ket{ \mathcal{X}_\mu }=\mathcal{X}_{\mu}(a)\ket{\mathcal{X}_\mu}, \qquad \PX{\mathcal{T}}(a)\PX{\mathcal{T}}(b)=\PX{\mathcal{T}}(b)\PX{\mathcal{T}}(a)=\sum_{c} n_{ab}^c \PX{\mathcal{T}}(c).
\end{equation}

The following lemma characterizes when Pauli operators of different type commute.

\begin{lemma}[\textbf{Commutativity}]\label{lemma:Commutativity of Paulis}
The  operators $\PX{\mathcal{T}}(a)$, $\PZ{\mathcal{T}}(\mathcal{X}_\mu)$ commute iff $\mathcal{X}_\mu(a)=1$.
\end{lemma}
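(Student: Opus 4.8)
The plan is to reduce the commutativity of $\PX{\mathcal{T}}(a)$ and $\PZ{\mathcal{T}}(\mathcal{X}_\mu)$ to a pointwise condition on the character $\mathcal{X}_\mu$, and then to analyze that condition via a contraction (detailed-balance) argument. First I would evaluate the commutator directly on an element-basis vector $\ket{y}$. Using $\PZ{\mathcal{T}}(\mathcal{X}_\mu)\ket{y}=\mathcal{X}_\mu(y)\ket{y}$ together with the definition of $\PX{\mathcal{T}}(a)$ in (\ref{eq:Pauli operators DEFINITION}), a one-line calculation yields
\begin{equation*}
\big(\PX{\mathcal{T}}(a)\PZ{\mathcal{T}}(\mathcal{X}_\mu)-\PZ{\mathcal{T}}(\mathcal{X}_\mu)\PX{\mathcal{T}}(a)\big)\ket{y}=\sum_{z\in\mathcal{T}}\sqrt{\tfrac{\w{y}}{\w{z}}}\,n_{a,y}^{z}\big(\mathcal{X}_\mu(y)-\mathcal{X}_\mu(z)\big)\ket{z}.
\end{equation*}
Since $\mathsf{B}_{\mathcal{T}}$ is orthonormal, the two operators commute if and only if $\mathcal{X}_\mu(z)=\mathcal{X}_\mu(y)$ holds whenever $z\in ay$ (equivalently $n_{a,y}^{z}\neq 0$), for all $y,z\in\mathcal{T}$. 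The ``only if'' direction is then immediate: specializing to $y=x_0$, we have $a x_0 = a$ with $n_{a,x_0}^{a}=1$, so the condition forces $\mathcal{X}_\mu(a)=\mathcal{X}_\mu(x_0)=1$.

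For the converse, suppose $\mathcal{X}_\mu(a)=1$ and write $v_y:=\mathcal{X}_\mu(y)$ and $M_{yz}:=n_{a,y}^{z}$. The character identity (\ref{eq:Character DEFINITION}) reads $(Mv)_y=\sum_z n_{a,y}^z\mathcal{X}_\mu(z)=\mathcal{X}_\mu(a)\mathcal{X}_\mu(y)=v_y$, so $v$ is a fixed point of $M$. The matrix $M$ is row-stochastic by the normalization property, while reversibility (\ref{eq:Reversibility Property}) gives the detailed-balance relation $\w{y}\,n_{a,y}^{z}=\w{z}\,n_{\overline{a},z}^{y}$, whose sum over $y$ yields $\sum_y \w{y}\,M_{yz}=\w{z}$. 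Combining these with Jensen's inequality for $|\cdot|^2$, I would show that $M$ is a contraction on $\mathcal{T}$ equipped with the weighted inner product $\langle f,g\rangle=\sum_y \w{y}\,\overline{f_y}\,g_y$: indeed $\|Mf\|^2=\sum_y \w{y}\,|\sum_z M_{yz}f_z|^2\le \sum_y \w{y}\sum_z M_{yz}|f_z|^2=\sum_z \w{z}|f_z|^2=\|f\|^2$.

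Since $Mv=v$ saturates this contraction bound, every Jensen step must hold with equality: for each $y$, the random variable taking value $v_z$ with probability $n_{a,y}^{z}$ has zero variance, so $\mathcal{X}_\mu(z)$ is constant (and equal to $v_y=\mathcal{X}_\mu(y)$) on the support $\{z:n_{a,y}^{z}\neq 0\}=\{z:z\in ay\}$. This is precisely the pointwise condition isolated in the first step, so the operators commute. The main obstacle is exactly this ``if'' direction: the naive elementary argument — that $\mathcal{X}_\mu(y)$ is a convex combination of the values $\mathcal{X}_\mu(z)$, each of modulus at most $1$ — does not by itself force equality when $|\mathcal{X}_\mu(y)|<1$. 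The contraction/detailed-balance viewpoint is what renders the eigenvalue-$1$ eigenvector rigid enough to deduce local constancy, and it has the additional virtue of working for an arbitrary (strong) abelian hypergroup without invoking any a priori bound on $|\mathcal{X}_\mu|$.
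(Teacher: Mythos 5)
Your proof is correct, and its skeleton coincides with the paper's: both evaluate $\PZ{\mathcal{T}}(\mathcal{X}_\mu)\PX{\mathcal{T}}(a)$ and $\PX{\mathcal{T}}(a)\PZ{\mathcal{T}}(\mathcal{X}_\mu)$ on an element-basis state, reduce commutativity to the pointwise condition that $\mathcal{X}_\mu(z)=\mathcal{X}_\mu(y)$ whenever $n_{a,y}^{z}\neq 0$, and obtain the ``only if'' by specializing $y$ to the identity. Where you genuinely diverge is the ``if'' direction: the paper disposes of it by citing the hypergroup literature (\cite[proposition 2.4.15]{BloomHeyer95_Harmonic_analysis}) for the fact that $\mathcal{X}_\mu(a)=1$ forces $\mathcal{X}_\mu$ to be constant on each product set $ay$, whereas you prove that fact from scratch. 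Your mechanism is sound: view $M_{yz}=n_{a,y}^{z}$ as a row-stochastic matrix; use reversibility (\ref{eq:Reversibility Property}) to get the detailed-balance relation $\w{y}\,n_{a,y}^{z}=\w{z}\,n_{\overline{a},z}^{y}$, hence stationarity $\sum_y \w{y}M_{yz}=\w{z}$; conclude that $M$ is a contraction on $\ell^2(\mathcal{T},w)$; and then exploit the equality case of Jensen to show the eigenvalue-$1$ eigenvector $v_y=\mathcal{X}_\mu(y)$ must be locally constant on the support of each row. Your diagnosis of the subtlety is also exactly right: the naive convexity argument closes only when $|\mathcal{X}_\mu(y)|=1$, which hypergroup characters need not satisfy, so some global rigidity input is genuinely required. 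What each approach buys: the paper's citation keeps the proof to three lines and leans on standard harmonic analysis of hypergroups; your argument is self-contained, makes explicit that the result hinges on reversibility and strict positivity of the weights $\w{y}=1/n_{y\overline{y}}^{0}$ (which is what lets you pass from equality of the summed inequality to term-by-term equality, so it deserves an explicit sentence), and would carry over verbatim to any setting where one prefers not to invoke the external reference.
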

\begin{proof}
For any state $\ket{b}$ in the basis $\mathsf{B}_\mathcal{T}$, compare $\PZ{\mathcal{T}}(\mathcal{X}_\mu)\PX{\mathcal{T}}(a)\ket{b}=\sum_{c}\sqrt{\tfrac{\w{b}}{\w{c}}}n_{ab}^c \mathcal{X}_\mu(c)\ket{c}$ with $ \PX{\mathcal{T}}(a)\PZ{\mathcal{T}}(\mathcal{X}_\mu)\ket{b}=\sum_{c}\sqrt{\tfrac{\w{b}}{\w{c}}}n_{ab}^c \mathcal{X}_\mu(b)\ket{c}$. Then,  the ``if'' holds  because  $\mathcal{X}_\mu(a)=1$  implies  $\mathcal{X}_\mu(b)=\mathcal{X}_\mu(c)$ for any $b\in\mathcal{T}$, $c\in ab$ \cite[proposition 2.4.15]{BloomHeyer95_Harmonic_analysis}, and the two expressions coincide. Conversely, letting $b=e$ in these two expressions yields  the ``only if''.
\end{proof}
Note that the above properties are always fulfilled by qubit \cite{Gottesman99_HeisenbergRepresentation_of_Q_Computers}, qudit \cite{Gottesman98Fault_Tolerant_QC_HigherDimensions} and group Pauli operators \cite{VDNest_12_QFTs,BermejoVega_12_GKTheorem}. In this sense, hypergroup Pauli operators provide a generalization of these concepts. Yet, as we will see in the next section, there are some remarkable properties of group Pauli operators that are fully shared by their hypergroup counterparts.

\subsection{Hypergroup stabilizer states}\label{sect:Stabilizer States over Hypergroups}

We will now extend the notion of stabilizer state from groups to hypergroups.

\begin{definition}[\textbf{Stabilizer hypergroup and stabilizer state}]\label{def:Stabilizer Hypergroup} \label{def:Stabilizer State}

A \emph{stabilizer hypergroup} $\mathcal{S}^\lambda$ is a hypergroup of commuting  hypergroup Pauli operators over $\mathcal{T}$ (\ref{eq:Pauli operators DEFINITION}-\ref{eq:Pauli operators DUAL BASIS}) with an associated \emph{stabilizer function} $\lambda$ that selects an eigenvalue $\lambda(U)$ for every  $U$ in $\mathcal{S}^\lambda$.

Let $\{\mathcal{S}_i^{\lambda_i}\}_{i=1}^{r}$ be a \emph{collection} of $r$ mutually commuting stabilizer hypergroups. Then, a  quantum state $\ket{\psi}$ is called a \emph{stabilizer state} stabilized by   $\{\mathcal{S}_i^{\lambda_i}\}_{i=1}^{r}$ if, up to normalization and global phases, it is the unique non-zero solution to the system of spectral equations
\begin{equation}\label{eq:Stabilizer State Condition}
U\ket{\psi}= \lambda_i(U) \ket{\psi},\quad\textnormal{for all }U\in\mathcal{S}_i^{\lambda_i},\, i=1,\ldots, r.
\end{equation} 
By definition, every function $\lambda_i$ is further constrained to be a \emph{character} of $\mathcal{S}_i^{\lambda_i}$. This is necessary for the system  (\ref{eq:Stabilizer State Condition}) to admit nontrivial solutions.\footnote{Let $UV=\sum_{W}s_{UV}^{W} W$ for any $U,V\in\mathcal{S}_i^{\lambda_i}$ with   structure constants  $s_{U,V}^{W}$. Then (\ref{eq:Stabilizer State Condition}) implies  $UV\ket{\psi}=\lambda(U)\lambda(V)\ket{\psi}=\sum_{W}s_{UV}^{W} \lambda(W)\ket{\psi}$. Since $\ket{\psi}\neq 0$, every non-zero $\lambda$ must be a character (by definition).}
\end{definition}
In this work we focus on \emph{pure} stabilizer states and do not discuss mixed ones. 

Though, in the PSF and in the group GSF \cite{Gottesman_PhD_Thesis,Gottesman99_HeisenbergRepresentation_of_Q_Computers,Gottesman98Fault_Tolerant_QC_HigherDimensions,VDNest_12_QFTs,BermejoVega_12_GKTheorem,BermejoLinVdN13_Infinite_Normalizers,BermejoLinVdN13_BlackBox_Normalizers}, stabilizer groups can always be described efficiently in terms of generators or matrix representations of morphisms, the existence of efficient descriptions is hard to prove  in the hypergroup setting. The stabilizer hypergroups $\mathcal{S}^\lambda$ in our paper (see section \ref{sect:Normal Form CSS hypergroup Stablizier States}) have efficient  $\polylog{|\mathcal{T}|}$-size classical descriptions (where $|\mathcal{T}|$ is  the  dimension of the  Hilbert space $\mathcal{H_T}$) if poly-size descriptions for the subhypergroups of $\mathcal{T}$ and $\mathcal{T^*}$ are promised to exist. We highlight that this latter condition is fulfilled for many hypergroups of interest, including  conjugacy class and character hypergroups, and anyonic fusion rule theories\footnote{All examples mentioned belong to a class of so-called \emph{resonance} hypergroups $\mathcal{T}$ that  have integral weights, $\w{a}$, and fulfill a Lagrange theorem \cite{Wildberger_Lagrange}, which says that $\varpi_\mathcal{N}$ for any subhypergroup $\mathcal{N\subset T}$ always divides $\varpi_\mathcal{T}$. As in the finite group case \cite{VDNest_12_QFTs,BermejoVega_12_GKTheorem}, this implies that a random set $\{a_1,\ldots,a_m\}\subset\mathcal{T}$ generates $\mathcal{T}$ via hyperoperations with  high probability $\Omega(1-\tfrac{1}{2^m})$.}. The stabilizer hypergroups obtained from all those cases provide a powerful means to describe quantum many-body states that are uniquely defined via an equation of the form (\ref{eq:Stabilizer State Condition}).

The definition of stabilizer hypergroup and state generalizes the standard notions used in the PSF and the Group Stabilizer Formalism. At the same time, our hypergroup Paulis  also have novel interesting  properties that are  explained next.

\begin{comparison}[\textbf{Relationship to group stabilizer states.}]
All qubit, qudit, and abelian-group  stabilizer states \cite{Gottesman_PhD_Thesis,Gottesman99_HeisenbergRepresentation_of_Q_Computers,Gottesman98Fault_Tolerant_QC_HigherDimensions,VDNest_12_QFTs,BermejoVega_12_GKTheorem} are instances of hypergroup stabilizer states over an abelian hypergroup $\overline{G}$, where $G$ chosen to be a group of the form $\Integers_2^m$, $\Integers_{d}^m$, and $\Integers_{N_1}\times \cdots\times \Integers_{N_m}$ (respectively) with $\lambda(U)=+1$. Hypergroup Pauli operators and stabilizer hypergroups over $\overline{G}$ (\ref{eq:Pauli operators DEFINITION}) become standard Pauli operators and stabilizer groups over $G$ (in the notation of \cite{BermejoVega_12_GKTheorem}; cf.\ section 3 therein).
\end{comparison}

\begin{comparison}[\textbf{Subtleties of hypergroup Pauli operators.}] Interestingly, in spite of having some  Pauli-like mathematical properties (section \ref{sect:Pauli Operators}),  hypergroup Pauli operators are \emph{not as simple} as group Pauli operators \cite{Gottesman_PhD_Thesis,Gottesman99_HeisenbergRepresentation_of_Q_Computers,Gottesman98Fault_Tolerant_QC_HigherDimensions,VDNest_12_QFTs,BermejoVega_12_GKTheorem}: namely, they are not necessarily \emph{unitary} and no longer \emph{monomial} (1-sparse) matrices because the hyperoperations in (\ref{eq:Pauli operators DEFINITION})  are non-invertible and return multiple outcomes. The absence of these two properties is reflected in definition \ref{def:Stabilizer Hypergroup}. In our formalism,  stabilizer \emph{hypergroups} $\mathcal{S}^\lambda$ of commuting Paulis do not  necessarily form \emph{groups}. Stabilizer states are no longer restricted to be $+1$-eigenstates of hypergroup Pauli operators $U\in\mathcal{S}^\lambda$, as in \cite{Gottesman_PhD_Thesis,Gottesman99_HeisenbergRepresentation_of_Q_Computers,Gottesman98Fault_Tolerant_QC_HigherDimensions,VDNest_12_QFTs,BermejoVega_12_GKTheorem}, for Pauli operators can  now have zero eigenvalues\footnote{This follows from (\ref{eq:Pauli operators DEFINITION} because nonabelian group  and hypergroup characters can take zero values \cite{Wildberger97_Duality_Hypergroups,Amini2011fourier}.}. This allows us to include more states in the formalism.
\end{comparison}
\begin{comparison}[\textbf{Non-commutativity up to a phase.}]
Hypergroup Paulis do not satisfy an identity of the form $\PZ{\mathcal{T}}(\mathcal{X}_\mu)\PX{\mathcal{T}}(a)= \mathcal{X}_\mu(a) \PX{\mathcal{T}}(a)\PZ{\mathcal{T}}(\mathcal{X}_\mu)$ in general, although this is the case when $\mathcal{T}$ is a group \cite{VDNest_12_QFTs,BermejoVega_12_GKTheorem}. In our setting, this is fulfilled only in some special cases, e.g., when either $a$ or $\mathcal{X}_\mu$ is an invertible hypergroup element  (theorem \ref{thm:Evolution of Stabilizer States}, eq.\ \ref{eq:Pauli gates are Clifford}) or when $\mathcal{X}_\mu(a){=}1$ (where $\PZ{\mathcal{T}}(\mathcal{X}_\mu)$ and $\PX{\mathcal{T}}(a)$ \emph{commute} due to lemma \ref{lemma:Commutativity of Paulis}). 
\end{comparison}
\begin{comparison}[\textbf{Multiple stabilizer hypergroups}] The reason why we use multiple stabilizer hypergroups $\{\mathcal{S}_i^{\lambda_i}\}$ instead of merging them into a single commutative algebra is that finding a basis with hypergroup structure for the latter object is not a simple  task\footnote{We have not investigated this problem nor whether a hypergroup basis can always be found.}. On the other hand, we can easily keep track and exploit the available  hypergroup structures by simply storing a poly-size list of pairwise commuting stabilizer hypergroups.\footnote{Throughout the paper $r$ will always be poly-sized and all  examples  we give (section \ref{sect:Normal Form CSS hypergroup Stablizier States}) have $r\leq 2$.}
\end{comparison}
The next two results imply that any intermediate quantum state of a hypergroup normalizer circuit (section \ref{sect: Circuit Model}) computation is a hypergroup stabilizer state and, hence, can be characterized concisely as a joint-eigenstate of some commuting hypergroup Pauli operators. This observation motivates our further development of these concepts.

\begin{claim}[\textbf{Standard basis states}]\label{lemma:Standard Basis States are Stabilizer States}  Conjugacy-class and character states (\ref{eq:Quantum Fourier Transform over Hypergroup T}) are instances of hypergroup stabilizer states stabilized by \emph{single} stabilizer hypergroups.
\end{claim}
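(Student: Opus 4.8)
The plan is to exhibit, for each standard basis state, a \emph{single} stabilizer hypergroup whose joint eigenspace is one-dimensional and spanned by that state, so that the state satisfies Definition~\ref{def:Stabilizer State} with $r=1$.

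\emph{Element states.} For $\ket{a}$ with $a\in\mathcal{T}$, I would take $\mathcal{S}:=\{\PZ{\mathcal{T}}(\mathcal{X}_\mu):\mathcal{X}_\mu\in\mathcal{T}^*\}$. By (\ref{eq:Pauli Z Properties}) these operators pairwise commute and form a hypergroup isomorphic to $\mathcal{T}^*$, so $\mathcal{S}$ is a legitimate stabilizer hypergroup. Set $\lambda(\PZ{\mathcal{T}}(\mathcal{X}_\mu)):=\mathcal{X}_\mu(a)$. Since $\PZ{\mathcal{T}}(\mathcal{X}_\mu)$ is diagonal in $\mathsf{B}_{\mathcal{T}}$ by (\ref{eq:Pauli operators DEFINITION}), the eigenvalue equation $\PZ{\mathcal{T}}(\mathcal{X}_\mu)\ket{a}=\mathcal{X}_\mu(a)\ket{a}$ holds immediately. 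To see that $\lambda$ is a character of $\mathcal{S}\cong\mathcal{T}^*$, as the definition requires, I would observe that the evaluation map $\mathcal{X}_\mu\mapsto\mathcal{X}_\mu(a)$ is multiplicative because the product in $\mathcal{T}^*$ is the pointwise product of functions; concretely $\mathcal{X}_\mu(a)\mathcal{X}_\nu(a)=\sum_\gamma m_{\mu\nu}^\gamma \mathcal{X}_\gamma(a)$ together with $\overline{\mathcal{X}_\mu}(a)=\overline{\mathcal{X}_\mu(a)}$ are exactly the two conditions of (\ref{eq:Character DEFINITION}) (indeed $\lambda$ is, up to conjugation, the dual character of $a$ from (\ref{eq:Hypergroup Duality})).

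\emph{Uniqueness (element case).} This is the substantive step. Writing a candidate solution as $\ket{\psi}=\sum_{b}c_b\ket{b}$, the equations $\PZ{\mathcal{T}}(\mathcal{X}_\mu)\ket{\psi}=\mathcal{X}_\mu(a)\ket{\psi}$ force $c_b(\mathcal{X}_\mu(b)-\mathcal{X}_\mu(a))=0$ for every $b$ and every $\mu$. Hence any $b$ with $c_b\neq 0$ satisfies $\mathcal{X}_\mu(b)=\mathcal{X}_\mu(a)$ for all $\mathcal{X}_\mu\in\mathcal{T}^*$. The hypergroup duality (\ref{eq:Hypergroup Duality}), which makes $a\mapsto\widetilde{\mathcal{X}_a}$ an isomorphism $\mathcal{T}\xrightarrow{\sim}\mathcal{T}^{**}$, implies that the characters of $\mathcal{T}$ separate its points, so $b=a$. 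Thus $\ket{\psi}\propto\ket{a}$, and $\ket{a}$ is the unique joint eigenstate.

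\emph{Character states.} I would argue dually for $\ket{\mathcal{X}_\mu}$ using the X-Paulis. By (\ref{eq:Pauli X Properties}) the family $\{\PX{\mathcal{T}}(a):a\in\mathcal{T}\}$ is commuting, diagonal in $\mathsf{B}_{\mathcal{T}^*}$ with $\PX{\mathcal{T}}(a)\ket{\mathcal{X}_\mu}=\mathcal{X}_\mu(a)\ket{\mathcal{X}_\mu}$, and forms a hypergroup isomorphic to $\mathcal{T}$. Taking this as the stabilizer hypergroup with $\lambda(\PX{\mathcal{T}}(a)):=\mathcal{X}_\mu(a)$ --- which is literally the character $\mathcal{X}_\mu$ of $\mathcal{T}$, hence a valid stabilizer function --- the eigenvalue equations hold by (\ref{eq:Pauli X Properties}); and a solution $\sum_\nu c_\nu\ket{\mathcal{X}_\nu}$ forces $\mathcal{X}_\nu=\mathcal{X}_\mu$ as functions whenever $c_\nu\neq 0$, so $\nu=\mu$. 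In both cases one stabilizer hypergroup suffices, which is the assertion. The only delicate point is the uniqueness/separation argument, and it is settled cleanly by duality; I note that the properties (\ref{eq:Pauli X Properties}) of the X-Paulis, though fully justified only in Theorem~\ref{thm:Evolution of Stabilizer States}, are stated earlier and may be invoked here.
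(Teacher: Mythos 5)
Your proof is correct, and it reaches the claim by a genuinely more elementary route than the paper does. The paper never argues the claim directly: it defers it to the stronger normal-form result, theorem~\ref{thm:Normal Form CSS States}, whose proof runs through the projector $P_X$, the generalized orthogonality relations (\ref{eq:Character Orthogonality Subhypergroups}), and lemma~\ref{lemma:Quotient-Annihilator Isomorphisms}; the claim is then recovered in the first example of section~\ref{sect:Normal Form CSS hypergroup Stablizier States} by specializing theorem~\ref{thm:Normal Form CSS States}(b) to trivial $\mathcal{N}$ (element states) and, dually, trivial $\mathcal{N}^\perp$ (character states). You choose exactly the same stabilizer hypergroups and stabilizer functions as that example --- $\{\PZ{\mathcal{T}}(\mathcal{X}_\mu)\}_{\mathcal{X}_\mu\in\mathcal{T}^*}$ with $\lambda(\PZ{\mathcal{T}}(\mathcal{X}_\mu))=\mathcal{X}_\mu(a)$ for $\ket{a}$, and $\{\PX{\mathcal{T}}(a)\}_{a\in\mathcal{T}}$ with $\lambda(\PX{\mathcal{T}}(a))=\mathcal{X}_\mu(a)$ for $\ket{\mathcal{X}_\mu}$ --- but where the paper imports uniqueness from the normal-form theorem, you prove it directly: expand a putative joint eigenstate in the basis in which the stabilizers are diagonal, and conclude from the fact that characters separate points (injectivity of the duality map (\ref{eq:Hypergroup Duality})) in the element case, respectively from the fact that distinct characters are distinct functions in the character case. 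What your route buys is self-containedness and transparency: none of the non-monomial/non-unitary machinery is needed, and the argument works verbatim for any abelian hypergroup. What the paper's detour buys is reuse: theorem~\ref{thm:Normal Form CSS States} simultaneously handles coset states (\ref{eq:Hypergroup Coset State}) and all the CSS states needed for the simulation and algorithmic results, of which the standard basis states are only the degenerate special case. Your one dependency --- invoking (\ref{eq:Pauli X Properties}), which is stated in section~\ref{sect:Pauli Operators} but only established through theorem~\ref{thm:Evolution of Stabilizer States} and appendix~\ref{app:A} --- is shared by the paper's own example, so it is a legitimate citation rather than a gap, and you were right to flag it explicitly.
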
 

\begin{theorem}[\textbf{Evolution of stabilizer states.}]\label{thm:Evolution of Stabilizer States}
Normalizer gates map hypergroup Pauli operators to new hypergroup Pauli operators under conjugation and, therefore, transform hypergroup stabilizer states into stabilizer states. It follows from this and the previous claim that the quantum state of a normalizer circuit is always a stabilizer state. 
\end{theorem}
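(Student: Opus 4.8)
The plan is to work in the Heisenberg picture and exploit the fact that all four normalizer gates are \emph{unitary}: the $\PX{}$/$\PZ{}$ Pauli gates only involve invertible hypergroup elements and characters, so they permute the basis or apply global phases; automorphism gates permute the basis; quadratic phase gates are diagonal unitaries; and the QFTs are unitary by character orthogonality (\ref{eq:Character Orthogonality}). Consequently, for any normalizer gate $U$ the conjugation $V\mapsto UVU^{\dagger}$ is a $*$-automorphism of the full matrix algebra $B(\Comp_{\mathcal{T}})$, and hence preserves sums, products, commutators and therefore \emph{all} hypergroup structure constants $s_{U_1U_2}^{W}$ among the Paulis. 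Thus, once I show that each gate maps the generating operators $\PX{\mathcal{T}}(a)$ and $\PZ{\mathcal{T}}(\mathcal{X}_\mu)$ to products of Paulis, it follows that it maps every generalized Pauli, and every stabilizer hypergroup $\mathcal{S}_i^{\lambda_i}$, to another one (with stabilizer character $\lambda_i'(UVU^{\dagger}):=\lambda_i(V)$, still a character because conjugation is an isomorphism). The state transformation is then immediate: if $V\ket{\psi}=\lambda_i(V)\ket{\psi}$ for all $V\in\mathcal{S}_i^{\lambda_i}$, then $(UVU^{\dagger})(U\ket{\psi})=\lambda_i(V)\,U\ket{\psi}$, and uniqueness of the joint eigenvector in (\ref{eq:Stabilizer State Condition}) is preserved since $U$ is a bijection of its solution space. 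The last sentence then follows by induction on circuit depth, with the base case supplied by Claim~\ref{lemma:Standard Basis States are Stabilizer States}.

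The technical heart is therefore the four conjugation rules, and I would first record the underlying spectral facts (\ref{eq:Pauli Z Properties}) and (\ref{eq:Pauli X Properties}): the $\PZ{}$'s are diagonal in the element basis and the $\PX{}$'s are diagonal in the character basis with $\PX{\mathcal{T}}(a)\ket{\mathcal{X}_\mu}=\mathcal{X}_\mu(a)\ket{\mathcal{X}_\mu}$. The latter is a short computation: expand $\ket{\mathcal{X}_\mu}$ in the element basis via the inverse QFT, apply the definition (\ref{eq:Pauli operators DEFINITION}), and simplify using the reversibility identity (\ref{eq:Reversibility Property}) (to turn $\w{b}n_{a,b}^{c}$ into $\w{c}n_{\overline{a},c}^{b}$) together with the character law $\overline{\mathcal{X}_\mu}(\overline{a}c)=\mathcal{X}_\mu(a)\overline{\mathcal{X}_\mu}(c)$. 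From these spectral facts the QFT rule drops out: because $\mathcal{F}_{\mathcal{T}}$ intertwines the element and character bases, conjugating the element-diagonal $\PZ{\mathcal{T}}(\mathcal{X}_\mu)$ yields the operator acting by the hyperoperation of $\mathcal{T}^*$, namely $\PX{\mathcal{T}^*}(\mathcal{X}_\mu)$, and conjugating the character-diagonal $\PX{\mathcal{T}}(a)$ yields $\PZ{\mathcal{T}^*}(a)$; the label hypergroup flips $\mathcal{T}\to\mathcal{T}^*$ exactly as the update rule prescribes, and partial QFTs are handled factorwise using the tensor form of the Paulis. The automorphism case is cleanest: since a hypergroup automorphism $\alpha$ preserves weights and structure constants, direct substitution gives $U_\alpha\PX{\mathcal{T}}(a)U_\alpha^{\dagger}=\PX{\mathcal{T}}(\alpha(a))$ and $U_\alpha\PZ{\mathcal{T}}(\mathcal{X}_\mu)U_\alpha^{\dagger}=\PZ{\mathcal{T}}(\mathcal{X}_\mu\circ\alpha^{-1})$, the latter again a character. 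The Pauli-gate case reduces to the partial commutation relation $\PZ{\mathcal{T}}(\mathcal{X}_\mu)\PX{\mathcal{T}}(a)=\mathcal{X}_\mu(a)\,\PX{\mathcal{T}}(a)\PZ{\mathcal{T}}(\mathcal{X}_\mu)$, valid whenever $a$ or $\mathcal{X}_\mu$ is invertible, combined with the fact that the $\PX{}$'s commute among themselves and the $\PZ{}$'s commute among themselves.

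The main obstacle I expect is the quadratic-phase gate, where I must show $D_\xi\PX{\mathcal{T}}(a)D_\xi^{\dagger}$ is again a Pauli. Evaluating on $\ket{b}$ gives $\sum_{c}\sqrt{\w{b}/\w{c}}\,n_{a,b}^{c}\,\xi(c)\overline{\xi(b)}\ket{c}$, and the difficulty peculiar to hypergroups is that the sum runs over the \emph{several} outcomes $c\in ab$, each carrying its own phase $\xi(c)$; unless these phases collapse into a single character evaluated at $b$, the result is not of Pauli form. The resolution is a convexity observation: by the normalization axiom, $\sum_{c}n_{a,b}^{c}=1$ with $n_{a,b}^{c}\ge 0$, so the linearly extended value $\xi(ab)=\sum_{c}n_{a,b}^{c}\xi(c)$ is a convex combination of the unit-modulus numbers $\{\xi(c):c\in ab\}$; since $\xi$ is quadratic, $|\xi(ab)|=|\xi(a)\xi(b)B(a,b)|=1$, and a convex combination of points on the unit circle has modulus $1$ only when all the points coincide. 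Hence $\xi$ is \emph{constant} on the support of each product, $\xi(c)=\xi(a)\xi(b)B(a,b)$ for every $c\in ab$, so that $D_\xi\PX{\mathcal{T}}(a)D_\xi^{\dagger}=\xi(a)\,\PX{\mathcal{T}}(a)\PZ{\mathcal{T}}(\beta_a)$ with $\beta_a:=B(a,\cdot)\in\mathcal{T}^*$ a genuine character by the bicharacter property (the pair $D_\xi,\PZ{\mathcal{T}}(\mathcal{X}_\mu)$ commutes trivially since both are diagonal). With all four rules established, the abstract automorphism argument of the first paragraph closes the proof.
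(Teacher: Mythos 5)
Your proposal is correct and takes essentially the same route as the paper's proof: the conjugation rules for the four gate types are established the same way — the reversibility identity (\ref{eq:Reversibility Property}) drives the QFT/spectral computation, the triangle-inequality (convexity) argument forces $\xi$ to be constant on the support of each product $ab$ exactly as in the paper's appendix, automorphisms follow from invariance of weights and structure constants, and Pauli-gate conjugation reduces to the invertible-element commutation relation — after which the passage from \emph{Paulis map to Paulis} to \emph{stabilizer states map to stabilizer states} is the same formal step. The only slip is a dropped involution in the QFT rule (conjugating $\PZ{\mathcal{T}}(\Hchi_\mu)$ by $\Fourier{\mathcal{T}}$ yields $\PX{\mathcal{T}^*}(\overline{\Hchi_\mu})$, not $\PX{\mathcal{T}^*}(\Hchi_\mu)$), which is a harmless convention error since the image is a hypergroup Pauli operator either way.
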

Theorem \ref{thm:Evolution of Stabilizer States} extends a theorem of Van den Nest \cite{VDNest_12_QFTs} for abelian group stabilizer states.

In order to prove claim \ref{lemma:Standard Basis States are Stabilizer States}, theorem \ref{thm:Evolution of Stabilizer States}, and  many of the main results  in the next sections, we will develop a new kind of hypergroup  stabilizer formalism techniques that can cope with the \emph{\textbf{non-monomiality}} and the \emph{\textbf{non-unitarity}} of hypergroup Pauli operators. A central part of the paper will be dedicated exclusively to this end. We stress the necessity to develop such techniques since currently available stabilizer-formalism methods---including the PSF \cite{Gottesman_PhD_Thesis,Gottesman99_HeisenbergRepresentation_of_Q_Computers,Gottesman98Fault_Tolerant_QC_HigherDimensions,AaronsonGottesman04_Improved_Simul_stabilizer,dehaene_demoor_coefficients,dehaene_demoor_hostens,VdNest10_Classical_Simulation_GKT_SlightlyBeyond,deBeaudrap12_linearised_stabiliser_formalism}, the Group Stabilizer Formalism \cite{VDNest_12_QFTs,BermejoVega_12_GKTheorem,BermejoLinVdN13_Infinite_Normalizers,BermejoLinVdN13_BlackBox_Normalizers},  the general  Monomial Stabilizer Formalism of Van den Nest \cite{nest_MMS}, and the recent  XS stabilizer formalism \cite{NiBuerschaperVdNest14_XS_Stabilizers}---can not be applied in our setting as they \emph{critically exploit  the monomiality/unitarity of stabilizer operators} for central tasks such as simulating Clifford/Normalizer operations, analyzing stabilizer state and code properties (e.g., code dimension,  code support), and giving normal forms for stabilizer states\footnote{The role of monomiality and unitarity in the PSF has been extensively discussed in \cite{nest_MMS}.}. The lack of these beneficial properties requires a change of paradigm in our setting. 

To prove our claim \ref{lemma:Standard Basis States are Stabilizer States}, we will show a stronger result (\textbf{theorem \ref{thm:Normal Form CSS States}} below), which gives a \emph{normal form} for  hypergroup stabilizer states; we outline the proof of theorem \ref{thm:Evolution of Stabilizer States} below, with details referred to   appendix \ref{app:A}.

\begin{proof}[Proof of theorem \ref{thm:Evolution of Stabilizer States}, part I] We show that normalizer gates  transform X- and Z-type Pauli operators  over $\mathcal{T}$  into new  Pauli operators (which may involve products of X, Z Paulis) under conjugation. This result extends to arbitrary products of these operators. (Compare to the abelian-group case \cite{VDNest_12_QFTs}.)

Specifically, for any invertible element $s\in\mathcal{T}$, invertible character $\mathcal{X}_\varsigma\in\mathcal{T^*}$, automorphism $\alpha$, quadratic function $\xi$, we can calculate this action for the normalizer gates $Z_\mathcal{T}(\mathcal{X}_\mu)X_\mathcal{T}(a)$, $U_\alpha$, $D_\xi$, and the hypergroup QFT:
\begin{align}
\PX{\mathcal{T}}(a)& \xrightarrow{\:Z_\mathcal{T}(\mathcal{X}_\varsigma)X_\mathcal{T}(s)\:} \mathcal{X}_{{\varsigma}}(a) \PX{\mathcal{T}}(a),& \PZ{\mathcal{T}}(\mathcal{X_\mu}) &\xrightarrow{\:Z_\mathcal{T}(\mathcal{X}_\varsigma)X_\mathcal{T}(s)\:} \mathcal{X}_{{\mu}}(\overline{s})\PZ{\mathcal{T}}(\mathcal{X_\mu}),
\label{eq:Pauli gates are Clifford}\\
\PX{\mathcal{T}}(a)&\xrightarrow{\:U_\alpha\:}  \PX{\mathcal{T}}(\alpha(a)), &  \PZ{\mathcal{T}}(\mathcal{X_\mu})&\xrightarrow{\:U_\alpha\:}  \PZ{\mathcal{T}}(\mathcal{\PX{\alpha^{-*}(\mu)}} ), \label{eq:Automorphism gates are Clifford} \\
\PX{\mathcal{T}}(a)&\xrightarrow{\:D_\xi\:}  \xi(a) \PX{\mathcal{T}}(a) \PZ{\mathcal{T}}(\beta{(a)}), &   \PZ{\mathcal{T}}(\mathcal{\PX{\mu}})&\xrightarrow{\:D_\xi\:}   \PZ{\mathcal{T}}(\mathcal{\PX{\mu}}), \label{eq:Quadratic Phase gates are Clifford}\\
\PX{\mathcal{T}}(a)&\xrightarrow{\:\mathrm{QFT}\:}  \PZ{\mathcal{T}^*}(a), &  \PZ{\mathcal{T}}(\mathcal{X_\mu})&\xrightarrow{\:\mathrm{QFT}\:} \PX{\mathcal{T}^*}(\overline{\Hchi_\mu}).\label{eq:QFTs are Clifford}
\end{align}
When $\Comp_{\mathcal{T}}=\Comp_{\mathcal{T}_1}\otimes\cdots \otimes \Comp_{\mathcal{T}_m}$, the partial QFT over $\mathcal{T}_i$ simply transforms the $i$th tensor factor of the  Pauli operators according to  (\ref{eq:QFTs are Clifford}). In (\ref{eq:Quadratic Phase gates are Clifford}),  $\beta$ is a homomorphism from $\mathcal{T}$ to the subhypergroup of invertible characters $\mathcal{T}_\mathrm{inv}^*$ that depends on $\xi$; in (\ref{eq:Automorphism gates are Clifford}), $\alpha^{-*}$ is the inverse of the \emph{dual automorphism $\alpha^*$} \cite{McMullen79_Algebraic_Theory_Hypergroups}:

\begin{definition}[\textbf{Dual automorphism \cite{McMullen79_Algebraic_Theory_Hypergroups}}]  \label{eq:Dual Automorphism DEFINITION}
The  \emph{dual automorphism} of $\alpha$, denoted  $\alpha^*$, is the automorphism of $\mathcal{T}^*$  that takes $\mathcal{X}_\mu$ to the  character  $\mathcal{X}_{\alpha^{*}(\mu)} := \mathcal{X}_{\mu} \circ \alpha$  for fixed $\mathcal{X}_{\mu}$.\footnote{This is a morphism because $\mathcal{X}_{\alpha^*(\mu)}\mathcal{X}_{\alpha^*(\nu)}{=}\left(\mathcal{X}_{\mu}\circ\alpha\right)\left(\mathcal{X}_{\nu}\circ\alpha\right){=}\sum_{\gamma}m_{\mu\nu}^\gamma \left(\mathcal{X}_{\gamma}\circ \alpha\right){=} \sum_{\gamma}m_{\mu\nu}^\gamma\,\mathcal{X}_{\alpha^*(\gamma)}$.}
\end{definition}
Proving (\ref{eq:Pauli gates are Clifford}-\ref{eq:QFTs are Clifford}) involves  bulky yet beautifully structured hypergroup calculations that are carried out in appendix \ref{app:A}.
\end{proof}
It is worth noting  that  normalizer gates transform Pauli operators over $\mathcal{T}$ into Pauli operators over  $\mathcal{T}$ if they are not QFTs and into Pauli operators over  $\mathcal{T}^*$ otherwise\footnote{Recall from section \ref{sect: Circuit Model} that the hypergroup label $\mathcal{T}$ keeps track of the basis in which basis (\ref{eq:Quantum Fourier Transform over Hypergroup T}) measurements are performed and indicates which Pauli operators are diagonal in each basis.}.

\subsection{A normal form for stabilizer states and examples}
\label{sect:Normal Form CSS hypergroup Stablizier States}
In this final subsection we give examples and a normal form for a class of hypergroup states that generalize the well-known notion of Calderbank-Shor-Steane (CSS) stabilizer states \cite{CalderbankShor_good_QEC_exist,Steane1996_Multiple_Particle_Interference_QuantumErrorCorrection,Calderbank97_QEC_Orthogonal_Geometry}:

\begin{definition}[\textbf{CSS stabilizer state}]\label{def:CSS state}
A hypergroup stabilizer state $\ket{\psi}$ over $\mathcal{T}$  is said of CSS type if it is uniquely stabilized by two mutually commuting stabilizer hypergroups $\mathcal{S}_Z^{\lambda_z}$, $\mathcal{S}_X^{\lambda_x}$  consisting only of Z and X Pauli operators respectively.
\end{definition}
The standard definition of CSS state is recovered by setting $\mathcal{T}=\Integers_2^{n}$ (cf.\ the example sections in \cite{VDNest_12_QFTs,BermejoVega_12_GKTheorem}). For the sake of brevity, we assume  $\mathcal{S}_Z^{\lambda_z}$ and  $\mathcal{S}_X^{\lambda_x}$ to be maximal mutually commuting hypergroups in this section\footnote{This maximality assumption is not necessary in our derivation but it shortens the proofs.}. With these requirements, lemma \ref{lemma:Commutativity of Paulis}  imposes that $\mathcal{S}_Z^{\lambda_z}$,  $\mathcal{S}_X^{\lambda_x}$ must be of the following form: 
\begin{align}
\mathcal{S}_X^{\lambda_x}&=\{\PX{\mathcal{T}}(a), a\in \mathcal{N}\}, & \lambda_x(\PX{\mathcal{T}}(a))&=\mathcal{X}_\varsigma({a}) \notag \\ \mathcal{S}_Z^{\lambda_z}&=\{\PZ{\mathcal{T}}(\mathcal{X}_\mu), \mathcal{X}_\mu\in \mathcal{N}^\perp\}, & \lambda_z(\PZ{\mathcal{T}}(\mathcal{X}_\mu))&= \mathcal{X}_\mu(s),\label{eq:Stabilizer Hypergroup CSS State}
\end{align}
where $s\in\mathcal{T}$, $\mathcal{X}_\varsigma\in\mathcal{T}^*$, $\mathcal{N} \le \mathcal{T}$ is a subhypergroup, and $\mathcal{N}^\perp$ is the \emph{annihilator} of $\mathcal{N}$ (\ref{eq:annhilator}).

We are now ready to prove the main technical result of this section, theorem \ref{thm:Normal Form CSS States}, which  characterizes the set of CSS stabilizer states and leads to specific examples.

\begin{theorem}[\textbf{Normal forms for CSS states}]\label{thm:Normal Form CSS States}
\textbf{(a)} The quantum states stabilized by  $\{\mathcal{S}_X^{\lambda_x},\mathcal{S}_Z^{\lambda_z}\}$ from (\ref{eq:Stabilizer Hypergroup CSS State}) are those in the subspace
\[ \mathcal{V_S}:=\mathrm{span}\{\psi_y,\,y{\,\in\,} s\mathcal{N}\}=\mathrm{span}\{\widehat{\psi}_\nu,\,\mathcal{X}_\nu{\,\in\,}\mathcal{X_\varsigma N^\perp}\}, \]
where $\psi_y$ and $\widehat{\psi}_\nu$ are functions supported on $s\mathcal{N}$ and $\mathcal{X_\varsigma N^\perp}$, respectively, and defined by
\begin{equation}\label{eq:Consistency CSS}
\psi_y(x):= \sqrt{\w{x}} \left(\sum_{\substack{b}\in\mathcal{N} }n_{ x, {\overline{y}}}^{ b}\mathcal{X}_{\overline{\varsigma}}(b)\right)\ \text{ and }\ \ \widehat{\psi}_\nu(\mu):=\sqrt{\w{\mathcal{X}_\mu}} \left(\sum_{\substack{\beta}\in\mathcal{N}^\perp }m_{ \mu, {\overline{\nu}}}^{ \beta}\mathcal{X}_{{\beta}}(s)\right).
\end{equation}
A state $\ket{\psi}$ is \emph{uniquely} stabilized by $\{\mathcal{S}_X^{\lambda_x},\mathcal{S}_Z^{\lambda_z}\}$ iff $\mathrm{dim}(\mathcal{V_S})=1$. \textbf{(b)} Furthermore, if either $s$ or $\mathcal{X}_\varsigma$ is invertible, then $\{\mathcal{S}_X^{\lambda_x},\mathcal{S}_Z^{\lambda_z}\}$  stabilizes a \emph{unique} state  of form\footnote{Cf.\ section \ref{sect:Subhypergroups, Quotients} for definitions of  $\w{\Hchi_{\overline{\varsigma}} \mathcal{N}^\perp}, \ws{s \mathcal{N}},\varpi_{s\mathcal{N}},\varpi_{\varsigma\mathcal{N}^\perp}$.} \begin{align}\label{eq:Normal Form CSS state}
\ket{\psi}&=\sum_{x\in s \mathcal{N}} \sqrt{\frac{\w{x} \w{\Hchi_{\overline{\varsigma}} \mathcal{N}^\perp} }{\varpi_{s\mathcal{N}}}}\, \mathcal{X}_{\overline{\varsigma}}(x) \ket{x}, \qquad \Fourier{\mathcal{T}}\ket{\psi} = \sum_{\mathcal{X}_\mu \in \mathcal{X}_\varsigma \mathcal{N}^\perp} \sqrt{\frac{\w{\mathcal{X}_\mu} \ws{s\mathcal{N}}}{\varpi_{\varsigma \mathcal{N}^\perp}}}\, \mathcal{X}_\mu(s) \ket{\mathcal{X}_\mu}.
\end{align}
\end{theorem}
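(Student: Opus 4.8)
The plan is to realize the joint stabilized subspace $\mathcal{V_S}$ as the image of a product of two commuting projectors obtained by averaging the Pauli operators over the stabilizer hypergroups. Concretely, I would introduce $P_Z := \varpi_{\mathcal{N}^\perp}^{-1}\sum_{\mathcal{X}_\mu\in\mathcal{N}^\perp}\w{\mathcal{X}_\mu}\,\overline{\mathcal{X}_\mu(s)}\,\PZ{\mathcal{T}}(\mathcal{X}_\mu)$ and $P_X := \varpi_{\mathcal{N}}^{-1}\sum_{a\in\mathcal{N}}\w{a}\,\overline{\mathcal{X}_\varsigma(a)}\,\PX{\mathcal{T}}(a)$. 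Because $\PZ{\mathcal{T}}(\mathcal{X}_\mu)$ is diagonal in the element basis and $\PX{\mathcal{T}}(a)$ is diagonal in the character basis (\ref{eq:Pauli X Properties}), the two subhypergroup orthogonality relations (\ref{eq:Character Orthogonality Subhypergroups}) show at once that $P_Z\ket{x}=\ws{s\mathcal{N}}^{-1}\delta_{x\mathcal{N},s\mathcal{N}}\ket{x}$ and $P_X\ket{\mathcal{X}_\mu}=\w{\Hchi_\varsigma\mathcal{N}^\perp}^{-1}\delta_{\mathcal{X}_\mu\mathcal{N}^\perp,\mathcal{X}_\varsigma\mathcal{N}^\perp}\ket{\mathcal{X}_\mu}$. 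Hence $\Pi_Z:=\ws{s\mathcal{N}}P_Z$ and $\Pi_X:=\w{\Hchi_\varsigma\mathcal{N}^\perp}P_X$ are the orthogonal projectors onto $\mathrm{span}\{\ket{x}:x\in s\mathcal{N}\}$ and $\mathrm{span}\{\ket{\mathcal{X}_\mu}:\mathcal{X}_\mu\in\mathcal{X}_\varsigma\mathcal{N}^\perp\}$; evaluating both inclusions on the eigenvalue equations (and using $\mathcal{N}^\perp\cong(\mathcal{T/N})^*$, so that elements of $\mathcal{N}^\perp$ are constant on cosets of $\mathcal{N}$) identifies these subspaces with the $\mathcal{S}_Z$- and $\mathcal{S}_X$-eigenspaces of (\ref{eq:Stabilizer State Condition}). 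Lemma \ref{lemma:Commutativity of Paulis} guarantees that each $\PX{\mathcal{T}}(a)$, $a\in\mathcal{N}$, commutes with each $\PZ{\mathcal{T}}(\mathcal{X}_\mu)$, $\mathcal{X}_\mu\in\mathcal{N}^\perp$, so $\Pi_X\Pi_Z$ is the orthogonal projector onto $\mathcal{V_S}$.

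For part (a) I would then write $\mathcal{V_S}=\Pi_X\big(\mathrm{Im}\,\Pi_Z\big)=\mathrm{span}\{\Pi_X\ket{y}:y\in s\mathcal{N}\}$ and identify $\Pi_X\ket{y}$ with $\ket{\psi_y}$. Expanding $P_X\ket{y}$ from (\ref{eq:Pauli operators DEFINITION}), the coefficient of $\ket{x}$ is $\tfrac{\sqrt{\w{y}}}{\varpi_{\mathcal{N}}}\,\tfrac{1}{\sqrt{\w{x}}}\sum_{a\in\mathcal{N}}\w{a}\overline{\mathcal{X}_\varsigma(a)}\,n_{a,y}^{x}$, and the reversibility identity (\ref{eq:Reversibility Property}) in the form $\w{a}\,n_{a,y}^{x}=\w{x}\,n_{x,\overline y}^{a}$ rewrites the inner sum as $\sqrt{\w{x}}\sum_{a\in\mathcal{N}}n_{x,\overline y}^{a}\mathcal{X}_{\overline\varsigma}(a)$, i.e.\ exactly $\psi_y(x)$ of (\ref{eq:Consistency CSS}); this yields the first spanning set. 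Running the identical computation in the character basis — via $\Pi_Z\,\mathrm{Im}\,\Pi_X$, the dual Paulis, and the dual constants $m_{\mu\nu}^{\gamma}$ — gives $\Pi_Z\ket{\mathcal{X}_\nu}\propto\widehat\psi_\nu$ and hence the second spanning set. The equivalence ``$\ket{\psi}$ uniquely stabilized $\iff\dim\mathcal{V_S}=1$'' is then immediate from the definition of stabilizer state.

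For part (b) I would proceed in three steps. First, a direct verification that the candidate $\ket{\psi}$ of (\ref{eq:Normal Form CSS state}) lies in $\mathcal{V_S}$: its support is inside $s\mathcal{N}$, so it is automatically $\mathcal{S}_Z$-stabilized; for the $\mathcal{S}_X$-condition I compute $\PX{\mathcal{T}}(a)\ket{\psi}$, apply reversibility as $\w{x}\,n_{a,x}^{z}=\w{z}\,n_{\overline a z}^{x}$, and observe that for $a\in\mathcal{N}$, $z\in s\mathcal{N}$ the product $\overline a z$ is supported inside the coset $s\mathcal{N}$, so the inner sum is the full character sum $\mathcal{X}_{\overline\varsigma}(\overline a z)=\mathcal{X}_{\overline\varsigma}(\overline a)\mathcal{X}_{\overline\varsigma}(z)=\mathcal{X}_\varsigma(a)\mathcal{X}_{\overline\varsigma}(z)$, reproducing the eigenvalue $\mathcal{X}_\varsigma(a)$ (this step needs no invertibility). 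Second, to get $\dim\mathcal{V_S}=1$ I compute $\dim\mathcal{V_S}=\tr(\Pi_X\Pi_Z)=\varpi_{\mathcal{T}^*}^{-1}\sum_{x\in s\mathcal{N}}\w{x}\sum_{\mathcal{X}_\mu\in\mathcal{X}_\varsigma\mathcal{N}^\perp}\w{\mathcal{X}_\mu}\abs{\mathcal{X}_\mu(x)}^2$ using the matrix elements (\ref{eq:Quantum Fourier Transform over Hypergroup T}); when $\mathcal{X}_\varsigma$ is invertible, $\abs{\mathcal{X}_\varsigma(x)}=1$ and weights are preserved under multiplication by $\mathcal{X}_\varsigma$, so $\mathcal{X}_\beta\mapsto\mathcal{X}_\varsigma\mathcal{X}_\beta$ reduces the inner sum via the diagonal case of (\ref{eq:Character Orthogonality Subhypergroups}) to $\varpi_{\mathcal{N}^\perp}/\ws{s\mathcal{N}}$, and the identities $\ws{s\mathcal{N}}=\varpi_{s\mathcal{N}}/\varpi_{\mathcal{N}}$, $\varpi_{\mathcal{N}}\varpi_{\mathcal{N}^\perp}=\varpi_{\mathcal{T}}$, and $\varpi_{\mathcal{T}}=\varpi_{\mathcal{T}^*}$ (\ref{eq:Weight Duality}) collapse the whole expression to $1$; invertibility also guarantees $\ket{\psi}\neq0$ since $\mathcal{X}_{\overline\varsigma}(x)\neq0$. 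The case of invertible $s$ follows by the dual argument. Third, the normalization constant follows from $\langle\psi|\psi\rangle=1$ via the same orthogonality relation, while the Fourier image in (\ref{eq:Normal Form CSS state}) follows most cleanly by noting that (\ref{eq:QFTs are Clifford}) sends $\{\mathcal{S}_X^{\lambda_x},\mathcal{S}_Z^{\lambda_z}\}$ to the analogous CSS pair over $\mathcal{T}^*$, so $\Fourier{\mathcal{T}}\ket{\psi}$ is the normal-form state for that dual pair.

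The main obstacle is the first paragraph. Unlike the group case, the averaged operators $P_X,P_Z$ are built from non-unitary, non-monomial Paulis, so it is not a priori clear that they are projectors at all; the argument turns entirely on the generalized subhypergroup orthogonality relations (\ref{eq:Character Orthogonality Subhypergroups}), which force $P_X,P_Z$ to be diagonal with a single, constant nonzero eigenvalue on each target subspace, and on reversibility (\ref{eq:Reversibility Property}), the only tool that rearranges the $X$-Pauli structure constants into the coset-character form of (\ref{eq:Consistency CSS}). A secondary subtlety, absent in the unitary setting, is that characters may vanish, so the candidate state can be the zero vector; this is precisely why part (b) restricts to the case where $s$ or $\mathcal{X}_\varsigma$ is invertible.
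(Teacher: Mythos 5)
Your proposal is correct. For part (a) it is essentially the paper's own argument, only symmetrized: your $\Pi_X$ is (up to the same factor $\w{\Hchi_{\overline{\varsigma}}\mathcal{N}^\perp}$) exactly the operator $P_X$ that the paper averages over $\mathcal{N}$, the identification of the $\mathcal{S}_Z^{\lambda_z}$-stabilized states with $\mathrm{span}\{\ket{x},\,x\in s\mathcal{N}\}$ is the same coset-support step, and your evaluation of $\Pi_X\ket{y}$ via reversibility (\ref{eq:Reversibility Property}) reproduces the paper's computation of $P_X\ket{y}$, including the dual spanning set by the same $\mathcal{T}\leftrightarrow\mathcal{T}^*$ exchange. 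Part (b) is where you take a genuinely different route. The paper proves $\dim\mathcal{V_S}=1$ by showing all spanning functions $\psi_y$, $y\in s\mathcal{N}$, are proportional to one another: character multiplicativity handles $s=e$, and for invertible $s$ the change of variables $x\mapsto\overline{s}x$, $y\mapsto\overline{s}y$ (which preserves weights and structure constants) reduces to that case, so the normal form falls out of any single $\psi_y$. You instead verify directly that the candidate state of (\ref{eq:Normal Form CSS state}) satisfies both families of eigenvalue equations (correctly noting this needs no invertibility, which only enters to ensure the state is nonzero), and you obtain uniqueness from the rank identity $\dim\mathcal{V_S}=\tr(\Pi_X\Pi_Z)=1$, via the weight-preserving bijection $\mathcal{X}_\beta\mapsto\mathcal{X}_\varsigma\mathcal{X}_\beta$ of $\mathcal{N}^\perp$ onto $\mathcal{X}_\varsigma\mathcal{N}^\perp$ and the diagonal case of (\ref{eq:Character Orthogonality Subhypergroups}). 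Your trace argument is basis-free, separates existence from uniqueness, and pinpoints exactly where invertibility is used; the paper's argument buys the explicit normal form and one-dimensionality in a single pointwise calculation. Two details to tighten: for the inclusion of $\mathrm{Im}\,\Pi_X$ in the set of $\mathcal{S}_X^{\lambda_x}$-stabilized states you need part (i) of lemma \ref{lemma:Quotient-Annihilator Isomorphisms} (all characters in one coset of $\mathcal{N}^\perp$ restrict to the same character of $\mathcal{N}$), not the Z-side fact you quote about $\mathcal{N}^\perp$ being constant on cosets of $\mathcal{N}$; and the identity $\varw{\mathcal{N}}\varw{\mathcal{N}^\perp}=\varw{\mathcal{T}}$ is nowhere stated in the paper, so it deserves its one-line derivation from $\mathcal{N}^\perp\cong(\mathcal{T/N})^*$, weight duality (\ref{eq:Weight Duality}) applied to $\mathcal{T/N}$, and $\varw{\mathcal{T/N}}=\varw{\mathcal{T}}/\varw{\mathcal{N}}$.
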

Theorem \ref{thm:Normal Form CSS States} is proven at the end of the section after mentioning a few main applications.

We highlight that, despite our focus on stabilizer \emph{states}, theorem \ref{thm:Normal Form CSS States} can be easily extended to study hypergroup stabilizer \emph{codes}. For instance, in case \textbf{(b)}, we could choose a smaller stabilizer hypergroup $\mathcal{S}_X^{\lambda_x}=\{\PX{\mathcal{T}}(a), a\in \mathcal{K}\}$ with $\mathcal{K\subsetneq N}$ to obtain an  stabilizer code $\mathcal{V_S}$, whose dimension is easy to compute with our techniques\footnote{With minor modifications of our proof of theorem \ref{thm:Normal Form CSS States}, we get that the dimension is the number of cosets of $\mathcal{K}$ inside $s\mathcal{N}$, if $\mathcal{X_\varsigma}$ is invertible, and the number of cosets of $\mathcal{N}^\perp$ inside $\mathcal{X_\varsigma K^\perp}$, if $s$ is invertible.}. Similarly, one could shrink $\mathcal{S}_Z^{\lambda_z}$ or both stabilizer hypergroups at the same time. 

\paragraph{Open question}  The hypergroup CSS code construction we outlined clearly mimics the standard qubit one \cite{CalderbankShor_good_QEC_exist,Steane1996_Multiple_Particle_Interference_QuantumErrorCorrection,Calderbank97_QEC_Orthogonal_Geometry,nielsen_chuang}. Interestingly, there could also be  hypergroup CSS codes with no qubit/qudit analogue if there exist groups (or even hypergroups $\mathcal{T}$ that do not arise from groups) for which $\mathcal{V_S}$ in theorem \ref{thm:Normal Form CSS States}\textbf{(a)} can be degenerate. Though this is never the case for abelian groups because the Pauli operators in (\ref{eq:Stabilizer Hypergroup CSS State}) generate a maximal lin.\ ind.\ set of commuting operators (with  rank one common eigenprojectors), it can happen in our setting.\footnote{Products of hypergroup Pauli operators  in (\ref{eq:Stabilizer Hypergroup CSS State})  can be linearly dependent (choose $\mathcal{N}= \{\pm1,C_i\}$ for the quaternion group, section \ref{sect: Circuit Model}) and their cardinality $|\mathcal{N}||\mathcal{N^\perp}|$ may not match the dimension of $\mathcal{H}_{\overline{G}}$ \cite{Ichihara_thesis_Hypergroup_Extensions}.} We leave open the question of whether such codes exist. 

\subsubsection*{Examples and applications of theorem \ref{thm:Normal Form CSS States}}\label{sect:Examples Normal Form CSS States}

Theorem \ref{thm:Normal Form CSS States} is an important technical contribution of this work that will be used three times within the scope of the paper: firstly, in the examples below,  to construct efficient classical descriptions for new kinds of complex many body states; secondly, in section \ref{sect:Simulation}, to devise classical  algorithms for simulating hypergroup normalizer circuits (theorems \ref{thm:simulation}); and finally, in section \ref{sect:Quantum Algorithms}, in the development of an efficient quantum algorithm for hidden subhypergroup problems (theorems \ref{thm:easy2}--\ref{thm:easy3}). 

We now give examples of CSS hypergroup stabilizer states  of the simpler form (\ref{eq:Normal Form CSS state}).

\paragraph{Example 1: standard basis states}

We show now that conjugacy-class states and character states (\ref{eq:Quantum Fourier Transform over Hypergroup T}) are instances of hypergroup CSS states (of type (b)), as anticipated above (claim 1). Consider, first, an arbitrary $\ket{a}$ with $a\in \mathcal{T}$. Eq.\ (\ref{eq:Pauli Z Properties}) implies that $\ket{\psi}$ is stabilized by    $\mathcal{S}_Z^{\lambda_z}=\{Z_\mathcal{T}(\mathcal{X}_\mu), {X}_\mu\in\mathcal{T^*}\}$ with maximal $\mathcal{N^\perp}=\mathcal{T^*}$ and $\lambda_z(Z_\mathcal{T}(\mathcal{X}_\mu))=\mathcal{X}_\mu(a)$. Letting $\mathcal{S}_X^{\lambda_x}$, $\mathcal{N}$, and $\lambda_x$  be trivial, the state can written in the form given in theorem \ref{thm:Normal Form CSS States}.{(b)} (note that $\lambda_x$ is an invertible character) and, hence, $\ket{a}$ is a uniquely stabilized  CSS state. An almost identical argument, using  (\ref{eq:Pauli X Properties}) instead, shows that any character state $\ket{\mathcal{X_\nu}}$ is uniquely stabilized by  $\mathcal{S}_X^{\lambda_x} = \{X_\mathcal{T}(a),a\in\mathcal{T}\}$ with $\lambda_x(X_\mathcal{T}(a))=\mathcal{X}_\nu(a)$. (Note that in both cases equation (\ref{eq:Normal Form CSS state}) reproduces (\ref{eq:Quantum Fourier Transform over Hypergroup T}) consistently. )

\paragraph{Example 2: hypergroup coset states}

The states in example 1 are always product states. Yet theorem \ref{thm:Normal Form CSS States} also implies that highly entangled states such as the abelian group coset states that appear in the abelian HSP quantum algorithms \cite{childs_vandam_10_qu_algorithms_algebraic_problems},
\begin{equation}\label{eq:Abelian Group Coset State}
\ket{x+H}=\sum_{h\in H} \frac{1}{\sqrt{|H|}}\ket{x+h}, \quad \textnormal{$H$ is a subgroup of a finite abelian group $G$},
\end{equation}
as well as the abelian hypergroup coset states used in the quantum algorithm \cite{Amini_hiddensub-hypergroup},
\begin{equation}\label{eq:Hypergroup Coset State}
\ket{s\mathcal{N}}=\sum_{x\in s \mathcal{N}} \sqrt{\frac{\w{x}}{\varpi_{s\mathcal{N}}}}\, \ket{x}, \quad \textnormal{$\mathcal{N}$ is a subhypergroup of a finite abelian hypergroup $\mathcal{T}$},
\end{equation}
are all instances of CSS hypergroup states of type (a) with trivial $\mathcal{X}_\varsigma$, uniquely stabilized by $\mathcal{S}_X^{\lambda_x}=\{X_\mathcal{T}(a),a\in\mathcal{N}\}$,  $\mathcal{S}_Z^{\lambda_z}=\{Z_\mathcal{T}(\mathcal{X}_\mu),\mathcal{X}_\mu\in\mathcal{N}^\perp\}$, $\lambda_z(Z_\mathcal{T}(\mathcal{X}_\mu))= \mathcal{X}_\mu(s)$, and trivial, invertible $\lambda_x$.

In the special case of  abelian group coset states  (\ref{eq:Abelian Group Coset State}), theorem \ref{thm:Normal Form CSS States} recovers a result by Van den Nest \cite{VDNest_12_QFTs} who identified the latter to generalized abelian group stabilizer states. Theorem \ref{thm:Normal Form CSS States} extends the latter result demonstrating the existence of complex many-body states---hypergroup coset states (\ref{eq:Hypergroup Coset State}) and more (\ref{eq:Consistency CSS}-\ref{eq:Normal Form CSS state})---that can be  described within the present Hypergroup Stabilizer Formalism but not within the standard PSF and its abelian-group extensions \cite{Gottesman_PhD_Thesis,Gottesman99_HeisenbergRepresentation_of_Q_Computers,Gottesman98Fault_Tolerant_QC_HigherDimensions,VDNest_12_QFTs,BermejoVega_12_GKTheorem}, or even (to the  best of our knowledge\footnote{The authors are not aware of any method to express hypergroup coset states in terms of monomial unitary stabilizers as in \cite{Gottesman_PhD_Thesis,Gottesman99_HeisenbergRepresentation_of_Q_Computers,Gottesman98Fault_Tolerant_QC_HigherDimensions,VDNest_12_QFTs,BermejoVega_12_GKTheorem,nest_MMS,NiBuerschaperVdNest14_XS_Stabilizers}. We doubt such a  description could exist and, at the same time, be easy to track under the action of hypergroup normalizer circuits as in theorem \ref{thm:Evolution of Stabilizer States}.}) or within other generalizations of the PSF such as the Monomial Stabilizer Formalism \cite{nest_MMS} and the X-S Stabilizer Formalism \cite{NiBuerschaperVdNest14_XS_Stabilizers}.

\subsubsection*{Proof of theorem \ref{thm:Normal Form CSS States}}

We finish this section by proving theorem \ref{thm:Normal Form CSS States} and giving a method for preparing coset states as a corollary  (corollary \ref{corollary:Coset State Preparations}). As announced in the previous section, the proof of this result relies on new technical ideas based on  hypergroup methods, which are needed to handle the \emph{non-unitary, non-monomial} stabilizers of theorem \ref{thm:Normal Form CSS States}.
 
\begin{proof}[Proof of theorem \ref{thm:Normal Form CSS States}] First note that the properties discussed in section \ref{sect:Pauli Operators} show that both stabilizer  hypergroups $\mathcal{S}_X^{\lambda_x}$, $\mathcal{S}_X^{\lambda_z}$ are well-defined. To prove the theorem, we will use some basic hypergroup theoretic results described in the following lemma.
\begin{lemma}[{\cite[2.4.15,2.4.16]{BloomHeyer95_Harmonic_analysis}}]
\label{lemma:Quotient-Annihilator Isomorphisms} For any subhypergroup $\mathcal{N}$, the hypergroup isomorphisms  $\mathcal{N}^*\cong\mathcal{T}^*/\mathcal{N}^\perp$ and $(\mathcal{T/N})^*\cong \mathcal{N}^\perp$ (cf.\ section \ref{sect:Glossary}), can be canonically realized as follows.
\begin{itemize}
\item[\textbf{(i)}] All characters of $\mathcal{N}$ are obtained via \emph{restriction} of characters of $\mathcal{T}$, and two characters $\mathcal{X}_\alpha,\mathcal{X}_\beta\in\mathcal{T}^*$ act \emph{equally} on $\mathcal{N}$ if and only if $\mathcal{X}_\alpha\in\mathcal{X}_\beta\mathcal{N}^\perp$. 
\item[\textbf{(ii)}]  All quotient characters are obtained by letting characters in $\mathcal{N}^\perp$ act on cosets $x\mathcal{N}$, and this map is well-defined because  the former act \emph{constantly} on the latter. 
\end{itemize}
\end{lemma}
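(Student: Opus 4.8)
The plan is to realize each abstract isomorphism of the ``subhypergroup duality'' paragraph (cf.\ section \ref{sect:Glossary}) by an explicit map and check it is a hypergroup isomorphism, using three ingredients already available: the constancy/propagation fact \cite[proposition 2.4.15]{BloomHeyer95_Harmonic_analysis} (the same one invoked in lemma \ref{lemma:Commutativity of Paulis}), the generalized orthogonality relations (\ref{eq:Character Orthogonality Subhypergroups}), and the duality identification (\ref{eq:Hypergroup Duality}). I would prove part \textbf{(ii)} first, as it is the constructive half. Fix $\mathcal{X}_\mu\in\mathcal{N}^\perp$. Since $\mathcal{X}_\mu(b)=1$ for every $b\in\mathcal{N}$, the fact \cite[proposition 2.4.15]{BloomHeyer95_Harmonic_analysis} yields $\mathcal{X}_\mu(x)=\mathcal{X}_\mu(z)$ whenever $z\in xb$ with $b\in\mathcal{N}$; that is, $\mathcal{X}_\mu$ is constant on each coset $x\mathcal{N}$, which is exactly the ``well-defined because the former act constantly on the latter'' clause of (ii). Thus $\bar{\mathcal{X}}_\mu(x\mathcal{N}):=\mathcal{X}_\mu(x)$ is well defined.

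To see $\bar{\mathcal{X}}_\mu$ is a character of $\mathcal{T}/\mathcal{N}$, I would substitute the quotient structure constants $r_{a\mathcal{N},b\mathcal{N}}^{c\mathcal{N}}=\sum_{d\in c\mathcal{N}}n_{ab}^{d}$ from section \ref{sect:Glossary} and use coset-constancy:
\[ \sum_{c\mathcal{N}} r_{a\mathcal{N},b\mathcal{N}}^{c\mathcal{N}}\,\bar{\mathcal{X}}_\mu(c\mathcal{N})=\sum_{c\mathcal{N}}\sum_{d\in c\mathcal{N}}n_{ab}^{d}\,\mathcal{X}_\mu(d)=\sum_{d}n_{ab}^{d}\,\mathcal{X}_\mu(d)=\mathcal{X}_\mu(ab)=\mathcal{X}_\mu(a)\mathcal{X}_\mu(b)=\bar{\mathcal{X}}_\mu(a\mathcal{N})\bar{\mathcal{X}}_\mu(b\mathcal{N}), \]
and the involution identity is checked the same way. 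Since both $\mathcal{N}^\perp\subseteq\mathcal{T}^*$ and $(\mathcal{T}/\mathcal{N})^*$ carry the pointwise product as hyperoperation, and descent commutes with pointwise multiplication of the underlying functions, $\mathcal{X}_\mu\mapsto\bar{\mathcal{X}}_\mu$ is a hypergroup homomorphism; it is injective because $\bar{\mathcal{X}}_\mu=\bar{\mathcal{X}}_\nu$ forces $\mathcal{X}_\mu=\mathcal{X}_\nu$ on all of $\mathcal{T}$. For surjectivity I exhibit an inverse: the quotient map $\pi:\mathcal{T}\to\mathcal{T}/\mathcal{N}$, $x\mapsto x\mathcal{N}$, is a homomorphism (grouping $\sum_z n_{xy}^z\,z\mathcal{N}$ by cosets reproduces the $r$'s), so any $\psi\in(\mathcal{T}/\mathcal{N})^*$ pulls back to $\psi\circ\pi$, which lies in $\mathcal{N}^\perp$ (it is $1$ on $\mathcal{N}$ because $\pi(\mathcal{N})$ is the identity coset) and satisfies $\overline{\psi\circ\pi}\mapsto\bar{\psi}$ and $\psi\circ\pi\mapsto\psi$. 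This realizes $(\mathcal{T}/\mathcal{N})^*\cong\mathcal{N}^\perp$ as claimed.

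For part \textbf{(i)} I would use the restriction map $\rho:\mathcal{T}^*\to\mathcal{N}^*$, $\mathcal{X}_\mu\mapsto\mathcal{X}_\mu|_{\mathcal{N}}$. Because $\mathcal{N}$ carries the same structure constants and involution as $\mathcal{T}$ and $\mathcal{X}_\mu(e)=1\neq0$, each $\mathcal{X}_\mu|_{\mathcal{N}}$ is again a (nonzero) character of $\mathcal{N}$, so $\rho$ is a well-defined homomorphism (pointwise products restrict to pointwise products). The heart of the matter is the equivalence ``$\mathcal{X}_\alpha$ and $\mathcal{X}_\beta$ agree on $\mathcal{N}$ iff $\mathcal{X}_\alpha\in\mathcal{X}_\beta\mathcal{N}^\perp$''. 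For the forward direction I apply the first relation of (\ref{eq:Character Orthogonality Subhypergroups}) to $\mathcal{X}_\alpha,\mathcal{X}_\beta$:
\[ \sum_{a\in\mathcal{N}}\frac{\ws{\Hchi_\beta\mathcal{N}^\perp}\,\w{a}}{\varpi_\mathcal{N}}\,\overline{\Hchi_{\alpha}}(a)\,\mathcal{X}_\beta(a)=\delta_{\mathcal{X}_\alpha\mathcal{N}^\perp,\,\mathcal{X}_\beta\mathcal{N}^\perp}. \]
If $\mathcal{X}_\alpha|_{\mathcal{N}}=\mathcal{X}_\beta|_{\mathcal{N}}$, the left side is a strictly positive weighted sum of $|\mathcal{X}_\beta(a)|^{2}$ (nonzero at $a=e$, all weights positive), so the right side equals $1$ and $\mathcal{X}_\alpha\in\mathcal{X}_\beta\mathcal{N}^\perp$. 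Conversely, if $\mathcal{X}_\alpha\in\mathcal{X}_\beta\mathcal{X}_\gamma$ with $\mathcal{X}_\gamma\in\mathcal{N}^\perp$, I apply \cite[proposition 2.4.15]{BloomHeyer95_Harmonic_analysis} inside $\mathcal{T}^*$ to the dual character $\widetilde{\mathcal{X}_a}$ (for $a\in\mathcal{N}$): by (\ref{eq:Hypergroup Duality}), $\widetilde{\mathcal{X}_a}(\mathcal{X}_\gamma)=\overline{\mathcal{X}_\gamma}(a)=1$, so $\widetilde{\mathcal{X}_a}$ is constant on the product $\mathcal{X}_\beta\mathcal{X}_\gamma$, giving $\overline{\mathcal{X}_\alpha}(a)=\overline{\mathcal{X}_\beta}(a)$, i.e.\ $\mathcal{X}_\alpha(a)=\mathcal{X}_\beta(a)$ for all $a\in\mathcal{N}$.

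The equivalence shows $\rho$ is constant on, and separates, cosets of $\mathcal{N}^\perp$, so it descends to an \emph{injective} homomorphism $\overline{\rho}:\mathcal{T}^*/\mathcal{N}^\perp\to\mathcal{N}^*$; since the abstract subhypergroup-duality isomorphism of section \ref{sect:Glossary} forces $|\mathcal{T}^*/\mathcal{N}^\perp|=|\mathcal{N}^*|$, an injection between equinumerous finite sets is a bijection, so $\overline{\rho}$ is the desired isomorphism and, in particular, every character of $\mathcal{N}$ arises by restriction. I expect the equivalence in (i) to be the main obstacle: unlike the group case one cannot form $\mathcal{X}_\alpha\overline{\mathcal{X}_\beta}$ and test membership in $\mathcal{N}^\perp$, because in a hypergroup this product is a \emph{superposition} of characters rather than a single one; the generalized orthogonality (\ref{eq:Character Orthogonality Subhypergroups}) and the constancy propagation of \cite[proposition 2.4.15]{BloomHeyer95_Harmonic_analysis} are precisely what stand in for character division here.
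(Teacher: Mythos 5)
Your proof is correct, but note that the paper offers no proof of this lemma to compare against: it is imported wholesale from Bloom--Heyer \cite[2.4.15, 2.4.16]{BloomHeyer95_Harmonic_analysis} (no proof environment follows the statement) and is only ever \emph{used}, in the proof of theorem \ref{thm:Normal Form CSS States} and in lemma \ref{lma:subhypergroup-qft}. What you have produced is a self-contained reconstruction from the paper's own toolkit, and its two key moves are sound: the constancy-propagation fact of \cite[proposition 2.4.15]{BloomHeyer95_Harmonic_analysis} (the same one the paper invokes in lemma \ref{lemma:Commutativity of Paulis}) yields both the well-definedness clause of (ii) and, applied inside $\mathcal{T}^*$ to the dual characters $\widetilde{\mathcal{X}_a}$ via (\ref{eq:Hypergroup Duality}), the converse half of (i) --- and you correctly identify that this is the step which replaces ``character division,'' unavailable here because $\mathcal{X}_\alpha\overline{\mathcal{X}_\beta}$ is a superposition of characters rather than a single one. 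Two remarks on logical dependencies. First, your forward direction of (i) cites the first relation of (\ref{eq:Character Orthogonality Subhypergroups}), which the paper itself justifies ``due to hypergroup and subhypergroup duality''; within the paper's presentation that relation is stated as independent background, so your use is legitimate, but relative to Bloom--Heyer's development it sits downstream of the very isomorphisms being realized. You can sidestep it entirely: expand $\overline{\mathcal{X}_\alpha}\mathcal{X}_\beta=\sum_\gamma m_{\overline{\alpha}\beta}^{\gamma}\mathcal{X}_\gamma$ and average over $\mathcal{N}$ with weights $\w{a}/\varpi_{\mathcal{N}}$; plain orthogonality (\ref{eq:Character Orthogonality}) applied within $\mathcal{N}$ annihilates every $\mathcal{X}_\gamma\notin\mathcal{N}^\perp$ (whose restriction is a nontrivial character of $\mathcal{N}$), so agreement of $\mathcal{X}_\alpha,\mathcal{X}_\beta$ on $\mathcal{N}$ makes the average strictly positive, forcing $m_{\overline{\alpha}\beta}^{\gamma}\neq 0$ for some $\mathcal{X}_\gamma\in\mathcal{N}^\perp$, and reversibility in $\mathcal{T}^*$ converts $\mathcal{X}_\gamma\in\overline{\mathcal{X}_\alpha}\mathcal{X}_\beta$ into $\mathcal{X}_\beta\in\mathcal{X}_\alpha\mathcal{N}^\perp$. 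Second, your surjectivity-by-cardinality step in (i) borrows the abstract isomorphism $\mathcal{N}^*\cong\mathcal{T}^*/\mathcal{N}^\perp$ from the glossary; since the lemma's content is precisely that the \emph{known} isomorphisms are realized by restriction and descent, this is appropriate, though it means your argument realizes rather than independently re-establishes existence. A minor nit: the interchange $\sum_{c\mathcal{N}}\sum_{d\in c\mathcal{N}}=\sum_{d}$ in (ii) tacitly uses that cosets of $\mathcal{N}$ partition $\mathcal{T}$, a one-line consequence of reversibility that deserves a sentence.
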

Next, we identify  necessary and sufficient  conditions for the a state $\ket{\psi}$ to be uniquely stabilized by $\{\mathcal{S}_X^{\lambda_x},\mathcal{S}_Z^{\lambda_z}\}$. First, condition (\ref{eq:Stabilizer State Condition}) says  that  $\ket{\psi}$ is stabilized by $\mathcal{S}_Z^{\lambda_z}$ iff
\begin{align}
\PZ{\mathcal{T}}(\mathcal{X}_{\mu})\ket{\psi_{}} &= \mathcal{X}_{\mu}( s )\ket{\psi_{}} =\lambda_{z}(\PZ{\mathcal{T}}(\mathcal{X}_{\nu}))\ket{\psi} \quad\textnormal{for every } \mathcal{X}_\mu \in \mathcal{N}^\perp.\label{inproof:CSS state is stabilized 1}
\end{align}
Due to lemma \ref{lemma:Quotient-Annihilator Isomorphisms}(ii), this holds iff the wavefunction $\psi$ is supported on a subset of the coset $s\mathcal{N}$. On the other hand, we show that $\ket{\psi}$ is further stabilized by $\mathcal{S}_X^{\lambda_x}$ iff $\psi$ belongs to the   image of the following operator:
\begin{equation}
 P_X:=\varw{\mathcal{N}}^{-1}\sum_{ b \in \mathcal{N}}\w{\Hchi_{\overline{\varsigma}}\mathcal{N}^\perp} \w{b}\mathcal{X}_{\overline{\varsigma}}({b}) \PX{\mathcal{T}}( b).
\end{equation}
The ``only if'' follows from the fact that $X_{\mathcal{T}}(b)\ket{\psi} = \Hchi_\varsigma(b)$ and the orthogonality relationship (\ref{eq:Character Orthogonality Subhypergroups}). The ``if'' follows from the calculation
\begin{align}
\PX{\mathsmaller{\mathcal{T}}}(a)P_X &=\sum_{ b, c\in  \mathcal{N}} \frac{\w{\mathcal{X}_{\overline{\varsigma}}\mathcal{N}^\perp}\w{b}}{\varw{\mathcal{N}}} n_{ a b}^{ c}\mathcal{X}_{\overline{\varsigma}}(b)\PX{\mathsmaller{\mathcal{T}}}( c)=\sum_{ c\in  \mathcal{N}} \frac{\w{\mathcal{X}_{\overline{\varsigma}}\mathcal{N}^\perp}\w{c}}{\varw{\mathcal{N}}} \left(\sum_{b\in  \mathcal{N}}n_{ {\overline{a}} c}^{ b}\mathcal{X}_{\overline{\varsigma}}(b)\right)\PX{\mathsmaller{\mathcal{T}}}( c)\\&=\sum_{c\in  \mathcal{N}} \frac{\w{\mathcal{X}_{\overline{\varsigma}}\mathcal{N}^\perp}\w{c}}{\varw{\mathcal{N}}} \mathcal{X}_{\overline{\varsigma}}(\overline{a})\mathcal{X}_{\overline{\varsigma}}(c)\PX{\mathsmaller{\mathcal{T}}}(c)=\mathcal{X}_{{\varsigma}}({a}) P_X,\label{eq:inproof:Projector Pauli X}
\end{align}
which implies with (\ref{eq:Character Orthogonality Subhypergroups}) that $P_X$ is a \emph{projector}, and consequently,  we get $\PX{\mathcal{T}}( a)\ket{\psi} = \PX{\mathcal{T}}(a)P_X\ket{\psi}=\mathcal{X}_\varsigma(a) P_X\ket{\psi}=\mathcal{X}_\varsigma(a)\ket{\psi}$ as desired.

As a result, we obtain that the stabilized states of $\{\mathcal{S}_X^{\lambda_x},\mathcal{S}_Z^{\lambda_z}\}$ are the quantum states in the vector space $\mathcal{V_S} :=\mathrm{span}\{P_X\ket{y}:y\in s\mathcal{N}\}$, where
\begin{align}\label{eq:inproof_P y}
P_X \ket{y} &= \sum_{\substack{ b\in \mathcal{N}\\
  x\in   s  \mathcal{N} } } \frac{\w{\Hchi_{\overline{\varsigma}}\mathcal{N}^\perp} \w{b}}{\varw{\mathcal{N}}} \sqrt{\frac{\w{y}}{\w{x}}} n_{ b, y}^{ x} \mathcal{X}_{\overline{\varsigma}}(b)\ket{ x} \propto \sum_{\substack{ x\in   s  \mathcal{N}} } \sqrt{\w{x}}  \left(\sum_{\substack{b}\in\mathcal{N} }n_{ x, {\overline{y}}}^{ b}\mathcal{X}_{\overline{\varsigma}}(b)\right)\ket{ x},
\end{align}
and that $\ket{\psi}$ is uniquely stabilized  iff this space is one dimensional. The proof for the RHS of (\ref{eq:Consistency CSS}) is the same: due to duality, we can apply a QFT (\ref{eq:QFTs are Clifford}) and reach this equality by  repeating the whole  proof from the beginning with exchanged roles for $\mathcal{T}$ and $\mathcal{T^*}$. This proves  case \textbf{(a)}.

Finally, we prove  \textbf{(b)}. First, in the simplest case, $s = e$, we can see that $\psi_y(x) = \sqrt{w_x} \Hchi_{\overline{\varsigma}}(x \overline{y})= \sqrt{w_x} \Hchi_{\overline{\varsigma}}(x)\Hchi_{\overline{\varsigma}}(\overline{y})=\psi_1(x)\Hchi_{\overline{\varsigma}}(\overline{y})$ by (\ref{eq:Character DEFINITION}), since $x,y\in\mathcal{N}$ and $\Hchi_{\overline{\varsigma}}$ is a character of $\mathcal{N}$. When $x, y \in s\mathcal{N}$, for $s \not = e$, we can, in general, have $n_{x,\overline{y}}^z \not= 0$ for $z \not\in \mathcal{N}$, so we cannot apply (\ref{eq:Character DEFINITION}). However, when $s$ is invertible (so $s \overline{s} = 1$), we can get the same result as in the simplest case by a simple change of variables.

For any $x,y\in s\mathcal{N}$, we define  $x':=\overline{s}x$ and $y':=\overline{s}y$. Since $x' \overline{y}' = \overline{s} x \overline{y} s = s \overline{s} x \overline{y} = xy$, we have $n_{x'\overline{y}'}^b = n_{x\overline{y}}^b$ and, taking $b=1$ and $y=x$, we have $w_{x'} = w_x$ from the definition of $w_x$. As these are the only appearances of $x$ and $y$ in $\psi_y(x)$, this shows that $\psi_y(x)=\psi_{y'}({x'})$. This combined with the previous easy case (for $x', y' \in \mathcal{N}$), shows that all $\psi_y$'s are  proportional to the non-zero function $\psi_1(x)$, which shows that the space is one-dimensional and contains $\ket{\psi}$. Finally, it is easily checked, in the case that $\mathcal{X}_\varsigma$ is invertible, that the normalization constant in (\ref{eq:Normal Form CSS state}, LHS) is  $(\w{\mathcal{X}_\varsigma \mathcal{N}^\perp}/\varpi_{s\mathcal{N}})^{-1/2}$; otherwise, it follows from (\ref{eq:Character Orthogonality Subhypergroups}). As in case \textbf{(a)}, duality lets us repeat the argument to get (\ref{eq:Normal Form CSS state}, RHS).
\end{proof}

As a final remark, we highlight that theorem \ref{thm:Normal Form CSS States} introduces many new states that we are not aware to be  preparable  by standard or character basis inputs and normalizer gates (the ingredients of the computational model in section \ref{sect: Circuit Model}), in general. However, we point out that the CSS states of type theorem \ref{thm:Normal Form CSS States}.(b) can always be prepared by measuring Pauli operators.
\begin{corollary}[\textbf{Coset state preparations}]\label{corollary:Coset State Preparations} Let $\mathcal{C}$ be a circuit  takes  the standard basis state  $\ket{\mathcal{X}_1}$ as input, performs $\Fourier{\mathcal{T}}^\dagger$ (an inverse QFT) or $\Fourier{\mathcal{T}^*}$, and then performs a syndrome measurement of the Pauli operators in a stabilizer hypergroup  $\mathcal{S}_Z^{\lambda_z}$ of form (\ref{eq:Stabilizer Hypergroup CSS State})\footnote{This is the canonical measurement defined by the common eigenprojectors that may be implemented, e.g., by measuring a poly-size set  generating set of $\mathcal{S}_Z^{\lambda_z}$, which exists if there is one for $\mathcal{N}^\perp$ (section \ref{sect:Stabilizer Formailsm}).}. Then $\mathcal{C}$ prepares a coset state. Specifically, it prepares $\ket{s\mathcal{N}}$  with probability $\varpi_{s\mathcal{N}}/\varpi_{\mathcal{T}}$. Furthermore, if $\ket{x_0}$ is given, $\Fourier{\mathcal{T}}$ or $\Fourier{\mathcal{T}^*}^\dagger$ is applied,  and $\mathcal{S}_X^{\lambda_x}$  (\ref{eq:Stabilizer Hypergroup CSS State}) is measured, then the outcome is a  coset state  $\ket{\mathcal{X}_\varsigma \mathcal{N}^\perp}$ with probability $\varpi_{\varsigma\mathcal{N}^\perp}/\varpi_{\mathcal{T}^*}$.
\end{corollary}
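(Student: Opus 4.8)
The plan is to push the input state through the circuit explicitly and then recognize the post‑measurement state as the projection of a weighted‑uniform superposition onto a coset subspace. First I would evaluate the inverse QFT on the trivial‑character input. Since $\mathcal{X}_1(a)=1$ for all $a\in\mathcal{T}$ and the identity has weight $\w{\mathcal{X}_1}=1$, the dual formula in (\ref{eq:Quantum Fourier Transform over Hypergroup T}) collapses to $\Fourier{\mathcal{T}}^\dagger\ket{\mathcal{X}_1}=\sum_{a\in\mathcal{T}}\sqrt{\w{a}/\varpi_{\mathcal{T}}}\,\ket{a}$, i.e.\ the weighted‑uniform superposition over $\mathcal{T}$ (equivalently the coset state $\ket{x_0\mathcal{T}}$). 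The case where $\Fourier{\mathcal{T}^*}$ is applied instead produces the same state by duality, so I would treat the inverse‑QFT case and note the other is analogous.

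Next I would identify the joint eigenspaces of the commuting operators in $\mathcal{S}_Z^{\lambda_z}=\{\PZ{\mathcal{T}}(\mathcal{X}_\mu):\mathcal{X}_\mu\in\mathcal{N}^\perp\}$. These are diagonal in the element basis, and by Lemma \ref{lemma:Quotient-Annihilator Isomorphisms}(ii) each $\mathcal{X}_\mu\in\mathcal{N}^\perp$ is constant on every coset $x\mathcal{N}$; hence the common eigenspaces are exactly the coset subspaces $\mathrm{span}\{\ket{x}:x\in s\mathcal{N}\}$, indexed by $s\mathcal{N}\in\mathcal{T}/\mathcal{N}$, with syndrome $s\mathcal{N}$ corresponding to the eigenvalues $\mathcal{X}_\mu(s)$ (that is, to the stabilizer character $\lambda_z$). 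That distinct cosets give distinct, exhaustive syndromes follows from the second orthogonality relation in (\ref{eq:Character Orthogonality Subhypergroups}), which shows the characters of $\mathcal{N}^\perp$ separate the cosets. A syndrome measurement therefore projects the uniform state onto one such coset subspace.

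The probability and output state then drop out of a short computation. The projection of $\sum_a\sqrt{\w{a}/\varpi_{\mathcal{T}}}\,\ket{a}$ onto the coset $s\mathcal{N}$ is $\sum_{x\in s\mathcal{N}}\sqrt{\w{x}/\varpi_{\mathcal{T}}}\,\ket{x}$, whose squared norm is $\sum_{x\in s\mathcal{N}}\w{x}/\varpi_{\mathcal{T}}=\varpi_{s\mathcal{N}}/\varpi_{\mathcal{T}}$ by the definition of the weight of a subset. This gives the claimed outcome probability, and renormalizing reproduces precisely the coset state $\ket{s\mathcal{N}}$ of (\ref{eq:Hypergroup Coset State}); one may alternatively observe that the normalized projection is the unique CSS state of Theorem \ref{thm:Normal Form CSS States}(b) with trivial (invertible) $\mathcal{X}_\varsigma$, which confirms it is stabilized by the measured $\mathcal{S}_Z^{\lambda_z}$.

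Finally, the second statement follows verbatim under the duality exchanging $\mathcal{T}$ and $\mathcal{T}^*$. Applying $\Fourier{\mathcal{T}}$ to the identity element state $\ket{x_0}$ (recall $x_0=1$, so $\w{x_0}=1$ and $\mathcal{X}_\mu(x_0)=1$) yields $\sum_{\mathcal{X}_\mu}\sqrt{\w{\mathcal{X}_\mu}/\varpi_{\mathcal{T}^*}}\,\ket{\mathcal{X}_\mu}$; the $X$‑Paulis of $\mathcal{S}_X^{\lambda_x}$ are diagonal in the character basis by (\ref{eq:Pauli X Properties}), so the same three‑step argument with $\mathcal{T}$ and $\mathcal{T}^*$ interchanged produces $\ket{\mathcal{X}_\varsigma\mathcal{N}^\perp}$ with probability $\varpi_{\varsigma\mathcal{N}^\perp}/\varpi_{\mathcal{T}^*}$. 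The only genuinely non‑routine step is the eigenspace identification of the second paragraph; the remainder is bookkeeping with the hypergroup weight normalizations.
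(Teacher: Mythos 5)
Your proposal is correct and follows essentially the same route as the paper's proof: the inverse QFT on $\ket{\mathcal{X}_1}$ yields the weighted-uniform superposition $\sum_{a\in\mathcal{T}}\sqrt{\w{a}/\varpi_{\mathcal{T}}}\,\ket{a}$, Lemma \ref{lemma:Quotient-Annihilator Isomorphisms}(ii) turns the $\mathcal{S}_Z^{\lambda_z}$ syndrome measurement into a projective measurement onto cosets $s\mathcal{N}$, and hypergroup duality handles the second half. The only difference is presentational: you compute the projection onto coset subspaces and verify syndrome separation via (\ref{eq:Character Orthogonality Subhypergroups}), whereas the paper simply rewrites the superposition as $\sum_{s\mathcal{N}}\sqrt{\varpi_{s\mathcal{N}}/\varpi_{\mathcal{T}}}\,\ket{s\mathcal{N}}$ and reads off the outcome statistics.
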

All states of form (\ref{eq:Normal Form CSS state}) can further be prepared from a coset state by applying Pauli gates.
\begin{proof} 
If we prove the first case, the second  holds  due to hypergroup duality. Lemma \ref{lemma:Quotient-Annihilator Isomorphisms}(ii) implies that measuring $\mathcal{S}_Z^{\lambda_z}$ is equivalent to performing a projective measurement with projectors $\{P_{s\mathcal{N}}=\ket{s\mathcal{N}}\bra{s\mathcal{N}}\}$. The claim follows by rewriting $\Fourier{\mathcal{T}}^\dagger\ket{\mathcal{X}_1}=\Fourier{\mathcal{T^*}}\ket{\mathcal{X}_1}=\sum_{a\in\mathcal{T}}\sqrt{\tfrac{\w{a}}{\varpi_{\mathcal{T}}}}\ket{a}=\sum_{s\mathcal{N}\in\mathcal{T/N}}\sqrt{\tfrac{\varpi_{s\mathcal{N}}}{\varpi_{\mathcal{T}}}}\ket{s\mathcal{N}}$.
\end{proof}

\section{Classical simulation of hypergroup normalizer circuits}\label{sect:Simulation}

The Gottesman-Knill theorem  shows that Clifford circuits can be efficiently simulated on a
classical computer, as there is an efficient classical algorithm for sampling from
the distribution of measurement outcomes of these circuits \cite{Gottesman_PhD_Thesis,Gottesman99_HeisenbergRepresentation_of_Q_Computers}. This result was
generalized by Van den Nest \cite{VDNest_12_QFTs} and Bermejo-Vega \cite{BermejoVega_12_GKTheorem} to normalizer circuits over all finite abelian groups, recovering the result of \cite{Gottesman_PhD_Thesis,Gottesman99_HeisenbergRepresentation_of_Q_Computers} for $\mathcal{T}=\Integers_2^n$. 

The original GK theorem was surprising because the intermediate state of a Clifford circuit can be maximally entangled \cite{nest06Entanglement_in_Graph_States}, demonstrating  that entanglement by itself is not sufficient resource for exponential quantum speedup in the standard gate model. The generalization of \cite{VDNest_12_QFTs,BermejoVega_12_GKTheorem}  expanded the set of
quantum circuits that can be classically simulated and added new
insights about HSP quantum circuits, showing that highly sophisticated gates such as Quantum Fourier Transforms  sometimes \emph{fail} to  exploit the  entanglement present in the HSP coset states to provide a speed-up.

Our next theorem extends a variant of the original Gottesman-Knill to normalizer circuits over arbitrary abelian hypergroups.
\begin{theorem}[\textbf{Simulation}]
\label{thm:simulation}
Let $\mathcal{C}$ be a normalizer circuit over a finite abelian hypergroup $\mathcal{T}$ containing \emph{global} QFTs, automorphism gates, and Pauli gates (but no quadratic phase gates) followed by a final measurement in the standard basis (cf.\ section \ref{sect: Circuit Model}). Then, given certain computability assumptions about $\mathcal{T}$ and its characters (section \ref{sect:Assumptions on Hypergroups}), there exists an efficient classical algorithm for sampling the measurement outcomes of $\mathcal{C}$.
\end{theorem}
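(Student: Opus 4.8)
The plan is to show that, because the circuit contains no quadratic phase gates, the quantum state remains a \emph{CSS} stabilizer state of the special form (\ref{eq:Normal Form CSS state}) at every time step, and that its compact classical description can be updated efficiently gate-by-gate, so that the whole simulation reduces to sampling the final coset state. Concretely, I would track, after each step $t$, the working hypergroup $\mathcal{T}(t)=\mathcal{T}(t)_1\times\cdots\times\mathcal{T}(t)_m$ (recording, per register, whether we are in the element or character picture) together with the CSS data $(\mathcal{N},s,\mathcal{X}_\varsigma)$ of (\ref{eq:Stabilizer Hypergroup CSS State}): a subhypergroup $\mathcal{N}\le\mathcal{T}(t)$, an element $s$, and a character $\mathcal{X}_\varsigma$, maintaining throughout the invariant that at least one of $s,\mathcal{X}_\varsigma$ is invertible. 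By theorem \ref{thm:Normal Form CSS States}(b), this invariant guarantees that the data determines a \emph{unique} pure state, namely (\ref{eq:Normal Form CSS state}), so the description is faithful and the measurement at the end is well specified on the standard basis of $\mathcal{T}(T)$.

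For the base case, the input is a product of element and character states, which by claim \ref{lemma:Standard Basis States are Stabilizer States} (cf.\ Example 1 after theorem \ref{thm:Normal Form CSS States}) are CSS states of type (b): an element state gives $\mathcal{N}=\{e\}$ with $s$ the element and $\mathcal{X}_\varsigma$ trivial, while a character state gives $\mathcal{N}=\mathcal{T}_i$ with $s$ trivial and $\mathcal{X}_\varsigma$ the character. For the inductive step I would invoke the conjugation rules of theorem \ref{thm:Evolution of Stabilizer States}. The crucial observation is that the \emph{only} rule mixing Pauli types is the quadratic-phase rule (\ref{eq:Quadratic Phase gates are Clifford}), which sends $\PX{\mathcal{T}}(a)$ to a product $\PX{\mathcal{T}}(a)\PZ{\mathcal{T}}(\beta(a))$; since such gates are excluded by hypothesis, every allowed gate preserves the CSS splitting into a pure-$X$ and a pure-$Z$ stabilizer hypergroup. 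A Pauli gate (\ref{eq:Pauli gates are Clifford}) leaves $\mathcal{N}$ and $\mathcal{N}^\perp$ fixed and only updates the phases, replacing $s$ by $\overline{s'}s$ and $\mathcal{X}_\varsigma$ by $\mathcal{X}_{\varsigma'}\mathcal{X}_\varsigma$; because the gate's element $s'$ and character $\mathcal{X}_{\varsigma'}$ are invertible, these products are single elements/characters and the invertibility invariant survives. An automorphism gate (\ref{eq:Automorphism gates are Clifford}) transforms the data by $\mathcal{N}\mapsto\alpha(\mathcal{N})$, $s\mapsto\alpha(s)$ and $\mathcal{X}_\varsigma\mapsto\mathcal{X}_\varsigma\circ\alpha^{-1}$ (invertibility is preserved since automorphisms permute invertible elements). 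A global QFT (\ref{eq:QFTs are Clifford}) interchanges the $X$- and $Z$-hypergroups and moves to the dual picture: using $\mathcal{N}^{\perp\perp}=\mathcal{N}$ and the quotient--annihilator isomorphisms of lemma \ref{lemma:Quotient-Annihilator Isomorphisms}, the new data over $\mathcal{T}(t)^*$ is read directly off the two sides of (\ref{eq:Normal Form CSS state}), with the roles of $s$ and $\mathcal{X}_\varsigma$ swapped. Each update is an elementary manipulation of subhypergroups, annihilators, dual automorphisms, involutions, products of invertibles, weights, and character values, all of which are efficient under the computability assumptions of section \ref{sect:Assumptions on Hypergroups}.

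Finally, the output state over $\mathcal{T}(T)$ has the form (\ref{eq:Normal Form CSS state}), and the standard-basis measurement returns an outcome $x$ with probability proportional to $\w{x}\,|\mathcal{X}_\varsigma(x)|^2$ supported on the coset $s\mathcal{N}$; when $\mathcal{X}_\varsigma$ is invertible this collapses to the weight distribution $\w{x}/\varpi_{s\mathcal{N}}$ on $s\mathcal{N}$, the hypergroup analogue of a uniform coset sample. I expect this last step to be the \textbf{main obstacle}. Two features absent from the abelian-group Gottesman--Knill theorem appear here: the coset $s\mathcal{N}$ can contain exponentially many elements, so it cannot be enumerated, and, unlike group characters, hypergroup characters $\mathcal{X}_\varsigma$ need not have unit modulus (they may even vanish), so the target distribution carries a genuinely nontrivial weighting that must be sampled rather than ignored. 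The computability assumptions of section \ref{sect:Assumptions on Hypergroups} are tailored precisely to make this weighted coset sampling efficient --- for products $\mathcal{T}^m$ of constant-size factors, for instance, one can compute the marginals and sample register-by-register --- and they reduce to the standard requirements of \cite{VDNest_12_QFTs,BermejoVega_12_GKTheorem} when $\mathcal{T}$ is an abelian group, recovering the known CSS-preserving simulation. Assembling the three steps then yields the claimed efficient classical sampling algorithm.
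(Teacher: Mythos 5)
Your Schr\"odinger-picture tracking of CSS data is a reasonable alternative strategy, but it has two genuine gaps, and the second is fatal as written. First, the invariant you propose to maintain --- that at least one of $s$, $\mathcal{X}_\varsigma$ is globally invertible --- already fails at the input whenever the registers are initialized in a mixture of element and character states, which the circuit model of section \ref{sect: Circuit Model} explicitly allows. Take $\mathcal{T}=\overline{Q}_8\times\overline{Q}_8$ with register 1 in $\ket{C_i}$ and register 2 in $\ket{\mathcal{X}_2}$: the global data is $s=(C_i,C_1)$ and $\mathcal{X}_\varsigma=\mathcal{X}_1\otimes\mathcal{X}_2$, and neither is invertible, since invertibility in a product hypergroup requires unit weight in every factor while $\w{C_i}=2$ and $\w{\mathcal{X}_2}=4$. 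Worse, every allowed gate preserves the invertibility status of the pair $(s,\mathcal{X}_\varsigma)$ (Pauli gates multiply by invertibles, automorphisms preserve weights, the QFT swaps the two entries), so once violated the invariant can never be restored. Consequently theorem \ref{thm:Normal Form CSS States}(b) --- the tool you invoke both for faithfulness of the description and for reading off the dual data after a QFT --- is unavailable along such a computation. (Uniqueness itself can be salvaged, since conjugation by unitaries preserves unique stabilization, but the explicit wavefunction (\ref{eq:Normal Form CSS state}) cannot.)

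Second, and more fundamentally, your final step reduces the simulation to sampling $\Pr(x)\propto \w{x}\,|\mathcal{X}_{\overline{\varsigma}}(x)|^2$ over a coset $s\mathcal{N}$, where $\mathcal{N}$ is the image of the input subhypergroup under a sequence of possibly entangling automorphisms --- e.g.\ a ``diagonal'' subhypergroup of $\mathcal{T}_1\times\mathcal{T}_2$ given by the graph of a homomorphism. The assumptions of section \ref{sect:Assumptions on Hypergroups} do not grant this capability: condition (ii) only provides sampling of QFTs of the allowed \emph{input} states (element and character basis states), and the paper explicitly lists the ability to sample general CSS/coset wavefunctions in both bases as an \emph{additional} assumption in its discussion of extensions of theorem \ref{thm:simulation}. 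Your remark that one can ``compute the marginals and sample register-by-register'' fails precisely when $\mathcal{N}$ is not a product: the marginal on one register is then a weighted sum over an exponentially large coset, and no hypergroup analogue of the linear-algebraic (Smith-normal-form) machinery available for abelian groups is developed in the paper. This missing step is exactly what the paper's proof supplies via lemma \ref{thm:Normal form}: conjugating all Pauli gates to the end, commuting automorphisms past QFTs via $\Fourier{\mathcal{T}} U_\alpha = U_{\alpha^{-*}}\Fourier{\mathcal{T}}$, and cancelling consecutive QFTs puts $\mathcal{C}$ in the form $MF$ with at most one QFT acting directly on the product input; then assumption (ii) suffices to sample after $F$, and the monomial layer $M$ is applied to the sample as an efficiently computable permutation. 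Your gate-by-gate scheme does go through when all registers start in element states, or all in character states (there $\mathcal{N}$ stays trivial or full, and the final sampling is literally assumption (ii)), but for the general statement the commutation/merging idea --- or some genuinely new coset-sampling primitive --- is indispensable and absent from your proposal.
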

The proof of the theorem is given at the end of the section. Our simulation result greatly expands the families of quantum circuits of \cite{Gottesman_PhD_Thesis,Gottesman99_HeisenbergRepresentation_of_Q_Computers,Gottesman98Fault_Tolerant_QC_HigherDimensions,VDNest_12_QFTs,BermejoVega_12_GKTheorem}
can be classically simulated and it also adds yet more evidence to support the
idea that, for the HSP, quantum efficiency may go hand-in-hand with classical
simulability.

We highlight that our theorem generalizes the so-called CSS-preserving Gottesman-Knill theorem \cite{Delfosee14_Wigner_function_Rebits} without intermeadiate measurements, where the only normalizer gates allowed are those that send CSS states to CSS states (definition \ref{def:CSS state}). Our result also extends the CSS-preserving case of the theorems in \cite{VDNest_12_QFTs,BermejoVega_12_GKTheorem}.  Yet, theorem \ref{thm:simulation}  does not fully extend the ones in \cite{Gottesman_PhD_Thesis,Gottesman99_HeisenbergRepresentation_of_Q_Computers,VDNest_12_QFTs,BermejoVega_12_GKTheorem}, which altogether cover simulations of partial QFTs, quadratic phase gates, and intermediate Pauli-operators  measurements interspersed along the circuit. Simulating these extended cases is much harder in our nonabelian setting because of the  \emph{non-unitarity and non-monomiality} of hypergroup Pauli operators (cf.\ discussion in section \ref{sect:Stabilizer Formailsm}), which do not let us apply any  existing techniques for manipulating stabilizer codes \cite{Gottesman_PhD_Thesis,Gottesman99_HeisenbergRepresentation_of_Q_Computers,Gottesman98Fault_Tolerant_QC_HigherDimensions,VDNest_12_QFTs,BermejoVega_12_GKTheorem,BermejoLinVdN13_Infinite_Normalizers,BermejoLinVdN13_BlackBox_Normalizers,nest_MMS,AaronsonGottesman04_Improved_Simul_stabilizer,dehaene_demoor_coefficients,dehaene_demoor_hostens,VdNest10_Classical_Simulation_GKT_SlightlyBeyond,deBeaudrap12_linearised_stabiliser_formalism}; instead, the simulation method we give is based on the new hypergroup stabilizer techniques of section \ref{sect:Stabilizer Formailsm}. 

We stress that CSS normalizer operations can be highly nontrivial, as the the quantum algorithms for abelian HSP are normalizer circuits with CSS structure. Theorem \ref{thm:simulation} could be used to simulate such circuits gate-by-gate if the information about the hidden subgroups was not manifestly hidden and groups were presented in a factorized form (cf.\ \cite{BermejoLinVdN13_BlackBox_Normalizers} for an extended discussion). This means, for instance, that the entanglement present in a CSS-preserving circuit can be quite substantial.

Lastly, we  conjecture that our simulation result can be extended to all normalizer gates despite the non-monomiality / non-unitarity issues  we discuss.
\begin{conjecture}[\textbf{Conjecture}]
There exist nontrivial families of abelian hypergroups for which the normalizer circuits of theorem \ref{thm:simulation} can still be efficiently classically simulated if they are supplemented with partial QFTs and quadratic phase gates acting at arbitrary circuit locations.
\end{conjecture}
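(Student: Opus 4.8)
Because the final statement is a \emph{conjecture}, i.e.\ an existence claim, establishing it means exhibiting at least one nontrivial family of abelian hypergroups together with a simulation procedure; I need not treat all hypergroups, and I may build favourable closure conditions into the definition of the family and then verify them on a concrete instance. The plan is to take products $\mathcal{T}=\mathcal{T}_1\times\cdots\times\mathcal{T}_m$ of \emph{constant-size} strong resonance hypergroups obeying the computability assumptions of section \ref{sect:Assumptions on Hypergroups}, using the quaternionic family $\overline{Q}_8^{\,m}$ of section \ref{sect: Circuit Model} as the guiding example. The point of constant register size is that every partial QFT then acts on a $O(1)$-dimensional factor, so the per-register Fourier matrix, character table, weights, subhypergroup lattice, and the classification of quadratic functions can all be precomputed and stored as finite tables. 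The reason theorem \ref{thm:simulation} excludes the two extra gate types is visible in (\ref{eq:Quadratic Phase gates are Clifford}) and (\ref{eq:QFTs are Clifford}): a quadratic phase gate sends a pure-$X$ generator $\PX{\mathcal{T}}(a)\mapsto \xi(a)\PX{\mathcal{T}}(a)\PZ{\mathcal{T}}(\beta(a))$, and a partial QFT swaps $X\leftrightarrow Z$ on one register only, so in both cases a CSS (pure-$X$ / pure-$Z$) stabilizer description of section \ref{sect:Normal Form CSS hypergroup Stablizier States} is destroyed. Crucially, however, the map $\beta$ in (\ref{eq:Quadratic Phase gates are Clifford}) takes values in the \emph{invertible} characters, so the $Z$-part created by a quadratic phase is unitary and commutes past $X$-Paulis cleanly via (\ref{eq:Pauli gates are Clifford}); this is the structural feature I intend to exploit.

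The first step is to enrich the normal form of theorem \ref{thm:Normal Form CSS States}(b) from coset states to \emph{quadratic coset states}, namely states $\sum_{x\in s\mathcal{N}}\sqrt{\w{x}/\varpi_{s\mathcal{N}}}\,Q(x)\ket{x}$ in which the character phase $\mathcal{X}_{\overline{\varsigma}}$ is replaced by a quadratic function $Q$ on the coset $s\mathcal{N}$ (a product of a quadratic phase and a character, in the sense of section \ref{sect: Circuit Model}). I would first show, by the same argument as theorem \ref{thm:Normal Form CSS States} together with the $\beta$-twist in (\ref{eq:Quadratic Phase gates are Clifford}), that these are exactly the hypergroup stabilizer states stabilized by the \emph{mixed} $X$--$Z$ hypergroups obtained from the pure-$X$ hypergroup of (\ref{eq:Stabilizer Hypergroup CSS State}) by conjugating with $D_\xi$. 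Closure of this enriched family under the cheap gates is then immediate: a quadratic phase gate leaves the support $s\mathcal{N}$ fixed and merely replaces $Q$ by $\xi\cdot Q$, which is again quadratic (product of quadratics is quadratic); automorphism and global Pauli gates relabel the coset and the phase by theorem \ref{thm:Evolution of Stabilizer States}; and the global QFT is handled exactly as in the CSS case. Since these gates never enlarge the poly-size list of commuting stabilizer hypergroups, the whole description stays of size $\polylog{|\mathcal{T}|}$.

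The crux — and what I expect to be the main obstacle — is closure under \emph{partial} QFTs. Applying $\Fourier{\mathcal{T}_i}$ to a quadratic coset state, one slices the state along the constant-size register $\mathcal{T}_i$, applies the finite Fourier table on that factor, and must re-express the result as a quadratic coset state over the transformed hypergroup $\mathcal{T}_i^{*}$. In the abelian-\emph{group} setting this closure is guaranteed by the theory of bicharacters and Gauss sums (the Weil/metaplectic representation), which is precisely what powers the full Gottesman--Knill theorems of \cite{VDNest_12_QFTs,BermejoVega_12_GKTheorem}. For hypergroups there is no such guarantee a priori: hypergroup characters may be non-unimodular or vanish, and the bicharacter condition defining quadraticity interacts nontrivially with the hypergroup convolution, so the QFT of a quadratic phase over a coset need not remain quadratic. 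The heart of the proof is therefore to isolate the defining property of the family that forces this closure — I would formulate it as a \emph{hypergroup Gauss-sum reciprocity} on each constant-size factor — and to verify it by direct table computation for $\overline{Q}_8$ (and, more generally, for small class hypergroups of nilpotent groups), thereby certifying $\overline{Q}_8^{\,m}$ as a nontrivial witness. Two subsidiary points complete the argument: the resonance/Lagrange property of the family rules out the degenerate-$\mathcal{V_S}$ pathology flagged in the open question after theorem \ref{thm:Normal Form CSS States}, so each intermediate state stays uniquely stabilized; and the final standard-basis sampling proceeds as in theorem \ref{thm:simulation} by sampling coset marginals from the normal form, with the quadratic phases irrelevant to the outcome distribution. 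If the reciprocity property holds for the chosen family, the induction over gates goes through and the conjecture follows for that family.
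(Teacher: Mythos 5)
First, a point of framing: the statement you were asked about is an open \emph{conjecture} in the paper --- the authors offer no proof, and indeed they explicitly explain why none of their techniques settle it (the non-unitarity and non-monomiality of hypergroup Pauli operators block all existing stabilizer-update methods for partial QFTs and interleaved quadratic phases). So the only question is whether your blind attempt actually closes the question, and it does not: it is a research program whose pivotal step is left as a hypothesis. The crux you yourself identify --- closure of ``quadratic coset states'' under partial QFTs, which you package as a ``hypergroup Gauss-sum reciprocity'' on each constant-size factor --- is exactly the missing ingredient that makes this a conjecture rather than a theorem, and your proposal neither proves it in general nor carries out the promised table verification for $\overline{Q}_8$. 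A conditional of the form ``if the reciprocity property holds for the chosen family, the conjecture follows'' is not an existence proof; the burden of the conjecture is precisely to exhibit one family for which the closure is \emph{verified}. Note also that the difficulty with a partial QFT is not computing $\Fourier{\mathcal{T}_i}$ on an $O(1)$-dimensional factor (that is trivial); it is that after slicing an entangled quadratic coset state along one register and transforming it, the global state need not be supported on a coset in the mixed basis at all, so there is no a priori reason the enriched normal form is closed --- this is where the group-case argument via bicharacters and Gauss sums genuinely uses invertibility of the operation and fails to transfer.

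Two subsidiary claims in your sketch also overreach. First, you assert that the resonance/Lagrange property of the family ``rules out the degenerate-$\mathcal{V_S}$ pathology''; the paper explicitly leaves degeneracy open even for class hypergroups and notes (footnote to the open question after theorem \ref{thm:Normal Form CSS States}) that for $\mathcal{N}=\{C_1,C_{-1},C_i\}$ in $\overline{Q}_8$ the Pauli products in (\ref{eq:Stabilizer Hypergroup CSS State}) are already linearly dependent and $|\mathcal{N}||\mathcal{N}^\perp|$ can mismatch the Hilbert space dimension --- so uniqueness of intermediate states cannot be waved through by resonance alone, and for mixed $X$--$Z$ stabilizers obtained by $D_\xi$-conjugation it would anyway only follow on the sub-family where the pre-conjugation state is of type (b). Second, the parts of your argument that do go through --- absorbing quadratic phase gates that merely multiply the wavefunction, since $|Q(x)|=1$ makes them irrelevant to a final all-register standard-basis measurement --- are already stated in the paper's discussion of extensions of theorem \ref{thm:simulation} (monomial normalizer gates in a tail before measurement can be absorbed into the measurement); the conjecture's entire content is the ``arbitrary circuit locations'' clause, which is exactly where your argument defers to the unproven reciprocity. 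In short: the strategy is sensible and consistent with the authors' own intuitions (constant-size factors, enriched CSS-like normal forms, $\beta$ mapping into invertible characters so that the $D_\xi$-twist stays controllable), but as it stands there is a genuine gap, and the statement remains a conjecture.
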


\myparagraph{Discussion: extensions of theorem \ref{thm:simulation}}\label{sect:Extensions theorem simulation}

In the light of our conjecture, we mention  a few simpler extensions of theorem \ref{thm:simulation} that we are aware of.

First, note that an efficient classical simulation is still possible if the main circuit is followed by another one  $\mathcal{C}'$ that contains any monomial normalizer gate (including quadratic-phase gates), which is then followed by a measurement in the standard basis\footnote{One can simply absorb those gates in the measurements \cite{nest_weak_simulations}.}: such circuits can prepare more types of entangled stabilizer states like (\ref{eq:Entangled State}) and the quaternionic cluster state (section \ref{sect:Quaternionic circuits}).

Second, the theorem can be easily extended to allow arbitrary CSS state / stabilizer state inputs with one further minimal assumption, namely, that  their corresponding wavefunctions can be sampled both in the  hypergroup element basis $\mathsf{B}_{\mathcal{T}}$ and in  the character basis $\mathsf{B}_{\mathcal{T}^*}$, which lets us, in particular, simulate QFTs acting on the state  (see section \ref{sect:Assumptions on Hypergroups}, condition (ii) and section \ref{sect:simulation proof}). Furthermore, if this holds for the simple CSS states of theorem \ref{thm:Normal Form CSS States}, then theorem \ref{thm:simulation} can be extended to circuits that can, e.g.,  prepare coset states as in  corollary \ref{corollary:Coset State Preparations} and/or accept coset states as inputs.

\subsection{Computability assumptions on hypergroups}\label{sect:Assumptions on Hypergroups}

In theorem \ref{thm:simulation}, we must restrict ourselves to  hypergroups with sufficient structure to let us efficiently compute within them and their character  hypergroups. Note that  assumptions of this kind are typically made in the HSP literature: for instance, in order for the HRT quantum algorithm for the HNSP \cite{Hallgren00NormalSubgroups:HSP} to be efficient one needs to be given the ability to intersect characters kernels.  The assumptions needed for theorem \ref{thm:simulation} are  listed next, followed by examples of hypergroups  that meet them.

First, in theorem \ref{thm:simulation} we assume that the hypergroup $\mathcal{T}$ as well as its dual $\mathcal{T}^*$ are \emph{efficiently computable}: we  say that a hypergroup $\mathcal{T}$ is {efficiently computable}\footnote{Efficiently computable hypergroups generalize  the black-box groups \cite{BabaiSzmeredi_Complexity_MatrixGroup_Problems_I} explored in \cite{BermejoLinVdN13_BlackBox_Normalizers}.} if its elements can be uniquely represented with $n=O(\mathrm{polylog}{|\mathcal{T}|})$ bits and there are $O(\mathrm{poly}(n))$-time classical subroutines to perform the hypergroup multiplication, i.e.,\ given two elements $x_i$, $x_j\in \mathcal{T}$ and an index $k$, we can efficiently compute the coefficient $n_{ij}^k$ for any $i,j,k$.

In theorem \ref{thm:simulation},  we  further need to assume that the involved hypergroups are what we call  \emph{doubly efficiently computable}: a hypergroup $\mathcal{T}$ is doubly efficiently computable if both $\mathcal{T}$ and $\mathcal{T}^*$ are efficiently computable and, furthermore, that the structure of their associated character tables is sufficiently well-known that we are able to efficiently perform the following tasks classically:
\begin{itemize}
\item[(i)] \textbf{Computable characters.} For any $a\in \mathcal{T}$, any character function $\Hchi_\mu(a)$ can be efficiently computed classically\footnote{For simplicity, we will assume that this can be done with perfect precision in this manuscript. It is straightforward to adapt our main results any degree of available accuracy.}.
\item[(ii)] \textbf{Simulable input states.} Quantum Fourier transforms of allowed input states can be efficiently \emph{sampled} classically, or equivalently,  the distributions $\{p_a\}$ and $\{q_\mu\}$, with $p_a \defeq \tfrac{\w{a}\w{\mathcal{X}_\mu}}{\varpi_{\mathcal{T}}}|\mathcal{X}_\mu(a)|$ for fixed $\Hchi_\mu \in \mathcal{T}^*$ and $q_\mu \defeq \tfrac{\w{a}\w{\mathcal{X}_\mu}}{\varpi_{\mathcal{T}}}|\mathcal{X}_\mu(a)|$ for fixed $a \in \mathcal{T}$, can be efficiently sampled.
\item[(iii)] \textbf{Computable dual morphisms.} For any efficiently computable hypergroup automorphism $\alpha:\mathcal{T}\rightarrow \mathcal{T}$,  its inverse $\alpha^{-1}$ and its dual automorphism (definition \ref{eq:Dual Automorphism DEFINITION}) $\alpha^*:\mathcal{T}\rightarrow \mathcal{T^*}:\chi\rightarrow f^*_{\chi}$,  can both be efficiently determined and computed. Duals of computable hypergroup homomorphisms $f:\mathcal{T}\rightarrow \mathcal{T}'$ can also be computed\footnote{Applying def.\ \ref{eq:Dual Automorphism DEFINITION} to a  homomorphism $f:\mathcal{T}\rightarrow \mathcal{T}'$ one gets a dual morphism  $f^*:\mathcal{T}'\rightarrow \mathcal{T}$  \cite[1.6.(ii)]{McMullen79_Algebraic_Theory_Hypergroups}.}.
\end{itemize}

\paragraph{Examples and remarks}

Both computability notions presented are preserved by  taking direct products $\mathcal{T}_1\times \mathcal{T}_2$. Furthermore, the notion of doubly efficiently computable hypergroup is preserved under  taking duals $\mathcal{T} \leftrightarrow \mathcal{T}^*$.

Any hypergroup of the form $\mathcal{T}_1\times \cdots \times \mathcal{T}_m$ is doubly efficiently computable if  homomorphisms are restricted to be of a product form $f_1 \times \cdots \times f_m$, where $m$ is constant, or if they act nontrivially only in a constant number of sites. As a result,  \emph{normalizer circuits} over hypergroups of the from $\mathcal{T}_1\times \cdots \times \mathcal{T}_m$ with constant $|\mathcal{T}_i|$ will always turn out to be efficiently simulable if they contain  at most $k$-local entangling gates, for any constant $k$ (theorem \ref{thm:simulation}). The examples given in section \ref{sect:Quaternionic circuits} over the quaternions were of this form.

As another example, for any finite abelian group $G$, all problems in (i-ii-iii)  can be solved  in  $O(\polylog |G|)$ time given that $G$ is explicitly given in the form $G=\Integers_{N_1}\times \cdots\times \Integers_{N_m}$. Condition (ii) holds for any abelian group stabilizer state with known stabilizer group. These results are invariant of the bit-size of any $N_i$ \cite{VDNest_12_QFTs,BermejoVega_12_GKTheorem}. 

For arbitrary finite abelian hypergroups finding simple bounds like those in \cite{VDNest_12_QFTs,BermejoVega_12_GKTheorem} is likely to be an ``impossible'' problem, since the question cannot even be addressed without classifying all conjugacy class and character hypergroups of all finite groups, whereas classifying the latter is regarded as a \emph{wild} \cite{MO_Classification_Finite_Groups,MO_Classication_Problem_Wild} problem. It is easier to prove polynomial-time  bounds for particular hypergroup/group families  that fulfill the minimal computability requirements (i-ii-ii), like in the two examples given above.

Finally, we highlight that there exist  efficiently computable hypergroups that are not \emph{doubly} efficiently computable unless there exist efficient classical algorithms for believed-to-be-hard problems like computing discrete-logarithms exist (see appendix \ref{app:Discret Log}). In the abelian group case, the associated normalizer circuits can realize  Shor's \cite{Shor} algorithms  \cite{BermejoLinVdN13_BlackBox_Normalizers} and lead \emph{exponential quantum speed-ups}. In section \ref{sect:Quantum Algorithms}, we will develop new quantum algorithms based on normalizer circuits over such ``black-box'' hypergroups.

\subsection{Proof of theorem \ref{thm:simulation}}
\label{sect:simulation proof}

We finish this section proving theorem \ref{thm:simulation} by giving an explicit classical algorithm for sampling the outcome distribution after measuring the final state  of the computation $\mathcal{C}\ket{\psi_0}$,
being $\ket{\psi_0}$ the input state. Our algorithm is efficient given that the hypergroup $\mathcal{T}$ is doubly efficiently computable (section \ref{sect:Assumptions on Hypergroups}). The key technique that we exploit in our simulation is a normal formal for CSS normalizer circuits.
\begin{lemma}[\textbf{Normal form.}]\label{thm:Normal form} Let $\mathcal{C}$ be a normalizer circuit over a $\mathcal{T}$ as in theorem \ref{thm:simulation}. Then, $\mathcal{C}$ can be put in a layered normal form $\mathcal{C}=M  F$, where $F$ is either trivial or a QFT and $M$ is a monomial circuit\footnote{That is, a circuit whose transformation in matrix form has one entry per row and column.} of automorphism gates and Pauli gates. Furthermore, if  $\mathcal{T}$ is  doubly efficiently computable, then $M$ and $F$ can be computed classically efficiently.
\end{lemma}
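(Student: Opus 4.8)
The plan is to rewrite $\mathcal{C}$ by commuting every QFT to the right, so that it acts first on the input, and collapsing the QFTs against one another, thereby leaving a single monomial layer on the left. This rests on three observations. \emph{First}, every non-QFT gate in $\mathcal{C}$ is monomial: Pauli gates of type $X$ are restricted to invertible $a$, so $\PX{\mathcal{T}(t)}(a)\ket{b}=\ket{ab}$ is a permutation; $\PZ{\mathcal{T}(t)}(\Hchi_\mu)$ is diagonal; and an automorphism gate $U_\alpha$ is a permutation. Since quadratic phase gates are excluded by hypothesis, the QFTs are the only non-monomial gates occurring in $\mathcal{C}$.

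\emph{Second}, conjugation by a QFT maps monomial gates to monomial gates. For Pauli gates this is precisely (\ref{eq:QFTs are Clifford}), which sends $\PX{\mathcal{T}}(a)$ to $\PZ{\mathcal{T}^*}(a)$ and $\PZ{\mathcal{T}}(\Hchi_\mu)$ to $\PX{\mathcal{T}^*}(\overline{\Hchi_\mu})$; both are monomial and still admissible gates, since invertibility is preserved by the involution. For an automorphism gate I would compute directly from (\ref{eq:Quantum Fourier Transform over Hypergroup T}) that $\Fourier{\mathcal{T}}\,U_\alpha=U_{\alpha^{-*}}\,\Fourier{\mathcal{T}}$, where $\alpha^{-*}$ is the inverse dual automorphism (definition \ref{eq:Dual Automorphism DEFINITION}); here one uses $\Hchi_\mu(\alpha(a))=\Hchi_{\alpha^*(\mu)}(a)$ and the fact that automorphisms preserve weights, so the QFT coefficients match after reindexing the character sum. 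Hence $\Fourier{}\,m=m'\,\Fourier{}$ for a monomial $m'$ whenever $m$ is monomial. \emph{Third}, two QFTs over dual hypergroups collapse: $\Fourier{\mathcal{T}^*}\Fourier{\mathcal{T}}=I$. I would obtain this by expanding both factors with (\ref{eq:Quantum Fourier Transform over Hypergroup T}), rewriting $\widetilde{\Hchi_b}(\Hchi_\mu)=\overline{\Hchi_\mu}(b)$ via duality (\ref{eq:Hypergroup Duality}), and collapsing the character sum using orthogonality (\ref{eq:Character Orthogonality}) read on $\mathcal{T}^*$, together with weight--order duality $\varpi_{\mathcal{T}}=\varpi_{\mathcal{T}^*}$ (\ref{eq:Weight Duality}), which fixes the normalization to give $\delta_{a,b}$. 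This is the relevant case because each QFT flips the active label $\mathcal{T}(t)$, so two QFTs that become adjacent necessarily act on a hypergroup and its dual.

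With these facts I would run an induction over the gates of $\mathcal{C}$, processed in the order they are applied, maintaining the invariant that the processed block equals $M'F'$ with $M'$ monomial and $F'$ either trivial or a single QFT. A monomial gate is absorbed into $M'$ (a product of monomial operators is monomial). A QFT is first commuted past $M'$ (producing a new monomial layer by the second observation) and then either installed as $F'$, if $F'$ was trivial, or annihilated against the existing QFT by the third observation, resetting $F'$ to trivial. At termination this yields $\mathcal{C}=MF$ of the claimed form, with $F$ acting first. For efficiency, each of the $\mathrm{poly}$-many steps conjugates the stored (poly-size) monomial data by a QFT, which requires only character evaluations (assumption (i)) and computation of dual automorphisms (assumption (iii)); both are efficient for a doubly efficiently computable $\mathcal{T}$, so $M$ (kept as a list of Pauli and automorphism gates) and $F$ admit efficient classical descriptions.

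The hard part will be the two algebraic identities underlying the second and third observations: verifying $\Fourier{\mathcal{T}^*}\Fourier{\mathcal{T}}=I$ with the correct normalization from orthogonality and duality, and checking that the automorphism conjugation stays within the class of weight- and invertibility-preserving monomial normalizer gates, so that the gates produced along the way are genuinely admissible. Everything else is comparatively routine, precisely because excluding quadratic-phase gates leaves the QFT as the sole non-monomial ingredient; this is why the non-unitarity/non-monomiality obstructions stressed elsewhere in the paper do not intrude here.
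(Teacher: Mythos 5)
Your proof is correct and follows essentially the same route as the paper's: both decompose $\mathcal{C}$ into a monomial layer times a single QFT by using the commutation identity $\Fourier{\mathcal{T}}U_\alpha = U_{\alpha^{-*}}\Fourier{\mathcal{T}}$ together with the Pauli conjugation rules (\ref{eq:QFTs are Clifford}) to push QFTs rightward, collapsing adjacent QFTs (which necessarily act on dual hypergroups) to the identity, and both appeal to the doubly-efficient-computability assumptions, in particular the computability of $\alpha^{-*}$, for the efficiency claim. The only difference is bookkeeping: you run a single left-to-right inductive sweep maintaining the invariant $M'F'$, whereas the paper first extracts all Pauli gates into a front layer $P$ and then reorders the remaining automorphism/QFT circuit into $AF$ --- the same normal form obtained in two passes instead of one.
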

\begin{proof}[Proof of lemma \ref{thm:Normal form}]
First, note that the Pauli gates in the circuit (which are   of the form $\PX{\mathcal{T}}(C)$,  $\PZ{\mathcal{T}}(\mathcal{X})$ in the conjugacy-class basis and of the form  $\PZ{\mathcal{T}^*}(C)$,  $\PX{\mathcal{T}^*}(\mathcal{X})$ in the character basis)  can be conjugated with all the other normalizer gates using the update rules in theorem \ref{thm:Evolution of Stabilizer States}(\ref{eq:Automorphism gates are Clifford},\ref{eq:Quadratic Phase gates are Clifford}). As a result, if $U$ is a Pauli gate at an intermediate circuit position $\mathcal{C}=\mathcal{C}_2 U \mathcal{C}_1$, it can be removed from its location by adding a new Pauli-correction term $U'=\mathcal{C}_2 U \mathcal{C}_2^\dagger$ at the start of the circuit. By doing this, we put $\mathcal{C}$ in an intermediate two-layered normal form $\mathcal{C}= P \mathcal{C'}$, where $P$ is a circuit of Pauli gates  and $\mathcal{C}'$ collects all  QFTs and all automorphism gates that were present in $\mathcal{C}$, in the same temporal order.

We finish the proof of the lemma by showing that  $\mathcal{C}'$ can be put in a normal form $A S$ where $S$ is either the identity gate or a QFT, and $A$ is a circuit of automorphism gates. Once we have that, we can combine $P$ and $A$ into a single layer $M$ and obtain $\mathcal{C}=M S$.

To this end, we use that group automorphism gates can be conjugated with  Fourier transforms in an elegant way, by replacing them with dual automorphism gates. Specifically, we have $\mathcal{F_T} U_\alpha = U_{\alpha^*}^{-1} \mathcal{F_T} = U_{\alpha^{-*}} \mathcal{F_T}$ and $\mathcal{F_{T^*}} U_{\alpha^{-*}}=U_\alpha \mathcal{F_{T^*}}$, where $\alpha^{-*}$ is the dual of $\alpha^{-1}$ (definition \ref{eq:Dual Automorphism DEFINITION}). This follows by simple calculation, using (\ref{eq:Quantum Fourier Transform over Hypergroup T}) and the fact that automorphisms cannot change the weights of elements. Furthermore, we in fact have
\begin{equation}\notag
U_\alpha\ket{\mathcal{X}_\mu}=\sum_{a\in\mathcal{T}}\sqrt{\tfrac{\w{a} \ws{\overline{\Hchi_\mu}}}{\varw{\mathcal{T}}}} \, \overline{\Hchi_{\mu}}(a)\ket{\alpha(a)}\stackrel{b:=\alpha(a)}{=}\sum_{b\in\mathcal{T}}\sqrt{\tfrac{\w{b} \ws{\overline{\Hchi_{\alpha^{-*}(\mu)}}}}{\varw{\mathcal{T}} }}  \, \overline{\Hchi_{\alpha^{-*}(\mu)}}(b)\ket{b}=U_{\alpha^{-*}}\ket{\Hchi_\mu}.
\end{equation}
Hence, it follows that $U_\alpha = U_{\alpha^{-*}}$ as gates, which means that can implement the latter since we can implement the former by assumption.

Applying these rules, we can move the QFTs before the automorphisms. Furthermore, since the effect of each QFT is just to change the designated of the circuit, any product of QFTs is equivalent to a single QFT gate, $F$, that changes the basis once (or not at all). By this proces, we get $\mathcal{C'}=AF$, where $A$ is a product of automorphism gates.

Finally, the assumption that both $\mathcal{T}$ and $\mathcal{T^*}$ are efficiently computable (and, hence, so is any hypergroup of form in (\ref{eq:Hypergroups})) means that we can efficiently compute the conjugations to determine the gates in $P$ using (\ref{eq:Automorphism gates are Clifford}, \ref{eq:Quadratic Phase gates are Clifford}). The assumption that $\mathcal{T}$ is doubly efficiently computable also includes the assumption that we can find $\alpha^{-*}$ efficiently for any $\alpha$, so we can compute the automorphisms in $A$ efficiently as well. We can determine $F$ at the same time with no additional assumptions, so we can see that $M$ and $F$ from the statement of the lemma can be determined efficiently under our assumption on $\mathcal{T}$.
\end{proof}

To prove the theorem, we first apply lemma \ref{thm:Normal form} to put $\mathcal{C}$ in normal form, $M F$. Next, we note that $M$ acts as $M\ket{a}=\gamma_a\ket{\pi(a)}$, where $\gamma_a$ has unit modulus and $\pi$ is some permutation on the elements of the basis for the image of $F$. Thus, measuring in the final basis, after applying $M$, is equivalent to measuring in the basis after $F$ and then applying $\pi$, which consists of just the automorphisms and Pauli $X$-gates from $M$. 

The assumption that $\mathcal{T}$ is doubly efficiently computable (section \ref{sect:Assumptions on Hypergroups}) tells us that all of this can be done efficiently. We can compute $M$ and $F$ under these assumptions by lemma \ref{thm:Normal form}. We can simulate a measurement after applying $F$ by assumption (ii). Then we can compute the result of each automorphism and Pauli $X$-gates from $M$ by assumption (i) and the assumption that $\mathcal{T}$ and $\mathcal{T^*}$ are efficiently computable, respectively. Q.E.D.

\paragraph{Remark}

As mentioned in the discussion after theorem \ref{thm:simulation},  one can straightforwardly extend this simulation method to any input $\ket{\psi_0}$ if measurements on $\ket{\psi_0}$ on the hypergroup element and character bases are easy to simulate (because we only use that information about the state).

\section{Quantum algorithms for HNSP and abelian HSHP}
\label{sect:Quantum Algorithms}

In this last section, we apply the hypergroup methods of previous sections to the development new  quantum algorithms for the HNSP and for the CC-HSHP.

We give three quantum algorithms for the HNSP of increasing generality (and complexity). These algorithms are interesting because they are fundamentally different from the one of Hallgren et al. \cite{Hallgren00NormalSubgroups:HSP}, as they exploit the hypergroup structure of the HNSP:  to solve the problem, we turn it into it into a CC-HSHP (using theorems \ref{thm:reduction1}-\ref{thm:reduction2}, section \ref{sect:HNSP}) and solve the resulting abelian HSHP instead. Our results show that, in the cases considered here, the HNSP is easy because the CC-HSHP is easy, which gives an explanation for why the HNSP is easy in terms of the presence of an \emph{abelian} algebraic structure.

Our quantum algorithms for the CC-HSHP are interesting in their own right (beyond their use in solving the HNSP) as no provably efficient algorithms were previously known.

As we will see shortly, all of the quantum algorithms we consider here fit within our normalizer circuit model. This means that they can be analyzed using the stabilizer formalism of section \ref{sect:Stabilizer Formailsm}. The results of that section, especially theorem \ref{thm:Normal Form CSS States}, will be critical to all of our analysis below.

We begin in section \ref{sect:Comparison HRT vs AKR} by looking at a previously proposed algorithm for the HSHP \cite{Amini2011fourier}. We show that, in one subclass of cases, not only can we reduce the HNSP to the CC-HSHP (as we saw in section~\ref{sect:HNSP}), but in fact, the algorithm of \cite{Hallgren00NormalSubgroups:HSP} for the HNSP is identical, under a simple vector space isomorphism, to the proposed algorithm of \cite{Amini2011fourier} applied to the CC-HSHP. This demonstrates an even deeper connection between the HNSP and the CC-HSHP than that demonstrated by the reductions of section~\ref{sect:HNSP}.

In section~\ref{sect:AKR Analysis}, we use our stabilizer formalism to analyze the proposed algorithm of \cite{Amini2011fourier}. We describe instances where it does and does not work correctly. This analysis leads us to a new algorithm, described in the same section, which works correctly for all groups.

In section~\ref{sect:New Algorithms}, we develop new quantum algorithms, taking advantage of the unique structure of abelian hypergroups. The resulting algorithms work for all nilpotent (hyper)groups (along with some non-nilpotent groups) and requires fewer assumptions than those of section~\ref{sect:AKR Analysis}. As with the algorithms of section~\ref{sect:AKR Analysis}, the stabilizer formalism remains key to our analysis.

Finally, in section~\ref{sect:Open Problems}, we mention a few further results and some open problems.

\subsection{A comparison of two simple algorithms for the HNSP}
\label{sect:Comparison HRT vs AKR}

To illustrate ideas we use later and introduce the quantum algorithm proposed in \cite{Amini2011fourier} for the CC-HSHP, we begin by discussing it and comparing it to the standard algorithm for the HNSP \cite{Hallgren00NormalSubgroups:HSP} in the case when the oracle $f: G \rightarrow \{0,1\}^*$ is a class function. This happens if and only if $G/N$ is abelian, where $N$ is the hidden normal subgroup. We show that, for this case, the two algorithms actually \emph{coincide}: the HNSP becomes an instance of the CC-HSHP and the same algorithm solves both problems.

As we saw in section~\ref{sect:HNSP}, such an $f$ is easily transformed into an oracle $\overline{f} : \Conj{G} \rightarrow \{0,1\}^*$ for the CC-HSHP since each coset $xN$ is in a conjugacy class by itself. This allows us to turn algorithms for the CC-HSHP into algorithms for the HNSP (and vice versa). We now compare two algorithms designed for this common problem via two different perspectives.

The quantum algorithm of Hallgren, Russell, and Ta-Shma for HNSP \cite{Hallgren00NormalSubgroups:HSP}, henceforth referred to as ``HRT'', operates as follow:
\begin{enumerate}
\item Initialize a workspace register in a quantum state $\ket{\chi_1}$. 

\item Apply an inverse QFT in order to obtain a superposition $\sum_{g\in
G}\ket{g}$.

\item Evaluate the oracle on an ancillary register to obtain $\sum_g
\ket{g,f(g)}$. Measure the second register to project the state of the first
onto a coset state $\ket{xN}$, for some $x$ drawn uniformly at random.

\item Apply a QFT to $\ket{xN}$ and measure the label $\mu$ of an
irreducible representation.

\item Repeat the experiment $T$ times and record the outcomes $\mu_1, \ldots,
\mu_T$.

\item Determine the subgroup $\bigcap_i \ker \chi_{\mu_i}$. With exponentially
high probability $1-O(\tfrac{1}{2^T})$, the subgroup $\bigcap_i \ker
\chi_{\mu_i}$ is the hidden subgroup $N$.
\end{enumerate}
The quantum part of this algorithm (steps 1--5) can be implemented efficiently
if we have an efficient implementation of the QFT. However, the complexity of
the classical post-processing (step 6) is unknown, in general.

The quantum algorithm of Amini, Kalantar, and Roozbehani applies to the hidden
subhypergroup problem (HSHP). When applied to the conjugacy class hypergroup,
$\Conj{G}$, we refer to this algorithm as ``AKR''. It takes as input an oracle
$\overline{f} : \overline{G} \rightarrow \{0,1\}^*$ and operates as follows:
\begin{enumerate}
\item Initialize a workspace register in a quantum state $\ket{\Hchi_1}$. 

\item Apply an inverse QFT in order to obtain a superposition $\abs{G}^{-1/2}
\sum_{C_x\in \Conj{G}} \sqrt{\w{C_x}} \ket{C_x}$.

\item Evaluate the oracle on an ancillary register to obtain $\abs{G}^{-1/2}
\sum_{C_x} \sqrt{\w{C_x}} \ket{C_x,\overline{f}(C_x)}$. Measure the second register to
project the state of the first onto a hypergroup coset state $\ket{C_x
\subhyp{N}{G}}$, for some $x$ drawn uniformly at random.\footnote{Note that we
are working in the Hilbert space with basis $\set{\ket{C_x}}{C_x \in \Conj{G}}$ and
the state $\ket{C_x \subhyp{N}{G}}$ is a superposition of those conjugacy
classes that make up $C_x \subhyp{N}{G}$.}

\item Apply a QFT to the state $\ket{C_x \subhyp{N}{G}}$ and measure the label
$\Hchi_\mu$ of a character.

\item Repeat the experiment $T$ times and record the outcomes $\Hchi_{\mu_1},
\ldots, \Hchi_{\mu_T}$.

\item Determine the subhypergroup of classes in the kernels of all the
$\Hchi_{\mu_i}$'s.
\end{enumerate}
Like HRT, steps 1--5 can be implemented efficiently while the complexity of
step 6 is unknown, in general.

As the reader can see, the two algorithms perform the same steps. First,
they apply an inverse Fourier transform to prepare a superposition over the
entire basis on which they operate. Next, they apply their respective oracles
to an additional register, measure, and throw away its value. Finally, they
apply a Fourier transform and measure in the new basis. For AKR, this is
measuring a character label, while for HRT, this is measuring the label of an
irrep. It is critical to note that HRT does not use the value of the matrix
index register, which is part of the output of the QFT for the group.

When $f$ is a class function (so that $G/N$ is abelian), we have $C_x\subhyp{N}{G} = xN$
since each coset $xN$ is in its own conjugacy class. Hence, we can see that the
two algorithms are in matching states after steps 1--3. In particular, the
state of HRT after step 3 is conjugation invariant, and since the QFT preserves
conjugation invariance, so is the state of HRT after step 4. In fact, it is
easy to check that the QFT of $G$ applied to the conjugation invariant subspace
is exactly the QFT of $\Conj{G}$. More precisely, we have the following
result, which is also easy to check.
\begin{proposition}[\textbf{HRT = AKR}]
If $f : G \rightarrow \{0,1\}^*$ is a class function, then HRT operates
entirely within the conjugation invariant subspace $\Comp_{\overline{G}}$. Furthermore,
within this subspace, HRT is identical to AKR.
\end{proposition}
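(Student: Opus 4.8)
The plan is to build an explicit identification between the conjugation-invariant subspace $\mathcal{I}_{\Conj{G}}\subset\Comp_G$ and the abstract hypergroup Hilbert space $\Comp_{\Conj{G}}$, and then to check two things: that HRT's state never leaves $\mathcal{I}_{\Conj{G}}$, and that under this identification each of HRT's five quantum steps agrees with the corresponding step of AKR. Both spaces carry the orthonormal conjugacy-class basis $\{\ket{C_g}\}$ --- the invariant subspace via the embedding (\ref{eq:Bases Nonabelian Group}), the hypergroup space by definition --- so I take $\Phi\colon\mathcal{I}_{\Conj{G}}\to\Comp_{\Conj{G}}$ to be the unitary fixing this basis. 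Since both algorithms share the same classical post-processing (step 6), a step-by-step agreement of the quantum parts (steps 1--5) suffices.

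For the first assertion I track HRT's state. The input $\ket{\chi_1}$ is the trivial character state, i.e.\ the character-basis vector $\ket{\mathcal{X}_1}\in\mathcal{I}_{\Conj{G}}$; the inverse QFT carries it to the uniform superposition $|G|^{-1/2}\sum_{g}\ket{g}$, which is manifestly conjugation invariant. The only nontrivial invariance check is at step 3, where the measurement collapses the state onto a coset state $\ket{xN}$, and I must show $xN$ is a union of conjugacy classes. This is where the class-function hypothesis enters: since $f$ is constant on conjugacy classes iff $G/N$ is abelian, we have $[G,G]\subseteq N$, so for any $g\in xN$ and $a\in G$ the identity $g^{a}=g\,[g,a]\in gN=xN$ shows $xN$ is conjugation-closed, whence $\ket{xN}\in\mathcal{I}_{\Conj{G}}$. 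Invariance is then preserved by the QFT (justified below), so HRT remains in $\mathcal{I}_{\Conj{G}}$ through step 4.

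For the second assertion I verify that $\Phi$ intertwines the two algorithms state by state. The inputs match, $\Phi\ket{\chi_1}=\ket{\mathcal{X}_1}=\ket{\Hchi_1}$. Rewriting the uniform superposition in the class basis via $\sum_{h\in C_g}\ket{h}=\sqrt{|C_g|}\,\ket{C_g}$ and $\w{C_g}=|C_g|$ gives $|G|^{-1/2}\sum_{C_g}\sqrt{\w{C_g}}\,\ket{C_g}$, precisely AKR's state after step 2. For the coset states I use $C_x\subhyp{N}{G}=xN$ (valid since $G/N$ is abelian) together with $\w{C_g}=|C_g|$ and $\varpi_{C_x\subhyp{N}{G}}=|N|$ to see that both $\ket{xN}$ and the hypergroup coset state (\ref{eq:Hypergroup Coset State}) equal $|N|^{-1/2}\sum_{C_g\subseteq xN}\sqrt{|C_g|}\,\ket{C_g}$, so they agree under $\Phi$. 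Finally, the QFT step agrees because the QFT of $G$ restricted to $\mathcal{I}_{\Conj{G}}$ coincides with $\Fourier{\Conj{G}}$, and measuring the irrep label $\mu$ is exactly measuring the character $\Hchi_\mu$. Agreement of steps 1--5 then yields the identity of the two algorithms on $\mathcal{I}_{\Conj{G}}$.

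The step I expect to require the most care --- and the one underpinning both ``QFT preserves invariance'' and ``QFT of $G$ restricted equals $\Fourier{\Conj{G}}$'' --- is analyzing the nonabelian QFT on an invariant state and seeing that HRT's discarding of the matrix-index registers is harmless. The argument is a Schur's-lemma computation: for $\ket{\psi}=\sum_g\psi(g)\ket{g}$ with $\psi$ a class function, the matrix $\sum_g\psi(g)\rho_\mu(g)$ commutes with every $\rho_\mu(a)$ and is therefore a scalar $\lambda_\mu I_{d_\mu}$, so the full QFT sends $\ket{\psi}$ to $\sum_\mu\lambda_\mu\sqrt{d_\mu/|G|}\sum_i\ket{\mu,i,i}$ --- supported on the ``diagonal'' with amplitude independent of $i$. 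These diagonal vectors $\sum_i\ket{\mu,i,i}$ are exactly the images of the character-basis states $\ket{\mathcal{X}_\mu}$, confirming simultaneously that $\Fourier{}$ preserves $\mathcal{I}_{\Conj{G}}$ and that its restriction is $\Fourier{\Conj{G}}$; the matrix-index registers carry no extra information, so discarding them lands HRT precisely in AKR's hypergroup measurement. Everything remaining is routine bookkeeping with the weights $\w{C_g}=|C_g|$, $\w{\Hchi_\mu}=d_\mu^2$ and $\varpi_{\Conj{G}}=|G|$.
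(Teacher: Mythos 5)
Your proof is correct and follows essentially the same route as the paper: the paper also matches the two algorithms step by step after observing that a class-function oracle forces $G/N$ abelian (so cosets $xN$ are unions of conjugacy classes and $C_x\subhyp{N}{G}=xN$), and it establishes the key fact that $\Fourier{G}$ restricted to the conjugation-invariant subspace coincides with $\Fourier{\Conj{G}}$ via the same Schur's-lemma computation (with the harmless discarding of the matrix-index registers) that you perform, carried out in its appendix on implementing normalizer circuits over $\Conj{G}$. The only difference is one of presentation: the paper leaves the step-by-step matching as an informal "easy to check" argument, whereas you make the intertwining isometry and the weight bookkeeping explicit.
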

The first part of the following lemma simplify our analysis of the algorithm. (And the second part will be useful to us later on.)
\begin{lemma}[\cite{Roth75_Character_Conjugacy_Hypergroups}]\label{lemma:Roth}
For any normal subgroups $N,K$ such that $N\trianglelefteq K \trianglelefteq G$, the  hypergroup $\overline{G}/\subhyp{N}{G}$ is isomorphic to $\overline{G/N}$ (the class hypergroup of $G/N$) and $\subhyp{K}{G}/\subhyp{N}{G}$ is a subhypergroup of $\overline{G}/\subhyp{N}{G}$ isomorphic to  $\subhyp{K/N}{G/N}$ (which is a subhypergroup of $\overline{G/N}$).
\end{lemma}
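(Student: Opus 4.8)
The plan is to build the isomorphism from the canonical quotient homomorphism $\pi : G \to G/N$, $x \mapsto xN$, which is well-defined and surjective since $N \trianglelefteq G$. The key structural fact is that a surjective group homomorphism carries conjugacy classes onto conjugacy classes: for $g \in G$ one has $\pi(C_g) = C_{gN}$, since surjectivity gives $\{\pi(a)^{-1}\pi(g)\pi(a) : a \in G\} = \{\overline b^{-1}\pi(g)\overline b : \overline b \in G/N\}$. Hence $\pi$ descends to a well-defined surjection $\overline\pi : \overline{G} \to \overline{G/N}$, $C_g \mapsto C_{gN}$ (well-definedness: $C_g = C_{g'} \Rightarrow g' = g^a \Rightarrow \pi(g') = \pi(g)^{\pi(a)}$). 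First I would check that $\overline\pi$, extended linearly and sending normalized classes to normalized classes, is a hypergroup homomorphism in the sense of section \ref{sect:Glossary}: it manifestly fixes the identity ($C_e \mapsto C_N$) and commutes with the involution ($C_{g^{-1}} \mapsto C_{(gN)^{-1}}$).

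The crux is to show $\overline\pi$ preserves the hyperoperation, i.e.\ that the structure constants push forward correctly. Recall (section \ref{sect:ConjClassHypergroup}) that $m_{C,D}^E$ equals the probability that $cd \in E$ when $(c,d)$ is drawn uniformly from $C \times D$. I would use that $\pi$ restricts to an \emph{equidistributing} map $C \to C_{gN}$, all fibers of $\pi|_C$ having equal size because conjugation by $a \in G$ permutes $C$ and induces conjugation by $\pi(a)$, which acts transitively on $C_{gN}$. Consequently, if $(c,d)$ is uniform on $C \times D$ then $(\pi(c),\pi(d))$ is uniform on $C_{gN} \times C_{hN}$, and since $\pi(cd) = \pi(c)\pi(d)$ the distribution of $\pi(cd)$ over classes of $G/N$ is the $\pi$-pushforward of that of $cd$. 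This yields $\overline m_{\overline C,\overline D}^{\,\overline E} = \sum_{E : \overline\pi(E) = \overline E} m_{C,D}^E$, which is exactly the defining relation of the quotient hyperoperation $r_{C\subhyp{N}{G},D\subhyp{N}{G}}^{E\subhyp{N}{G}}$ from section \ref{sect:Glossary}, provided the fibers of $\overline\pi$ coincide with the cosets of $\subhyp{N}{G}$.

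I would therefore verify the fiber/coset identification: $C_{g'} \in C_g\subhyp{N}{G}$ iff $C_{g'}$ appears in some product $C_g C_n$ with $C_n \subseteq N$, iff $g' \sim g^a n$ for some $a \in G$, $n \in N$, iff $\pi(g') \in C_{gN}$, iff $\overline\pi(C_{g'}) = \overline\pi(C_g)$. Thus the fibers of $\overline\pi$ are exactly the $\subhyp{N}{G}$-cosets and $\overline\pi$ induces a bijection $\overline{G}/\subhyp{N}{G} \to \overline{G/N}$. For the final consistency check I would confirm the weights agree: $\ws{C_g\subhyp{N}{G}} = \varpi_{C_g\subhyp{N}{G}}/\varpi_{\subhyp{N}{G}} = |\pi^{-1}(C_{gN})|/|N| = (|N|\,|C_{gN}|)/|N| = |C_{gN}| = \ws{C_{gN}}$, using that the union of the classes in a coset is $\pi^{-1}(C_{gN})$ and $\pi$ is $|N|$-to-one. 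Together these establish the first isomorphism $\overline{G}/\subhyp{N}{G} \cong \overline{G/N}$.

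For the second statement, since $N \trianglelefteq K \trianglelefteq G$ we have $\subhyp{N}{G} \le \subhyp{K}{G} \le \overline{G}$, so $\subhyp{K}{G}/\subhyp{N}{G}$ is a subhypergroup of $\overline{G}/\subhyp{N}{G}$, its elements being the cosets $C_g\subhyp{N}{G}$ with $g \in K$. Under the isomorphism above these cosets map to $\{C_{gN} : g \in K\}$, which is precisely the set of conjugacy classes of $G/N$ supported on the normal subgroup $K/N \trianglelefteq G/N$, i.e.\ $\subhyp{K/N}{G/N}$. Restricting the isomorphism to this subhypergroup finishes the proof. I expect the main obstacle to be the structure-constant computation of the second paragraph, specifically the equidistribution of $\pi|_C$ and the matching of the pushforward constants to the quotient-hypergroup definition; the remaining identity, involution, weight, and subhypergroup checks are bookkeeping parallel to the group isomorphism theorems.
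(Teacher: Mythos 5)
Your proof is correct, but there is nothing in the paper to compare it against: the paper states this lemma with a citation to Roth's 1975 work and supplies no proof of its own, so your argument is a self-contained verification of an imported result. The skeleton is sound throughout: (i) the quotient map $\pi : G \to G/N$ pushes classes onto classes, giving $\overline{\pi} : \overline{G} \to \overline{G/N}$; (ii) the equidistribution of $\pi$ restricted to a class $C_g$ is exactly right, since conjugation by $a$ permutes the fibers of $\pi|_{C_g}$ according to the (transitive) conjugation action of $G/N$ on $C_{gN}$; (iii) with the paper's normalization, $m_{C,D}^{E}$ \emph{is} the probability that a uniform pair multiplies into $E$, so your pushforward identity $\bar m_{\bar C,\bar D}^{\bar E} = \sum_{E:\overline{\pi}(E)=\bar E} m_{C,D}^{E}$ matches the paper's quotient structure constants $r_{C\subhyp{N}{G},D\subhyp{N}{G}}^{E\subhyp{N}{G}} = \sum_{E'\in E\subhyp{N}{G}} m_{C,D}^{E'}$ once the fibers of $\overline{\pi}$ are identified with the cosets of $\subhyp{N}{G}$, which your equivalence $g' \sim g^a n \iff \pi(g') \in C_{gN}$ establishes; and (iv) the weight computation $\ws{C_g\subhyp{N}{G}} = |C_{gN}|$ is consistent with $\varpi_{\subhyp{N}{G}} = |N|$ and the $|N|$-to-one property of $\pi$. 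Two points deserve a sentence each in a polished write-up: first, your structure-constant identity simultaneously shows that the paper's quotient constants are independent of the chosen coset representatives (the paper defines them via representatives, so well-definedness is implicitly being proved here, not assumed); second, for the statement about $K$, you should note explicitly that the coset $C_g\subhyp{N}{G}$ computed inside the subhypergroup $\subhyp{K}{G}$ coincides with the coset computed inside $\overline{G}$ --- this holds because products of classes contained in $K$ remain supported in $K$, so no new classes enter --- after which the restriction argument goes through verbatim. What your route buys is an elementary counting proof, internal to the paper's own definitions, of a fact the paper outsources; the cost is only the bookkeeping you have largely already done.
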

Since $G/N$ is abelian, the quotient hypergroup $\Conj{G} / \Conj{N}_G \cong \Conj{G/N}$ is actually a group. Hence, it follows immediately from the standard results on Fourier sampling of abelian groups \cite{kitaev_phase_estimation} that both algorithms are correct in this case as they are actually just performing Fourier sampling of an abelian group.

\begin{theorem}[\textbf{CC-HSHP is easy, I}]
\label{thm:easy1}
Let $G$ be a group. Suppose that we are given a function $\overline{f} :
\Conj{G} \rightarrow \{0,1\}^*$ hiding the subhypergroup corresponding to a
normal subgroup $N \lhd G$ such that $G/N$ is abelian, and suppose that we can
efficiently compute the QFT for $\Conj{G}$. Then there is an efficient
quantum algorithm for the CC-HSHP.
\end{theorem}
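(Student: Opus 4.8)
The plan is to run the quantum subroutine of AKR (steps 1--5) and then argue that, under the hypothesis that $G/N$ is abelian, the entire procedure is nothing more than ordinary abelian-group Fourier sampling, so that the standard classical post-processing of step 6 recovers $\subhyp{N}{G}$ efficiently. First I would use lemma \ref{lemma:Roth} (with $K = G$) together with the abelianness of $G/N$ to establish that the quotient hypergroup $\Conj{G}/\subhyp{N}{G} \cong \Conj{G/N}$ is in fact the abelian \emph{group} $G/N$, since every conjugacy class of an abelian group is a singleton and its class hypergroup collapses to the group itself. By subhypergroup duality (section \ref{sect:Glossary}), the annihilator $\subhyp{N}{G}^\perp$ is then isomorphic to $(\Conj{G}/\subhyp{N}{G})^* \cong (G/N)^*$, an abelian group.

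Next I would analyze the output of the quantum part. The oracle-plus-measurement step projects the state onto a coset state $\ket{C_x \subhyp{N}{G}}$, and applying the QFT and measuring yields a character $\Hchi_\mu \in \subhyp{N}{G}^\perp$. The key observation is that, because $G/N$ is abelian, each such $\Hchi_\mu$ is a \emph{linear} (one-dimensional) character of $G/N$ lifted to $G$; hence it has weight $\w{\Hchi_\mu} = 1$ and takes unimodular values $|\Hchi_\mu(C_x)| = 1$. Feeding this into the measurement amplitudes of theorem \ref{thm:Normal Form CSS States}(b) (the right-hand side of (\ref{eq:Normal Form CSS state}), with trivial $\mathcal{X}_\varsigma$), the outcome probabilities become flat, so each experiment returns a \emph{uniformly random} element of the abelian group $\subhyp{N}{G}^\perp \cong (G/N)^*$. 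This is exactly abelian Fourier sampling of $G/N$, consistent with the preceding discussion and the proposition HRT $=$ AKR.

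I would then finish by invoking the standard analysis of abelian HSP Fourier sampling \cite{kitaev_phase_estimation}: after $T = O(\log|G|)$ repetitions, the sampled characters $\Hchi_{\mu_1}, \dots, \Hchi_{\mu_T}$ generate the full annihilator $\subhyp{N}{G}^\perp$ with probability $1 - O(2^{-T})$, and by hypergroup duality the hidden subhypergroup is recovered as $\subhyp{N}{G} = (\subhyp{N}{G}^\perp)^\perp = \bigcap_i \ker \Hchi_{\mu_i}$, which is step 6 of AKR. The quantum part is efficient by the hypothesis that we have an efficient QFT for $\Conj{G}$. The step that needs the most care—and the one that is genuinely open for the \emph{general} CC-HSHP—is the classical post-processing in step 6; here it becomes tractable precisely because the relevant structure $\subhyp{N}{G}^\perp$ is an \emph{abelian group}, so computing the common kernel reduces to solving a linear system over $G/N$, which is efficient. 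This is exactly why specializing to abelian $G/N$ turns the otherwise unknown classical step into a routine computation.
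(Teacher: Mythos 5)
Your proposal is correct and follows essentially the same route as the paper: invoke lemma \ref{lemma:Roth} to identify $\Conj{G}/\subhyp{N}{G} \cong \Conj{G/N}$ with the abelian \emph{group} $G/N$, observe that AKR then reduces to ordinary abelian Fourier sampling, and conclude via the standard abelian HSP analysis. The only difference is that you explicitly verify the flat outcome distribution over $\subhyp{N}{G}^\perp$ using theorem \ref{thm:Normal Form CSS States}(b) (weights $\w{\Hchi_\mu}=1$, unimodular character values), a detail the paper delegates to the cited standard results; this is a harmless and correct elaboration.
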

 
As a corollary of theorem \ref{thm:easy1}, we can see that there is an efficient hypergroup-based
algorithm for solving HNSP in the case when the oracle $f : G \rightarrow
\{0,1\}^*$ is a class function.

\begin{corollary}[\textbf{HNSP is easy, I}]
Let $G$ be a group. Suppose that we are given a hiding function $f : G
\rightarrow \{0,1\}^*$ that is a also class function.  If we can efficiently
compute the QFT for $G$ and compute efficiently with conjugacy classes of $G$,
then we can efficiently solve this HNSP.
\end{corollary}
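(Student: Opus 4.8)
The plan is to obtain this statement as a direct corollary by chaining together the reduction of Theorem~\ref{thm:reduction1} with the efficient quantum algorithm of Theorem~\ref{thm:easy1}, checking along the way that every hypothesis of those two results is met. First I would record the structural fact that, because $f$ is a class function, the hidden subgroup $N$ necessarily satisfies that $G/N$ is abelian; this is precisely the footnoted characterisation of when a hiding function is constant on conjugacy classes, and it is exactly the promise needed to invoke Theorem~\ref{thm:easy1} at the end.

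Next, since $f$ is a class function and we are assumed able to compute efficiently with conjugacy classes, the hypotheses of Theorem~\ref{thm:reduction1} hold verbatim. Applying it produces, in polynomial time, a CC-HSHP oracle $\overline{f}:\Conj{G}\rightarrow\{0,1\}^*$ hiding the subhypergroup $\subhyp{N}{G}\le\Conj{G}$ that corresponds to $N$. It then remains only to solve this particular CC-HSHP instance efficiently and to recover $N$ from its solution.

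The last ingredient is an efficient QFT for the class hypergroup $\Conj{G}$, which we extract from the assumed efficient QFT for $G$. Here I would appeal to the observation already established in Example~2 and in the Proposition (HRT $=$ AKR): the quantum Fourier transform over $\Conj{G}$ is exactly the group QFT of $G$ restricted to the conjugation-invariant subspace $\Comp_{\overline{G}}$, together with the implementation details (input preparation, character-basis measurement on $\Comp_{\overline{G}}$) supplied by the appendix on conjugacy-class computations. Thus the assumed efficient circuit for the QFT of $G$ immediately yields an efficient circuit for the QFT of $\Conj{G}$. With $G/N$ abelian from the first step and this efficient QFT in hand, Theorem~\ref{thm:easy1} delivers an efficient quantum algorithm for the CC-HSHP; composing the polynomial-time reduction with this polynomial-time solver gives an efficient algorithm that determines $N$, proving the corollary.

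The only place requiring genuine care --- and hence the main obstacle --- is this bookkeeping step that bridges the ``efficient QFT for $G$'' hypothesis of the corollary to the ``efficient QFT for $\Conj{G}$'' hypothesis of Theorem~\ref{thm:easy1}. Everything else is a mechanical appeal to the two cited theorems, so once the equivalence of the two QFTs on the conjugation-invariant subspace is invoked, the argument closes.
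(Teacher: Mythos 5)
Your proposal is correct and follows essentially the same route as the paper's own proof: reduce the HNSP to the CC-HSHP via Theorem~\ref{thm:reduction1}, implement the QFT over $\Conj{G}$ using the assumed QFT for $G$ as in appendix~\ref{app:CC implementation details}, and then solve the resulting CC-HSHP by Theorem~\ref{thm:easy1}. Your additional remarks (that a class-function oracle forces $G/N$ abelian, and the identification of the hypergroup QFT with the group QFT on the conjugation-invariant subspace) are exactly the bookkeeping the paper leaves implicit, so there is no gap.
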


\begin{proof}
This follows since we can efficiently reduce the HNSP to a CC-HSHP by theorem \ref{thm:reduction1}, we can use the QFT of $G$ to implement the QFT of $\Conj{G}$ as described in appendix~\ref{app:CC implementation details}, and we can solve this CC-HSHP efficiently by theorem \ref{thm:easy1}.
\end{proof}

\subsection{Analysis of the algorithm of Amini et al.}
\label{sect:AKR Analysis}

In this subsection, we analyze the AKR algorithm in more detail. Our analysis will benefit from our hypergroup stabilizer formalism tools (section \ref{sect:Stabilizer Formailsm}), namely, our normal forms in theorem \ref{thm:Normal Form CSS States}, which will let us characterize the outcome probability distribution of the quantum algorithm.\footnote{This is a new result. No formula for this probability was given in \cite{Amini_hiddensub-hypergroup,Amini2011fourier}.} This is possible due to the following connection with our normalizer circuit framework.
\begin{theorem}[\textbf{AKR is normalizer}]\label{thm:AKR is normalizer}
For any finite group $G$, the AKR quantum algorithm for the CC-HSHP over $\overline{G}$ is a normalizer circuit  with intermediate Pauli measurements. Furthermore, all of its intermediate quantum states are  CSS stabilizer states of form (\ref{eq:Normal Form CSS state}).\footnote{Although we do not discuss the full AKR algorithm, this theorem also holds for all abelian hypergroups.}
\end{theorem}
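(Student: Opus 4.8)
The plan is to trace the five quantum steps of AKR against the normalizer-circuit model over $\Conj{G}$ (Section \ref{sect: Circuit Model}), matching each operation to a normalizer gate or an intermediate Pauli measurement, and to identify the state after every step with a CSS stabilizer state of the form (\ref{eq:Normal Form CSS state}) by appealing to Theorem \ref{thm:Normal Form CSS States}, the examples of Section \ref{sect:Examples Normal Form CSS States}, and Corollary \ref{corollary:Coset State Preparations}. The observation that organizes the whole argument is that Steps 1--3 of AKR are literally an instance of the coset-state-preparation circuit of Corollary \ref{corollary:Coset State Preparations}, with $\mathcal{T}=\Conj{G}$ and $\mathcal{N}=\subhyp{N}{G}$, in which the generic $\mathcal{S}_Z^{\lambda_z}$-syndrome measurement is realized physically by the hiding oracle $\overline{f}$.

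Concretely, Step 1 prepares the character state $\ket{\Hchi_1}$, an allowed input which is a CSS state by Example 1 of Section \ref{sect:Examples Normal Form CSS States} (uniquely stabilized by $\mathcal{S}_X^{\lambda_x}=\{\PX{\Conj{G}}(C):C\in\Conj{G}\}$ with invertible $\lambda_x=\Hchi_1$). Step 2 applies the inverse QFT, which is the normalizer gate $\Fourier{\Conj{G}}^\dagger=\Fourier{\Conj{G}^*}$ (the identity already used in Corollary \ref{corollary:Coset State Preparations}); using (\ref{eq:Quantum Fourier Transform over Hypergroup T}) with $\w{\Hchi_1}=1$ and $\varpi_{\Conj{G}}=|G|$, it outputs the uniform superposition $|G|^{-1/2}\sum_{C_x}\sqrt{\w{C_x}}\ket{C_x}$, which is the coset state for $\mathcal{N}=\Conj{G}$ and hence a CSS state of form (\ref{eq:Normal Form CSS state}).

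The crux is Step 3, and I expect it to be the main obstacle. Since $\overline{f}$ hides $\subhyp{N}{G}$, evaluating it on the ancilla and measuring that register effects the projective measurement whose projectors are the coset projectors $\{\ket{C_x\subhyp{N}{G}}\bra{C_x\subhyp{N}{G}}\}$. To identify this with an intermediate \emph{Pauli} measurement, I would invoke Lemma \ref{lemma:Quotient-Annihilator Isomorphisms}(ii): the characters of $\subhyp{N}{G}^\perp$ are exactly the characters of the quotient $\Conj{G}/\subhyp{N}{G}$, hence are constant on each coset of $\subhyp{N}{G}$ and separate distinct cosets. Consequently the common eigenprojectors of the commuting Z-Pauli hypergroup $\mathcal{S}_Z^{\lambda_z}=\{\PZ{\Conj{G}}(\Hchi_\mu):\Hchi_\mu\in\subhyp{N}{G}^\perp\}$ coincide with the coset projectors above (this is precisely the content of Corollary \ref{corollary:Coset State Preparations}). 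Thus the oracle-plus-measurement and the $\mathcal{S}_Z^{\lambda_z}$-syndrome measurement induce the \emph{same} set of projectors, so Step 3 is an intermediate Pauli measurement, and the post-measurement state is the coset state $\ket{C_x\subhyp{N}{G}}$, a CSS state of form (\ref{eq:Normal Form CSS state}) with $s=C_x$, $\mathcal{N}=\subhyp{N}{G}$ and trivial $\mathcal{X}_\varsigma$ (Example 2). The only delicate point is that the opaque label $\overline{f}(C_x\subhyp{N}{G})$ need not equal the numerical $\mathcal{S}_Z^{\lambda_z}$-syndrome; what matters, and what Lemma \ref{lemma:Quotient-Annihilator Isomorphisms}(ii) supplies, is the equality of projectors rather than of labels.

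Finally, Step 4 applies the normalizer gate $\Fourier{\Conj{G}}$, which by the right-hand equation of (\ref{eq:Normal Form CSS state}) in Theorem \ref{thm:Normal Form CSS States} sends $\ket{C_x\subhyp{N}{G}}$ to the CSS state $\sum_{\Hchi_\mu\in\subhyp{N}{G}^\perp}\sqrt{\w{\Hchi_\mu}\,\ws{C_x\subhyp{N}{G}}/\varpi_{\subhyp{N}{G}^\perp}}\,\Hchi_\mu(C_x)\ket{\Hchi_\mu}$ supported on the annihilator, and the concluding measurement of a character label is the model's standard final measurement in the character basis (performed after the basis-changing QFT). The repetition in Step 5 simply re-runs this circuit, and the reconstruction in Step 6 is classical post-processing outside the quantum circuit. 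Assembling Steps 1--4 exhibits AKR as a normalizer circuit over $\Conj{G}$ interrupted by the single intermediate Pauli measurement of Step 3, every intermediate state of which is of the form (\ref{eq:Normal Form CSS state}), as claimed.
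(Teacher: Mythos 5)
Your proposal is correct and follows essentially the same route as the paper's own proof: it identifies Steps 1--3 of AKR with the coset-state preparation scheme of Corollary~\ref{corollary:Coset State Preparations} (with the hiding oracle $\overline{f}$ acting as a black box that realizes the $\mathcal{S}_Z^{\lambda_z}$ syndrome measurement, justified via Lemma~\ref{lemma:Quotient-Annihilator Isomorphisms}(ii)), and Step 4 with a QFT applied to a coset state of form (\ref{eq:Normal Form CSS state}) as given by Theorem~\ref{thm:Normal Form CSS States}. The only difference is one of presentation: you spell out step-by-step the details (including the point that only the induced projectors, not the oracle labels, must match) that the paper compresses into a two-sentence proof deferring to the proofs of Corollary~\ref{corollary:Coset State Preparations} and Theorem~\ref{thm:Normal Form CSS States}.
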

\begin{proof}
Steps 1-3 implement a coset state preparation scheme as in corollary \ref{corollary:Coset State Preparations}, using  the oracle as a black box to perform a syndrome measurement of type $\mathcal{S}_Z^\lambda$ (cf.\  the proofs of corollary \ref{corollary:Coset State Preparations}, theorem \ref{thm:Normal Form CSS States} for details). Step 4  takes the QFT of a coset state of form (\ref{eq:Normal Form CSS state}).
\end{proof}
Using this connection, we can apply the tools developed in the previous
sections to compute the probability of measuring a given character label
$\Hchi_\mu$ at step 5.

The mixed state of AKR after the oracle is called and its value discarded is
given by
\[ \rho = \sum_{C_x\subhyp{N}{G} \in
      \overline{G}/\subhyp{N}{G}} \frac{\w{C_x\subhyp{N}{G}}}{\varw{\overline{G}/\subhyp{N}{G}}}
          \ket{C_x\subhyp{N}{G}}\bra{C_x\subhyp{N}{G}} \]
since the probability of measuring each coset $C_x \subhyp{N}{G}$ is proportional to its weight.\footnote{The initial superposition has probability of measuring $C_x$ proportional to $w_{C_x}$, so the probability of measuring the coset $C_x \subhyp{N}{G}$, which contains all the elements with the same value from the oracle as $C_x$, is proportional sum of all of their weights, $\varpi_{C_x \subhyp{N}{G}}$. These sums are proportional to the weights $w_{C_x \subhyp{N}{G}}$, so we get the form in the equation above after normalizing the probabilities to sum to 1.} 
Each coset state $\ket{C_x\subhyp{N}{G}}$ is a stabilized by $X_{\overline{G}}(C_y)$
for any $C_y \in \subhyp{N}{G}$ and any $Z_{\widehat{G}}(\Hchi_\mu)$ for any $\Hchi_\mu
\in \subhyp{N}{G}^\perp$, which means that it is a CSS stabilizer state of the form shown
in equation (\ref{eq:Stabilizer Hypergroup CSS State}) with $s=C_x$ and
$\Hchi_\varsigma = \Hchi_1$, the trivial character. Thus, we can read off the Fourier
transform of this state directly from theorem \ref{thm:Normal Form CSS States}.

Our application of  theorem \ref{thm:Normal Form CSS States} is greatly
simplified by the fact that $\Hchi_{\overline{\varsigma}}$ is the trivial character $\Hchi_{1}$. This means
 that
$\w{\Hchi_{\overline{\varsigma}}\subhyp{N}{G}^\perp} = 1$ since the weight of a character is the
dimension of the underlying representation $\mu$.\footnote{Note that
$\Hchi_{\overline{\varsigma}}\subhyp{N}{G}^\perp$ is an element of $\widehat{G}/\subhyp{N}{G}^\perp \cong
\subhyp{N}{G}^*$ (section \ref{sect:Glossary}), so this is the weight of $\Hchi_{\overline{\varsigma}}$ viewed as a
character of $\subhyp{N}{G}$, where it remains trivial.} This means that the state
$\ket{\psi}$ for $s = C_x$ and $\mathcal{N} = \subhyp{N}{G}$ is proportional to
$\sum_{C_y \in C_x\subhyp{N}{G}} \sqrt{\w{C_y}} \ket{C_y}$, which is proportional to
$\ket{C_x\subhyp{N}{G}}$, and thus, we have $\ket{\psi} = \ket{C_x\subhyp{N}{G}}$
since both states are normalized. Thus, theorem \ref{thm:Normal Form CSS
States} tells us
\[ \Fourier{\overline{G}} \ket{C_x\subhyp{N}{G}} = \sum_{\Hchi_\mu \in \subhyp{N}{G}^\perp}
   \sqrt{\frac{\w{\mu} \w{C_x\subhyp{N}{G}}}{\varpi_{\subhyp{N}{G}^\perp}}} \Hchi_\mu(C_x)
   \ket{\Hchi_\mu},  \]
and hence, the Fourier transform of the mixed state of AKR is
\begin{eqnarray*}
\Fourier{\overline{G}} \rho \Fourier{\overline{G}}^\dag
&=& \sum_{C_x\subhyp{N}{G}\in\overline{G}/\subhyp{N}{G}} \frac{ \w{C_x\subhyp{N}{G}} }{\varw{\overline{G}/\subhyp{N}{G}}}
 \left( \sqrt{\w{C_x\subhyp{N}{G}}} \sum_{\mu \in \subhyp{N}{G}^\perp}
    \sqrt{\frac{\w{\mu}}{\varpi_{\subhyp{N}{G}^\perp}}} \Hchi_\mu(C_x) \ket{\Hchi_\mu} \right)
  \times \\
&& \qquad \qquad \qquad \ 
   \left( \sqrt{\w{C_x\subhyp{N}{G}}} \sum_{\nu \in \subhyp{N}{G}^\perp}
     \sqrt{\frac{\w{\nu}}{\varpi_{\subhyp{N}{G}^\perp}}} \Hchi_\nu(C_x) \bra{\Hchi_\mu}
     \right) \\
&=& \sum_{\mu, \nu \in \subhyp{N}{G}^\perp} \sqrt{\w{\mu} \w{\nu}}
    \left( \sum_{C_x\subhyp{N}{G} \in \overline{G}/\subhyp{N}{G}}
      \frac{\w{C_x\subhyp{N}{G}}^2}{\varw{\overline{G}/\subhyp{N}{G}}^2}
      \Hchi_\mu(C_x) \overline{\Hchi_{\nu}}(C_x) \right)
    \ket{\Hchi_\mu} \bra{\Hchi_\nu}, \\
\end{eqnarray*}
where, in the last step, we have used the fact that $\subhyp{N}{G}^\perp \cong 
\left(\overline{G}/\subhyp{N}{G}\right)^*$. Finally, using lemma \ref{lemma:Roth}, we conclude that the probability of measuring the outcome $\Hchi_\mu$ is
\begin{equation}
\label{eq:AKR-prob1}
\Pr(\Hchi_\mu) =
\w{\mu} \sum_{C_x\subhyp{N}{G} \in \overline{G}/\subhyp{N}{G}} \frac{\w{C_x\subhyp{N}{G}}^2}{\varw{\overline{G}/\subhyp{N}{G}}^2}
      \Hchi_\mu(C_x) \overline{\Hchi_{\mu}}(C_x)=
      \w{\mu} \sum_{C_{xN} \in \overline{G/N}} \frac{\w{C_{xN}}^2}{\varw{\overline{G/N}}^2}
            \Hchi_\mu(C_x) \overline{\Hchi_{\mu}}(C_x).
\end{equation}
This formula is the key to our analysis of AKR in this section and the next.

\subsubsection{Non-convergence of AKR for simple instances}

We begin examining these probabilities by looking at an example.

\begin{example}[\textbf{AKR Counterexample}]
\label{ex:non-convergence}
The Heisenberg group over $\Integer_p$ (with $p$ prime) is the set
$\Integer_p^3$ with multiplication defined by $(x,y,z)\cdot(x',y',z') =
(x+x',y+y',z+z'+xy')$. For a nice review of it's representation theory, see
the article of Bacon \cite{Bacon_Clebsch-Gordon}.

The center of this group is the normal subgroup $Z(G) = \set{(0,0,z)}{z \in
\Integer_p}$. Hence, any element of $Z(G)$ is in a conjugacy class of its own.
For any $(x,y,z) \not\in Z(G)$ (i.e., with $(x,y) \not= (0,0)$), it is easy to
check that its conjugacy class is the coset $(x,y,z) Z(G)$. Hence, the weight
of the former classes are 1 and those of the latter are $\abs{Z(G)} = p$.

Let us consider the case when the hidden subgroup is trivial $N = \{e\}$. For
any $(a,b) \not= (0,0)$ there is a 1-dimensional irrep of the Heisenberg group,
which we denote $\Hchi_{a,b}$ given by $\Hchi_{a,b}(x,y,z) = \omega_p^{ax+by}$,
where $\omega_p$ is a $p$-th root of unity. We can apply
equation~(\ref{eq:AKR-prob1}) to compute the probability of measuring this
irrep in AKR. To do this, we first note that this probability would be the
inner product of $\Hchi_\mu$ with itself except that weights have been squared.
Since there are only two sizes of conjugacy classes, it is not difficult to
rewrite this expression in terms of that inner product as follows.
\begin{eqnarray*}
\Pr(\Hchi_{a,b})
 &=& \sum_z \frac{1}{p^6} \abs{\Hchi_{a,b}(0, 0, z)}^2 +
     \sum_{(x,y) \not= (0,0)} \frac{p^2}{p^6} \abs{\Hchi_{a,b}(x,y,0)}^2 \\
 &=& \frac{1}{p^6} \sum_z 1 +
     \frac{p^2}{p^6} \sum_{(x,y) \not= (0,0)} 1 \\
 &=& \frac{1}{p^6} p + \frac{p^2}{p^6} (p^2 - 1) \\
 &=& \frac{p^4 - p^2 + p}{p^6} \\
\end{eqnarray*}
where we have used the fact that $\abs{\Hchi_{a,b}(x,y,z)} = 1$ for all $x,y,z
\in \Integer_p$.

Finally, the probability of measuring any of the 1-dimensional irreps is 
\[ \sum_{(a,b)\not=(0,0)} \Pr(\Hchi_{a,b})
    = (p^2-1) \frac{p^4 - p^2 + p}{p^6}
    = \frac{p^6-2p^4+p^3+p^2-p}{p^6}
    = 1 - O\left(\frac{1}{p^2}\right). \]
This means that if $p$ is exponentially large, we are unlikely to ever see
an irrep other than the $\Hchi_{a,b}$'s, and since $Z(G)$ is in the kernel of
all such irreps, the intersection of the kernels of polynomially many irreps
will include $Z(G)$ with high probability.

Since the AKR algorithm returns the intersection of the kernels of the sampled
irreps as its guess of the hidden subgroup, this demonstrates that the AKR
algorithm will fail to find the hidden subgroup with high probability for the
Heisenberg group with $N=\{e\}$. Indeed, this shows that AKR will fail to
distinguish between $N = Z(G)$ and $N = \{e\}$, despite the fact that the
former is exponentially larger than the latter.
\end{example}
The probability distribution over irreps established by AKR in this example
(and many others) favors the small dimensional irreps, whereas the distribution
established by HRT favors the large dimensional irreps. In this example, that
fact prevents AKR from ever seeing the irreps needed to uniquely determine $N$.
(Paradoxically, when finding non-normal hidden subgroups, it is often the small
irreps that are most useful and HRT that struggles to find them.\footnote{While
the AKR distribution would be better, AKR does not apply to finding non-normal
hidden subgroups since the hypergroup structure is no longer present.})

\subsubsection{An application of AKR: a 2nd hypergroup algorithm for the HNSP}

While AKR may fail to uniquely determine $N$ from its samples, the following
lemma tells us that it will, with high probability, learn something about $N$.
\begin{lemma}
If $N \not= G$, then the intersection $K$ of the kernels of the irreps sampled by AKR is a strict subgroup of $G$ --- i.e., we will have $N \le K \lneq G$ --- with high probability.
\end{lemma}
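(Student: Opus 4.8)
The plan is to split the claim into the deterministic containment $N\le K$ and the probabilistic strictness $K\lneq G$, and to reduce the latter to a single bound on the probability of sampling the trivial character.

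First I would establish $N\le K$ with certainty. By the analysis preceding the lemma --- reading the support of $\Fourier{\overline{G}}\ket{C_x\subhyp{N}{G}}$ off theorem \ref{thm:Normal Form CSS States} --- every character $\Hchi_{\mu_i}$ that AKR can output lies in the annihilator $\subhyp{N}{G}^\perp$, i.e. $\Hchi_{\mu_i}(C)=1$ for every class $C\in\subhyp{N}{G}$. Since $\subhyp{N}{G}$ is exactly the set of conjugacy classes comprising $N$, for any $g\in N$ the class $C_g$ lies in $\subhyp{N}{G}$, so $\Hchi_{\mu_i}(C_g)=\chi_{\mu_i}(g)/d_{\mu_i}=1$, which forces $\mu_i(g)=I$ and hence $g\in\ker\chi_{\mu_i}$. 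Intersecting over $i$ gives $N\subseteq K$, and since $K$ is an intersection of representation kernels it is a (normal) subgroup; thus $N\le K$ holds deterministically.

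Next I would reduce $K\lneq G$ to the statement that not all $T$ samples are trivial. Because $\ker\chi_\mu=G$ exactly when $\mu$ is the trivial representation, we have $K=G$ iff every sampled $\mu_i$ is trivial, while a single nontrivial $\mu_i$ already gives $K\subseteq\ker\chi_{\mu_i}\lneq G$. Hence it suffices to bound $\Pr(\Hchi_1)$ away from $1$. Specializing the probability formula (\ref{eq:AKR-prob1}) to $\mu=1$ (where $\w{1}=1$ and $\Hchi_1\equiv 1$) gives $\Pr(\Hchi_1)=\sum_{C_{xN}}\w{C_{xN}}^2/\varw{\overline{G/N}}^2=\big(\sum_{C}|C|^2\big)/m^2$, where $m=|G/N|$ and the sum runs over conjugacy classes of $G/N$ (equivalently, $\Pr(\Hchi_1)$ is the probability that two uniformly random elements of $G/N$ are conjugate). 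The key elementary fact is that every non-identity class of $G/N$ has size at most $m/2$: for $g\ne e$ the centralizer $\mathrm{Cent}(g)$ contains $\langle g\rangle$, of order $\ge 2$, so $|C_g|=m/|\mathrm{Cent}(g)|\le m/2$. Therefore $\sum_{C}|C|^2=1+\sum_{C\ne\{e\}}|C|^2\le 1+\tfrac{m}{2}\sum_{C\ne\{e\}}|C|=1+\tfrac{m(m-1)}{2}$, which yields $\Pr(\Hchi_1)\le\tfrac12$ whenever $m\ge 2$, i.e. whenever $N\ne G$. Consequently the probability that all $T$ independent samples equal $\Hchi_1$ is at most $2^{-T}$, so with probability $\ge 1-2^{-T}$ some $\mu_i$ is nontrivial and $K\lneq G$; combined with $N\le K$ this gives $N\le K\lneq G$ with high probability.

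I expect the only genuine obstacle to be the bound on $\Pr(\Hchi_1)$. The containment $N\le K$ and the reduction to ``some sample is nontrivial'' are essentially bookkeeping, whereas one must verify that the AKR distribution --- which, as the Heisenberg counterexample (example \ref{ex:non-convergence}) shows, badly favors small irreps --- does not concentrate \emph{entirely} on $\Hchi_1$. The centralizer estimate above is precisely what rules this out uniformly over all groups $G$ and all proper normal subgroups $N$, without any structural assumption on $G$.
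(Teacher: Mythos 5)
Your proof is correct and follows essentially the same route as the paper's: reduce strictness to the event that at least one sample is a nontrivial character, then bound the probability of sampling $\Hchi_1$ by $1/2$ using the fact that every non-identity conjugacy class of $G/N$ has size at most $|G/N|/2$. The only cosmetic differences are that you justify the class-size bound via centralizers containing $\langle g\rangle$ rather than via divisibility, you replace the paper's small optimization argument for $\sum_i s_i^2$ by a direct estimate $1+\tfrac{m(m-1)}{2}\le \tfrac{m^2}{2}$, and you spell out the deterministic containment $N\le K$ that the paper's proof leaves implicit.
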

\begin{proof}
The intersection of the kernels will be smaller than $G$ provided that at least
one of the samples has a kernel smaller than $G$. In other words, the
intersection will be a strict subgroup provided that at least one of the irreps
sampled is not the trivial irrep.

We can use eq.~(\ref{eq:AKR-prob1}) to calculate the probability of sampling
the trivial irrep. Since $\Hchi_1(C_x) = 1$ for any $C_x$, the probability for the
trivial irrep is just $\sum_{C_{xN} \in \overline{G/N}} \w{C_{xN}}^2 /
\varw{\overline{G/N}}^2$. If we let $G'$ denote the group $G/N$, then we can also
write this as $\sum_{C_{x'} \in \overline{G'}} \w{C_{x'}}^2 / \varw{\overline{G'}}^2$. Now,
our job is to determine how close this can be to 1.

Let $C_1, \dots, C_m$ be the conjugacy classes of $G'$, and define $s_i =
\w{C_i} / \varw{G'} = \abs{C_i} / \abs{G'}$. The $s_i$'s satisfy $\sum_i s_i = 1$
(since every element of $G'$ is in some conjugacy class). Since the size of a
conjugacy class divides the size of the group, we also know that $s_i \le 1/2$
for every $i$. Forgetting everything but these constraints, we can upper bound
the probability of sampling the trivial irrep by the solution of the
optimization problem
\begin{equation*}
\begin{aligned}
& \text{maximize}   & & \sum_i s_i^2         & \\
& \text{subject to} & & \sum_i s_i = 1       & \\
&                   & & s_i \le \tfrac{1}{2} & \text{for}\ i=1 \dots m
\end{aligned}
\end{equation*}
The latter problem can be solved by ordinary methods of calculus. In
particular, it is easy to check that the objective is increased if $s_i$ and
$s_j$ are replaced by $s_i+\epsilon$ and $s_j-\epsilon$ provided that $s_i >
s_j$. (I.e., the derivative in this direction is positive.) Hence, the
objective will be maximized when the two largest $s_i$'s have value $1/2$ and
all other $s_i$'s are 0. At that point, the objective is $(1/2)^2 + (1/2)^2 =
1/2$.

This tells us that the probability of sampling the $\Hchi_1$ is at most $1/2$ on
each trial. Hence, the probability that all the samples are the trivial irrep
is exponentially small.
\end{proof}

This lemma tells us that AKR will find, with high probability, a strict
subgroup $K \lneq G$ such that $N \le K$. This means that we can reduce the
problem of finding $N$ hidden in $G$ to the problem of finding $N$ hidden in
$K$, which is a strictly smaller group. Provided that we understand the
representation theory of $K$ as well and, in particular, have a QFT for it,
then we can recursively solve this problem. In more detail, we have:
\begin{theorem}[\textbf{CC-HSHP is Easy, II}]
\label{thm:easy2}
Let $G$ be a group and $N$ a normal subgroup. Suppose that, for each normal subgroup $K$ satisfying $N \le K \le G$, we have a function $\overline{f}_K : \overline{K} \rightarrow \overline{H_K}$ that hides $\subhyp{N}{K}$.\footnote{Alternatively, we may assume that, for any $K \lhd G$, we have a  function $\overline{f}_K$ that hides $\subhyp{(N \cap K)}{K}$.} If we can efficiently compute the QFT for each such $\Conj{K}$, then there is an efficient quantum algorithm for the CC-HSHP.
\end{theorem}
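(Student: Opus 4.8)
The plan is to bootstrap AKR into a recursive procedure that shrinks the ambient group at each stage. I would maintain a current normal subgroup $K$, initialized to $K=G$, together with the promised oracle $\overline{f}_K$ hiding $\subhyp{N}{K}$ and the assumed efficient QFT for $\Conj{K}$. At each stage I run the quantum part of AKR on $\overline{K}$ a total of $T=\Theta(\log\abs{G})$ times, collect the sampled irreps $\mu_1,\dots,\mu_T$ of $K$, and form $K' = \bigcap_i \ker\chi_{\mu_i}$. Since every character AKR returns lies in $\subhyp{N}{K}^\perp$ and is therefore trivial on the classes making up $N$, we always have $N\le K'$. If $K'=K$ the procedure halts and outputs $K$; otherwise it replaces $K$ by $K'$ and recurses.

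First I would establish the per-stage guarantee. When $N=K$, the only character in $\subhyp{N}{K}^\perp$ is the trivial one, so AKR deterministically returns $K'=K$ and the procedure correctly stops at $N$. When $N\lneq K$, the preceding lemma shows that a single AKR trial samples the trivial irrep with probability at most $1/2$; hence with probability at least $1-2^{-T}$ at least one of the $T$ samples is nontrivial, forcing $K'\lneq K$ while still $N\le K'$. Thus, except with exponentially small probability, the recursion never halts strictly above $N$, and it is forced to halt exactly at $N$ once it arrives there.

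Next I would bound the depth. Because $K'$ is an intersection of kernels of irreducible characters of $K$, it is a genuine subgroup of $K$, and when it is proper Lagrange's theorem gives $[K:K']\ge 2$, so $\abs{K'}\le\abs{K}/2$. Consequently the descending chain $G\ge K_1\ge K_2\ge\cdots$ has length at most $\log_2\abs{G}$, which is polynomial in the input size. A union bound over these at most $\log_2\abs{G}$ stages, each failing with probability at most $2^{-T}$, gives overall success probability $1-O(\log\abs{G})\,2^{-T}$, overwhelming for our choice of $T$. Each stage invokes only the assumed QFT and oracle for the relevant group in the chain plus a polynomial amount of further quantum computation, so the total cost is polynomial; together with lemma \ref{lemma:Roth}, identifying $\overline{K}/\subhyp{N}{K}\cong\overline{K/N}$ keeps the analysis of each stage inside the class hypergroup of a genuine quotient.

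The main obstacle will be the classical bookkeeping that glues the stages together: forming $K'=\bigcap_i\ker\chi_{\mu_i}$ and presenting it as a group on which we may again invoke the assumed QFT and oracle. This is precisely where the standing ability to compute with conjugacy classes (appendix~\ref{sect:class basis operations}) and to evaluate characters of each $\Conj{K}$ is needed. One subtlety deserves care here: while $K_1=\bigcap_i\ker\chi_{\mu_i}$ is normal in $G$ at the first stage, at deeper stages $K'$ is only guaranteed normal in the previous group $K$, so to stay within the family of subgroups for which the hypotheses supply an oracle and a QFT (cf.\ the alternative hypothesis in the footnote) I would symmetrize, replacing each sampled $\mu_i$ by its full $G$-conjugation orbit before intersecting kernels; since $N\lhd G$ this leaves $N\le K'$ intact, keeps $K'\lneq K$ whenever some $\mu_i$ is nontrivial, and makes each $K'$ normal in $G$. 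Granting these computability assumptions, the recursion terminates at $N$ after polynomially many efficiently implementable stages, which is exactly the claimed efficient quantum algorithm for the CC-HSHP.
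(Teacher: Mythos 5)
Your proposal is correct and is essentially the paper's own argument: the text preceding the theorem establishes exactly your per-stage guarantee (the lemma bounding the probability of sampling the trivial irrep by $1/2$, via class sizes dividing the group order) and then recurses AKR on a strictly smaller subgroup containing $N$, with the same $\log_2\abs{G}$ depth bound and amplification by repetition. Your symmetrization step --- intersecting kernels over full $G$-conjugation orbits so that each stage's subgroup remains normal in $G$ --- is a careful patch of a subtlety the paper glosses over (its hypothesis supplies oracles and QFTs only for subgroups normal in $G$, whereas naive recursion only guarantees $K' \lhd K$), and it preserves $N \le K' \lneq K$ as you argue, though you should note it quietly adds a classical computability assumption (computing orbits/normal cores) beyond those stated, much as the paper itself silently assumes the ability to compute and intersect kernels of sampled irreps.
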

As a corollary, we can see that there is an efficient hypergroup-based
algorithm for solving HNSP in the case when the oracle $f : G \rightarrow H$ is
a group homomorphism.
\begin{corollary}[\textbf{HNSP is easy, II}]
Let $G$ be a group. Suppose that we are given a hiding function $f : G
\rightarrow H$ for $N \lhd G$ that is also a group homomorphism. If we can
efficiently compute the QFT for any normal subgroup $K$ satisfying $N \le K \le
G$ and compute efficiently with conjugacy classes of each such $K$ and $H_K$,
where $H_K$ is the image of $f|_K$, then we can efficiently solve this HNSP.
\end{corollary}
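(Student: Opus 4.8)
The plan is to mirror the proof of the first HNSP corollary, combining the oracle reduction of theorem~\ref{thm:reduction2} with the recursive quantum algorithm of theorem~\ref{thm:easy2}, the only difference being that the reduction must now be invoked not once for $G$ but for every normal subgroup $K$ with $N \le K \le G$ that the recursion of theorem~\ref{thm:easy2} visits. Since theorem~\ref{thm:easy2} already delivers an efficient CC-HSHP algorithm once we supply class-function oracles $\overline{f}_K : \Conj{K} \to \Conj{H_K}$ hiding $\subhyp{N}{K}$ together with efficient QFTs for each $\Conj{K}$, the work is to manufacture these oracles from the single homomorphism $f$ we are given, and to convert the QFT hypothesis on the groups $K$ into QFTs on the hypergroups $\Conj{K}$ (the latter via appendix~\ref{app:CC implementation details}, exactly as in the first corollary).

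First I would note that for every normal $K$ with $N \le K \le G$ the restriction $f|_K : K \to H$, corestricted to its image $H_K = f(K)$, is a surjective group homomorphism that hides $N$ inside $K$: for $x, x' \in K$ the promise ``$f(x) = f(x')$ iff $x' \in xN$'' forces the witness $n = x^{-1}x'$ to lie in $N \cap K = N$. Applying theorem~\ref{thm:reduction2} to this surjective homomorphism yields an efficiently computable class function on $K$, namely $x \mapsto$ (the $H_K$-conjugacy-class label of $f(x)$), which descends to the desired oracle $\overline{f}_K$ on $\Conj{K}$ hiding $\subhyp{N}{K}$. Constructing it requires only evaluating $f$ and computing conjugacy-class labels in $K$ and in $H_K$, which is exactly what the hypothesis ``compute efficiently with conjugacy classes of each such $K$ and $H_K$'' provides.

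With these oracles available, I would run the recursion of theorem~\ref{thm:easy2}: it starts by applying AKR on $\Conj{G}$ with $\overline{f}_G$ to extract a strict normal subgroup $K$ with $N \le K \lneq G$ --- the intersection of the sampled character kernels, which is automatically normal and contains $N$ --- and then recurses on $\Conj{K}$ with $\overline{f}_K$, each stage consuming the QFT for $\Conj{K}$ granted by the hypothesis through appendix~\ref{app:CC implementation details}. The output is $\subhyp{N}{G}$, from which $N$ is recovered as the union of its member conjugacy classes using the same conjugacy-class routines. Because every subgroup the recursion encounters is normal and sandwiched between $N$ and $G$, all oracles and QFTs it demands are covered by the hypotheses.

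The step I expect to require the most care --- and the only place where the homomorphism structure of $f$ is genuinely used, via the identity $f(x^a) = f(x)^{f(a)}$ of theorem~\ref{thm:reduction2} --- is the corestriction to the image $H_K$. Labelling $f(x)$ by its conjugacy class in the ambient $H$ instead could merge classes $C_x, C_{x'} \subset K$ whenever $f(x') = h\,f(x)\,h^{-1}$ for some $h \in H \setminus f(K)$, a relation that need not reflect a $\subhyp{N}{K}$-coset equality and would therefore corrupt the hiding promise; restricting the available conjugations to $h = f(k)$ with $k \in K$ is precisely what makes $\overline{f}_K$ hide $\subhyp{N}{K}$, and is why the hypothesis isolates $H_K$ rather than $H$.
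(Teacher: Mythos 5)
Your proof is correct and follows essentially the same route as the paper's: restrict $f$ to each normal $K$ with $N \le K \le G$, apply theorem~\ref{thm:reduction2} to the homomorphism $f|_K : K \to H_K$ to obtain the CC-HSHP oracle $\overline{f}_K$, and then invoke theorem~\ref{thm:easy2} together with appendix~\ref{app:CC implementation details} to realize the needed hypergroup QFTs from the group QFTs. Your closing observation --- that conjugacy-class labels must be computed in the image $H_K$ rather than in the ambient group $H$, since $H$-conjugation by elements outside $f(K)$ could merge classes of $\Conj{K}$ that do not lie in a common coset of $\subhyp{N}{K}$ --- is a correct and worthwhile elaboration of why the corollary's hypothesis is phrased in terms of $H_K$, a point the paper's own terse proof leaves implicit.
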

\begin{proof}
The restriction of $f$ to $K$, $f|_K : K \rightarrow H_K$, is itself a group
homomorphism, so by the assumptions of the corollary and
theorem \ref{thm:reduction2}, we can efficiently compute from it a CC-HSHP
oracle $\overline{f}_K : \overline{K} \rightarrow \overline{H_K}$. Hence, the
result follows from theorem \ref{thm:easy2} and the results of appendix~\ref{app:CC implementation details} on implementing the QFT of a hypergroup with the QFT of the group.
\end{proof}
In comparison to HRT, we have replaced the assumption about intersecting
kernels of irreps with assumptions about computing QFTs over normal subgroups
of $G$. Since the complexity of computing intersections of subgroups is unknown
(at least in the black-box setting), the latter assumptions may be more
reasonable in many cases. This same approach of recursion on subgroups can also
be used to give a variant of HRT with similar assumptions to those above.

In this section and the previous, we have shown that, in cases when we can convert an oracle for the HNSP into an oracle for the CC-HSHP --- either because the two types of oracles are actually the same or because we have sufficient information about the output of the former oracle to compute with their conjugacy classes --- we can solve the HNSP by reducing it to CC-HNSP. That the latter problem can be solved efficiently depends crucially on the fact that the hypergroup is abelian. Thus, the results of the last two sections give us an explanation for why the HNSP is easy in terms of the presence of an \emph{abelian} algebraic structure, the abelian hypergroup of conjugacy classes.

\subsection{Efficient quantum algorithm for the nilpotent group HNSP and CC-HSHP}
\label{sect:New Algorithms}

In this section, we describe our most unique hypergroup-based quantum algorithms that solves the HNSP and the CC-HSHP. Unlike the algorithm of the previous sections, this algorithm does not require any extra assumptions about subgroups such as hiding functions for them or the ability to compute efficiently with their conjugacy classes. More importantly, this algorithm is fundamentally different from the known algorithm (HRT), showing that this import problem can be solved efficiently using a different approach.

Our approach takes advantage of unique properties of hypergroups. In
particular, as we are looking to reduce our problem on $G$ to a subproblem, we
note that, for any normal subgroup $K \le G$, there are actually two smaller
hypergroups associated to it. The first is the hypergroup $\Conj{K}$ that we
get by looking at $K$ as a group separate from $G$. The second is the
subhypergroup $\Conj{K}_G$, where $\Conj{K}_G$ contains the conjugacy classes
$C_x \in \Conj{G}$ such that $x \in K$.  Above, when we recursively solved a
problem in $K$, we were using the hypergroup $\Conj{K}$. However, as we will
see in this section, it is also possible to solve the subproblem on the
subhypergroup $\Conj{K}_G \le \Conj{G}$.

The subhypergroup $\Conj{K}_G$ has two advantages over $\Conj{K}$. The first is
that a CC-HSHP oracle for $\Conj{G}$ is also a CC-HSHP oracle for $\Conj{K}_G$:
since the conjugacy classes of the two hypergroups are the same, the condition
that the oracle is constant on conjugacy classes of $\Conj{G}$ means that the
same is true for $\Conj{K}_G$. The second advantage of this subhypergroup is
given in the following lemma.
\begin{lemma}[\textbf{Subhypergroup QFT}]
\label{lma:subhypergroup-qft}
Let $K \lhd G$. If we can efficiently compute $\Fourier{\overline{G}}$, the QFT   over $\overline{G}$, then we can also efficiently compute  $\Fourier{\Conj{K}_G}$, i.e., the QFT over $\Conj{K}_G$.
\end{lemma}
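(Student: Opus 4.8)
The plan is to realize $\Fourier{\Conj{K}_G}$ entirely inside $\mathcal{H}_{\overline{G}}$, by showing that $\Fourier{\overline{G}}$ already computes it up to a relabeling of the output characters. Since $K$ is normal it is a union of $G$-conjugacy classes, so the classes lying in $K$ span a subspace $\mathcal{H}_{\Conj{K}_G}\subset\mathcal{H}_{\overline{G}}$, which we identify with the element space of the subhypergroup $\Conj{K}_G$. The mechanism behind the reduction is subhypergroup duality: by lemma \ref{lemma:Quotient-Annihilator Isomorphisms}, $(\Conj{K}_G)^*\cong\Repr{G}/\Conj{K}_G^\perp$, realized by \emph{restricting} characters of $\Repr{G}$ to $\Conj{K}_G$, and any two characters lying in a common coset of the annihilator $\Conj{K}_G^\perp$ restrict identically to $\Conj{K}_G$.

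First I would apply $\Fourier{\overline{G}}$ to a basis state $\ket{C_x}$ with $C_x\in\Conj{K}_G$ and regroup the resulting superposition over $\Repr{G}$ by cosets of $\Conj{K}_G^\perp$. Because $x\in K$, the phase $\mathcal{X}_\mu(C_x)$ takes a common value across each such coset (lemma \ref{lemma:Quotient-Annihilator Isomorphisms}), so the contribution of every coset factors into that common phase times the unnormalized coset state $\sum_{\mathcal{X}_\mu\in\mathcal{X}_\nu\Conj{K}_G^\perp}\sqrt{\w{\mathcal{X}_\mu}}\ket{\mathcal{X}_\mu}\propto\ket{\mathcal{X}_\nu\Conj{K}_G^\perp}$ (cf.\ (\ref{eq:Hypergroup Coset State})). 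Matching the resulting coefficients against the defining formula (\ref{eq:Quantum Fourier Transform over Hypergroup T}) for $\Fourier{\Conj{K}_G}$ then reduces the whole claim to the single weight identity $\varw{\Repr{G}}=\varw{\Conj{K}_G^\perp}\,\varw{\Conj{K}_G}$. I expect this to follow cleanly from the structure theory already developed: $\varw{\Repr{G}}=|G|$ and $\varw{\Conj{K}_G}=\sum_{C_x\subset K}|C_x|=|K|$, while subhypergroup duality and lemma \ref{lemma:Roth} identify $\Conj{K}_G^\perp\cong(\overline{G}/\Conj{K}_G)^*\cong(\overline{G/K})^*$, whose weight is $|G/K|$ by weight-order duality (\ref{eq:Weight Duality}); the product is then $|G|$, as required.

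This computation establishes the operator identity $\Fourier{\overline{G}}\big|_{\mathcal{H}_{\Conj{K}_G}}=V\,\Fourier{\Conj{K}_G}$, where $V$ is the isometry that embeds the character space $\mathcal{H}_{(\Conj{K}_G)^*}$ into $\mathcal{H}_{\Repr{G}}$ by sending a basis label $\ket{\mathcal{X}_\nu\Conj{K}_G^\perp}$ (using $(\Conj{K}_G)^*\cong\Repr{G}/\Conj{K}_G^\perp$) to the annihilator coset \emph{state} of the same name, these being orthonormal for distinct cosets. Hence $\Fourier{\Conj{K}_G}=V^\dagger\,\Fourier{\overline{G}}\big|_{\mathcal{H}_{\Conj{K}_G}}$, and since $\Fourier{\overline{G}}$ is efficient by hypothesis, the only remaining task is to implement the decoding isometry $V^\dagger$ efficiently.

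Implementing $V^\dagger$ reversibly is the step I expect to be the main obstacle, since it must collapse a coherent coset state $\ket{\mathcal{X}_\nu\Conj{K}_G^\perp}$ back to a single label. In the algorithms where this lemma is used, however, the subhypergroup QFT is always followed immediately by a measurement in the character basis, so $V^\dagger$ need not be built as a unitary: it suffices to measure a $\Repr{G}$-character $\mathcal{X}_\mu$ after $\Fourier{\overline{G}}$ and classically output its restriction to $\Conj{K}_G$ (equivalently, its coset $\mathcal{X}_\mu\Conj{K}_G^\perp$), which one checks reproduces exactly the outcome distribution of $\Fourier{\Conj{K}_G}$ followed by a measurement. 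If the full unitary is wanted, one instead computes the restriction into a fresh register (constant on each coset) and uncomputes the coset state by running the preparation of corollary \ref{corollary:Coset State Preparations} in reverse, at the cost of the mild additional assumption that character restrictions and coset labels can be computed classically.
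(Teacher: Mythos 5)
Your proposal is correct and follows essentially the same route as the paper's proof: apply $\Fourier{\overline{G}}$ to a class state $\ket{C_x}$ with $C_x\in\Conj{K}_G$, regroup the output over cosets of $\subhyp{K}{G}^\perp$ using the restriction isomorphism $(\Conj{K}_G)^*\cong\Repr{G}/\subhyp{K}{G}^\perp$ together with the constancy of characters on annihilator cosets (lemma \ref{lemma:Quotient-Annihilator Isomorphisms}), and fix the normalization via the weight identity $\varw{\Repr{G}}=\varw{\subhyp{K}{G}^\perp}\,\varw{\subhyp{K}{G}}$, exactly as the paper does. The one difference is that the paper dispenses with your decoding isometry $V^\dagger$ altogether by \emph{defining} the character basis of $\Conj{K}_G$ to be the annihilator coset states $\ket{\Hchi_\nu\subhyp{K}{G}^\perp}$ themselves, so that $\Fourier{\overline{G}}$ restricted to the span of $\Conj{K}_G$ literally \emph{is} $\Fourier{\Conj{K}_G}$ with no further circuitry; your measure-then-relabel resolution is nonetheless precisely how the lemma is exploited in the algorithms of section \ref{sect:New Algorithms}, so this is a matter of presentation rather than a gap.
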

\begin{proof}
In fact, the QFT for $\Conj{K}_G$ is implemented by the same QFT as for
$\Conj{G}$ provided that we choose the \emph{appropriate basis} for the dual of $\Conj{K}_G$.

To describe this basis, we first note that every character on $\Conj{G}$ is a character on $\Conj{K}_G$ simply by restricting its domain to $\Conj{K}_G$. This map of $\Conj{G} \rightarrow \Conj{K}_G$ is in fact surjective with kernel $\subhyp{K}{G}^\perp$ \cite{BloomHeyer95_Harmonic_analysis}. This means that the characters of $\Conj{K}_G$ are in 1-to-1 correspondence with the cosets of $\subhyp{K}{G}^\perp$ in $\Repr{G}$ (i.e., $\Conj{K}_G^* \simeq \Repr{G} / \subhyp{K}{G}^\perp$), so our basis for characters of $\Conj{K}_G$ should be a basis of cosets of $\subhyp{K}{G}^\perp$ in $\Repr{G}$. As described in the example of section~\ref{sect:Examples Normal Form CSS States}, the cosets of $\subhyp{K}{G}^\perp$ are of the form $\ket{\Hchi_\nu \subhyp{K}{G}^\perp} = \sum_{\Hchi_\mu \in \Hchi_\nu \subhyp{K}{G}^\perp} \sqrt{w_\mu / \varpi_{\Hchi_\nu \subhyp{K}{G}^\perp}} \ket{\Hchi_\mu}$. Note that $\varw{\Hchi_{\mu}\subhyp{K}{G}^\perp}=w_{\Hchi_{\mu}\mathcal{K}^\perp}\varw{\subhyp{K}{G}^\perp}=\varw{\Hchi_{\mu}\subhyp{K}{G}^\perp}\varw{\left(\overline{G}/K\right)^*}=\varw{\Hchi_{\mu}\subhyp{K}{G}^\perp}\varw{\overline{G}/K}=\varw{\Hchi_{\mu}\subhyp{K}{G}^\perp}\varw{\overline{G}}/\varw{\subhyp{K}{G}}$, using the definitions in section~\ref{sect:Hypergroups}. This means that we can write the state instead as \[ \ket{\Hchi_\nu \subhyp{K}{G}^\perp} = \sum_{\Hchi_\mu \in \Hchi_\nu \subhyp{K}{G}^\perp} \sqrt{\frac{w_\mu\, \varw{\subhyp{K}{G}}}{\w{\Hchi_{\mu}\subhyp{K}{G}^\perp} \varw{G}}} \ket{\Hchi_\mu}, \] which is the definition we will use below.

With that basis chosen, we can now calculate the Fourier transform of a conjugacy class state $\ket{C_x}$ with $C_x \in \Conj{K}_G$. The key fact we will use below is that $\Hchi_\mu(C_x) = \Hchi_\nu(C_x)$ whenever $\Hchi_\mu \in \Hchi_\nu \subhyp{K}{G}^\perp$ since they differ only by multiplication with a character that is identity on $C_x$ (lemma \ref{lemma:Quotient-Annihilator Isomorphisms}.(i)). Hence, we can define $\Hchi_{\nu}\subhyp{K}{G}^\perp(C_x)$ to be this common value.
\begin{eqnarray*}
\Fourier{\overline{G}} \ket{C_x}
 &=& \sqrt{\frac{\w{C_x}}{\varw{G}}} \sum_{\Hchi_\mu \in \Repr{G}} \sqrt{\w{\mu}} \Hchi_\mu(C_x) \ket{\Hchi_\mu} \\
 &=& \sqrt{\frac{\w{C_x}}{\varw{G}}} \sum_{\Hchi_\nu \subhyp{K}{G}^\perp \in \Repr{G}/\subhyp{K}{G}^\perp}  \Hchi_\nu \subhyp{K}{G}^\perp(C_x) \sum_{\mu \in \Hchi_\nu \subhyp{K}{G}^\perp} \sqrt{\w{\mu}} \ket{\Hchi_\mu} \\
 &=& \sqrt{\frac{\w{C_x}}{\varw{G}}} \sum_{\Hchi_\nu \subhyp{K}{G}^\perp \in \Repr{G}/\subhyp{K}{G}^\perp } \Hchi_\nu \subhyp{K}{G}^\perp(C_x) \sqrt{\frac{\w{\Hchi_{\nu}\subhyp{K}{G}^\perp} \varw{G}}{\varw{\subhyp{K}{G}}}} \ket{\Hchi_{\nu}\subhyp{K}{G}^\perp} \\
 &=& \sqrt{\frac{\w{C_x}}{\varw{\subhyp{K}{G}}}} \sum_{\Hchi_\nu \subhyp{K}{G}^\perp \in \Repr{G}/\subhyp{K}{G}^\perp} \sqrt{\w{\Hchi_{\nu}\subhyp{K}{G}^\perp}} \Hchi_\nu \subhyp{K}{G}^\perp(C_x) \ket{\Hchi_{\nu}\subhyp{K}{G}^\perp}
\end{eqnarray*}
Because $\subhyp{K}{G}^*\cong \Repr{G}/\subhyp{K}{G}^\perp$ and  $\varw{\subhyp{K}{G}}=\varw{\subhyp{K}{G}^*}$, this last line is, by definition, the QFT for $\Conj{K}_G$, so we have seen
that the QFT for $\Conj{G}$ implements this QFT as well.
\end{proof}
The lemma shows that the assumption that we have an efficient QFT for the whole hypergroup $\Conj{G}$ is sufficient to allow us to recurse on a subproblem on $\Conj{K}_G$ without having to assume the existence of another efficient QFT specifically for the subproblem, as occurred in our last algorithm.

Our next task is to analyze the AKR algorithm applied to this hypergroup and,
in particular, determine the probability distribution that we will see on
character cosets when we measure. Recall that the algorithm starts by preparing
 a weighted superposition over $\subhyp{K}{G}$ (which is a uniform distribution over $K$) and  invoking the oracle.
The result is
\[ \rho = \sum_{C_x\subhyp{N}{G} \in (\subhyp{N}{G}/\subhyp{K}{G})}\frac{ \w{C_x\Conj{N}_G} }{\varw{(\subhyp{K}{G}/\subhyp{K}{G})}}
\ket{C_x\Conj{N}_G} \bra{C_x\Conj{N}_G}. \]

As with AKR, we can find the Fourier transform of this state directly from
theorem~\ref{thm:Normal Form CSS States}. Following the same argument as
before, we see that the $\ket{\psi}$ from part (b) with $\Hchi_\varsigma = \Hchi_1$,
the trivial character, and $s=C_x$ is precisely the state
$\ket{C_x\overline{N}_G}$. Note that, since we are no longer working in $\overline{G}$ but the
subhypergroup $\overline{K}_G$, we do a QFT over $\overline{K}_G$ (lemma \ref{lma:subhypergroup-qft}) and the subhypergroup $\mathcal{N}^\perp=\subhyp{N}{G}^\perp$ in theorem~\ref{thm:Normal Form CSS States} belongs to
$\overline{K}_G^*$. 
\[ \Fourier{\overline{K}_G} \ket{C_x\overline{N}_G} =
    \sum_{\Hchi_\mu \in \subhyp{N}{G}^\perp \le \subhyp{K}{G}^*}
   \sqrt{\frac{\w{\Hchi_\mu} \w{C_x\overline{N}_G} }{\varw{\subhyp{N}{G}^\perp}}}
    \Hchi_\mu(C_x) \ket{\Hchi_\mu}, \]
By a similar calculation to before, we have
\begin{eqnarray*}
\Fourier{\overline{K}_G} \rho \Fourier{\overline{K}_G}^\dag &=&
   \sum_{\Hchi_\mu, \Hchi_\nu \in \subhyp{N}{G}^\perp}
   \sqrt{\w{\Hchi_\mu} \w{\Hchi_\nu}} 
 \sum_{C_x\overline{N}_G \in (\subhyp{K}{G}/\subhyp{N}{G})}
   \frac{\w{C_x\overline{N}_G}^2}{\varw{\subhyp{N}{G}^\perp}^2} \Hchi_\mu(C_x) \Hchi_{\bar{\nu}}(C_x)
   \ket{\Hchi_\mu}\bra{\Hchi_\nu}.
\end{eqnarray*}
Finally, using $\subhyp{N}{G}^\perp\cong(\subhyp{K}{G}/\subhyp{N}{G})^*\cong (\subhyp{K/N}{G/N})^*$ (lemma \ref{lemma:Roth}),  we conclude that the probability of measuring $\Hchi_\mu\in\subhyp{N}{G}^\perp$ is
\begin{align}
\Pr(\Hchi_\mu) &= \w{\Hchi_\mu} \sum_{C_x\subhyp{N}{G} \in (\subhyp{K}{G}/\subhyp{N}{G})}
    \frac{\w{C_x\overline{N}_G}^2}{\varw{ (\subhyp{K}{G}/\subhyp{N}{G})}^2}
    \Hchi_\mu(C_x) \overline{\Hchi_{\mu}}(C_x)\notag\\
    &= \w{\Hchi_\mu} \sum_{C_{xN} \in ({\subhyp{K/N}{G/N}})}
    \frac{\w{C_{xN}}^2}{\varw{\subhyp{K/N}{G/N}}^2}
    \Hchi_\mu(C_{xN}) \overline{\Hchi_{\mu}}(C_{xN}),\label{eq:AKR-prob2}
\end{align}
which is analogous to what we saw in equation~(\ref{eq:AKR-prob1}).

With this in hand, we can now prove the following result.
\begin{lemma}[\textbf{HSHP over \emph{p}-groups}]\label{lma:p-group}Let $G$ be a $p$-group, for some prime $p$. Suppose that we are given a hiding function $\overline{f} : \Conj{G} \rightarrow \{0,1\}^*$. If we can efficiently compute the QFT for $\overline{G}$ and we can efficiently compute the kernels of irreps when restricted to subgroups, then there is an efficient quantum algorithm for the CC-HSHP.
\end{lemma}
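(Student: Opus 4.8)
The plan is to solve the CC-HSHP by an adaptive recursion that descends a chain of normal subgroups $G = K_0 \supsetneq K_1 \supsetneq \cdots \supseteq N$, running the AKR routine at each stage not on a freshly chosen hypergroup but on the subhypergroup $\subhyp{K_i}{G} \le \Conj{G}$. The point is that this needs no assumption beyond those in the statement. First, because $\overline f$ is constant on conjugacy classes of $\Conj{G}$, it is automatically a valid hiding function for $\subhyp{K_i}{G}$, whose elements are the very same classes. Second, by Lemma~\ref{lma:subhypergroup-qft} the single transform $\Fourier{\Conj{G}}$ that we are assumed to compute already implements $\Fourier{\subhyp{K_i}{G}}$ in the appropriate basis, so no auxiliary QFT is ever required. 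Thus at stage $i$ I would run AKR on $\subhyp{K_i}{G}$ (coset-state preparation through the oracle as in Corollary~\ref{corollary:Coset State Preparations}, then $\Fourier{\subhyp{K_i}{G}}$, then a character measurement), collect outcomes $\Hchi_{\mu_1},\dots,\Hchi_{\mu_T}\in\subhyp{N}{G}^\perp$, and set $K_{i+1} = K_i \cap \bigcap_j \ker\mu_j$, which is computable by the hypothesis that kernels of irreps restricted to subgroups are computable.

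The heart of the matter—and where I expect the real work to lie—is to show that each stage makes genuine progress whenever $K_i \ne N$. A sampled $\Hchi_\mu$ fails to shrink $K_i$ exactly when it is trivial on $\subhyp{K_i}{G}$, i.e.\ when $\Hchi_\mu=\Hchi_1$, so it suffices to bound the trivial-character probability away from $1$. Specializing eq.~(\ref{eq:AKR-prob2}) to $\mu=1$, where $\w{\Hchi_1}=1$ and $\Hchi_1\equiv 1$, and using $\w{C}=|C|$ and $\varw{\subhyp{K/N}{G/N}}=|K/N|$, I would obtain
\begin{equation*}
\Pr(\Hchi_1) = \frac{1}{|K/N|^2}\sum_{C\subseteq K/N}|C|^2 ,
\end{equation*}
the sum over the conjugacy classes $C$ of $G/N$ contained in $K/N$. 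Since $K\supsetneq N$, the normal subgroup $K/N$ of the $p$-group $G/N$ is nontrivial, it is a union of classes with $\sum_{C\subseteq K/N}|C|=|K/N|$, and as $\{1\}$ is always a class of its own no single class equals $K/N$. Each $|C|$ is a power of $p$ with $|C|<|K/N|$, hence $|C|\le |K/N|/p$, and therefore
\begin{equation*}
\Pr(\Hchi_1) \le \frac{\max_C|C|}{|K/N|^2}\sum_{C\subseteq K/N}|C| = \frac{\max_C|C|}{|K/N|} \le \frac1p \le \frac12 .
\end{equation*}
This $p$-power bound is the crux of the whole argument and is precisely where the hypothesis that $G$ is a $p$-group is used; the general-group bound $1/2$ established before Theorem~\ref{thm:easy2} does not by itself guarantee strict descent on $\subhyp{K}{G}$.

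With this bound in hand, correctness and efficiency should follow routinely. Every $\Hchi_{\mu_j}\in\subhyp{N}{G}^\perp$ satisfies $N\le\ker\mu_j$, so $N\le K_{i+1}$ at every stage, and since kernels of irreps are normal in $G$ each $K_{i+1}$ is again a normal subgroup of $G$, so Lemma~\ref{lma:subhypergroup-qft} applies afresh and we never undershoot $N$. When $K_i=N$ the quotient $K_i/N$ is trivial, $\subhyp{N}{G}^\perp$ is trivial, only $\Hchi_1$ is ever sampled, and the algorithm halts outputting $N$. When $K_i\ne N$, the bound $\Pr(\Hchi_1)\le 1/p$ yields a nontrivial character among $T$ samples except with probability $p^{-T}$, so $T=O(\log(1/\delta))$ gives $N\le K_{i+1}\lneq K_i$ with probability $1-\delta$. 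Each strict descent lowers the index by at least a factor $p$, so the chain has length at most $\log_p|G|=O(\mathrm{polylog}\,|G|)$; a union bound over the stages bounds the total number of oracle calls and QFTs by $O(\mathrm{polylog}\,|G|)$. As every quantum step and every classical step (kernel computation and intersection) is efficient by hypothesis, the overall procedure is efficient, which would prove the lemma. The only genuinely delicate point is the probability bound above; the termination and bookkeeping are then straightforward.
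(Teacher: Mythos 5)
Your strategy is the same as the paper's: recurse on the subhypergroups $\subhyp{K_i}{G} \le \Conj{G}$, note that the CC-HSHP oracle restricts for free and that lemma~\ref{lma:subhypergroup-qft} supplies the QFT at every level, and drive the descent with a bound on the trivial-character probability from eq.~(\ref{eq:AKR-prob2}). Your crux is also the paper's crux: since class sizes divide $|G/N|$ and $|K/N|$ is a power of $p$, while no single class can equal $K/N$ (the identity is its own class), every class in $K/N$ has size at most $|K/N|/p$, giving $\Pr(\Hchi_1)\le 1/p\le 1/2$; your estimate $\sum_C |C|^2 \le \bigl(\max_C|C|\bigr)\sum_C|C|$ is just a slightly more direct route to the same number. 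The bookkeeping (sampled characters lie in $\subhyp{N}{G}^\perp$, kernels of restricted irreps are normal in $G$, chain length $\log_p|G|$, union bound) likewise matches; intersecting all sampled kernels rather than recursing on the first nontrivial one is an immaterial variation under the stated kernel assumption.

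There is, however, one genuine gap: the preparation of AKR's input state at the recursive levels. You delegate this to Corollary~\ref{corollary:Coset State Preparations}, but that circuit takes as input the trivial character state $\ket{\Hchi_1}$ of the hypergroup being used. At the top level ($K_0=G$) this is a single, easily prepared basis state; for a proper subhypergroup $\subhyp{K_i}{G}$, the trivial character of $\subhyp{K_i}{G}$ is, in the ambient space, the coset state $\ket{\Hchi_1\subhyp{K_i}{G}^\perp}$ --- a weighted superposition over all irreps of $G$ whose kernel contains $K_i$ --- and nothing in your assumptions or in lemma~\ref{lma:subhypergroup-qft} says how to prepare it. Equivalently, steps 1--2 of AKR on $\subhyp{K_i}{G}$ amount to preparing the uniform superposition $|K_i|^{-1/2}\sum_{k\in K_i}\ket{k}$ over a subgroup known only through generators (obtained from the kernel computations), which is a nontrivial quantum task in its own right. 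The paper flags exactly this obstacle and closes it by preparing that superposition directly via Watrous's algorithm for uniform superpositions over subgroups of solvable groups \cite{Watrous_solvable_groups}, which applies here because $p$-groups are solvable. Without this extra ingredient (or some substitute for it), your recursion cannot actually be executed below the top level, even though every other step of your argument is sound.
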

\begin{proof}
We follow a similar approach to before, applying AKR to subhypergroups
$\Conj{K}_G$ (starting with $K=G$) until we measure an irrep $\nu$ with
kernel smaller than $K$ and then recursing on $\subhyp{J}{G} \le \subhyp{K}{G}$, where $J =
\Ker(\nu|_K)$. By assumption, we can compute $\Ker(\nu|_K)$ efficiently. If we
fail to find such a $\nu$ in polynomially many samples, then we can conclude
that $K$ is the hidden subgroup with high probability.

By lemma \ref{lma:subhypergroup-qft} and the notes above it, our assumptions
imply that we have an oracle and an efficient QFT for each subproblem. To
implement AKR, we also need the ability to prepare a uniform superposition over
$K$. This was implemented earlier using the inverse Fourier transform. In this
case, that would require us to prepare a complicated coset state in
$\Repr{G}$. However, we can instead just prepare the superposition directly
by the result of Watrous \cite{Watrous_solvable_groups}.\footnote{This result
holds for black-box groups, so we only need to assume that we know the kernel
of each irrep not that we can intersect the kernels of arbitrary irreps.}

Finally, it remains to prove that we have a good probability (e.g., at least
$1/2$) of measuring a nontrivial irrep or, equivalently, that the probability
of measuring the trivial irrep is not too large (e.g., at most $1/2$). As
before, this amounts to putting a bound on $\sum_{C_{xN} \in(\subhyp{K/N}{G/N})}
(\w{C_{xN}} / \varw{\subhyp{K/N}{G/N}})^2$, this time by equation~(\ref{eq:AKR-prob2}). By
the same argument as before, this will hold if we can show that $\w{C_x\subhyp{N}{G}} /
\varw{\Conj{K/N}}$ is bounded by a constant less than 1.  Earlier, we showed this
by using the fact that the size of a conjugacy class divides the size of the
group.  Unfortunately, that does not help us here because we are comparing
$\w{C_x\subhyp{N}{G}}=|C_{xN}|$ not to the size of $G/N$ but to the size of the subgroup $K/N$. These
two need not be related by even a constant factor. In extreme cases, $K/N$ may
contain only one group element that is not in $C_{xN}$
\cite{math-exchange-one-non-identity-element}.

Instead, we will use properties of $p$ groups. Since $G$ is a $p$-group, so is
$K/N$, and the size of $K/N$ must be $p^k$ for some $k$. Since the size of a
conjugacy class divides the size of a group (this time $G/N$), the size of
$C_{xN}$ in $K/N$ must be $p^j$ for some $j$. Now, we must have $j \le k$ since $C_{xN} \subset \subhyp{K/N}{G/N}$. However, we cannot have $j=k$ unless $N = K$, which we have assumed
is not the case, since $K/N$ must contain at least two classes (one identity
and one non-identity). Hence, we can conclude that 
$\w{C_{xN}}=\abs{C_{xN}}$ is smaller than
$\varw{\subhyp{K/N}{G/N}}=|K/N|$ by at least a factor of $p \ge 2$. We conclude as before that the
probability of measuring the trivial irrep is at most $1/2$.
\end{proof}
\begin{theorem}[\textbf{CC-HSHP Is Easy, III}]
\label{thm:easy3}
Let $G$ be a nilpotent group. Suppose that we are given a hiding function
$\overline{f} : \Conj{G} \rightarrow \{0,1\}^*$. If we can efficiently compute
the QFT for $G$ and we can efficiently compute the kernels of irreps when
restricted to subgroups, then there is an efficient quantum algorithm for the
CC-HSHP.
\end{theorem}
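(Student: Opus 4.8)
The plan is to run exactly the recursive AKR scheme of Lemma~\ref{lma:p-group}: starting from $K=G$, repeatedly apply AKR to the subhypergroup $\subhyp{K}{G}$ (using the subhypergroup QFT of Lemma~\ref{lma:subhypergroup-qft}, which turns the assumed efficient QFT for $G$ into one for $\subhyp{K}{G}$, and Watrous's routine \cite{Watrous_solvable_groups} to prepare the uniform superposition over the solvable group $K$), and whenever a sampled irrep $\nu$ has $\Ker(\nu|_K)\lneq K$, recurse on the strictly smaller normal subgroup $J:=\Ker(\nu|_K)$, which still contains $N$ and is computable by assumption. Every ingredient of that argument transfers verbatim to nilpotent $G$ (subgroups of a nilpotent group are nilpotent, hence solvable, so Watrous still applies, and the recursion stays inside the normal subgroups of $G$), with a \emph{single} exception: the bound on the probability $\Pr(\Hchi_1)$ of sampling the trivial irrep, which in the $p$-group case rested on class sizes being powers of $p$. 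Thus the only genuine task is to re-establish $\Pr(\Hchi_1)\le\tfrac12$ for nilpotent groups, after which correctness and efficiency follow as before.

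By equation~(\ref{eq:AKR-prob2}) with $\mu=1$ we have $\Pr(\Hchi_1)=\sum_{C_{xN}\in\subhyp{K/N}{G/N}}\bigl(\w{C_{xN}}/\varw{\subhyp{K/N}{G/N}}\bigr)^2$, and since $\sum_{C_{xN}}\w{C_{xN}}/\varw{\subhyp{K/N}{G/N}}=1$ it suffices to show that every $G/N$-conjugacy class meeting $K/N$ has size at most $\tfrac12\abs{K/N}$. I would prove this using the structure theorem for finite nilpotent groups: the quotient $\bar G:=G/N$ is again nilpotent, hence the internal direct product $\bar G=\prod_p\bar G_p$ of its Sylow subgroups, and the normal subgroup $\bar K:=K/N\lhd\bar G$ decomposes compatibly as $\bar K=\prod_p\bar K_p$ with $\bar K_p=\bar K\cap\bar G_p\lhd\bar G_p$, because the factors have coprime orders. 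Writing $x=\prod_p x_p$, conjugation acts coordinatewise, so the centralizer splits as $C_{\bar G}(x)=\prod_p C_{\bar G_p}(x_p)$ and hence $\abs{C_x^{\bar G}}=\prod_p\abs{C_{x_p}^{\bar G_p}}$, each factor a power of $p$ contained in $\bar K_p$.

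The decisive step is then a purely $p$-group fact applied factor by factor: if $x\neq e$ then some $x_p\neq e$, and its $\bar G_p$-class lies in $\bar K_p$ yet omits the identity, so it is a proper subset of the $p$-group $\bar K_p$ and therefore has size at most $\abs{\bar K_p}/p$. Combining with the trivial bounds $\abs{C_{x_q}^{\bar G_q}}\le\abs{\bar K_q}$ on the remaining coordinates gives $\abs{C_x^{\bar G}}\le\abs{\bar K}/p\le\abs{\bar K}/2$, while the identity class satisfies $1\le\abs{\bar K}/2$ because $\bar K$ is nontrivial ($N\neq K$). This yields $\Pr(\Hchi_1)\le\tfrac12$, so a nontrivial irrep is sampled with probability at least $\tfrac12$ on each run, driving the recursion. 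The recursion descends along a chain $G=K_0\gneq K_1\gneq\cdots\supseteq N$ of length $O(\log\abs{G})$, and each level uses only polynomially many samples together with the efficient QFT and the kernel computation, so the overall algorithm is efficient.

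I expect the main obstacle to be exactly this conjugacy-class bound, and within it the subtlety that conjugation is by the possibly-larger ambient group $\bar G$ rather than by $\bar K$ itself --- the naive estimate that a proper subset of $\bar K$ has size $\le\abs{\bar K}-1$ is far too weak to give a constant-factor gap. The Sylow factorization is what resolves this, by localizing the entire estimate to a single prime where the coprime-order splitting of $\bar K$ forces the class to omit at least a factor of $p$. I would therefore take care to justify the decompositions $\bar G=\prod_p\bar G_p$, $\bar K=\prod_p\bar K_p$ and the coordinatewise factorization of centralizers, since these are precisely what make the bound hold \emph{uniformly} over all the normal subgroups $K$ produced during the recursion.
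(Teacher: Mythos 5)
Your proof is correct, but it takes a genuinely different route from the paper's. The paper's own proof is a two-line \emph{algorithmic} reduction: a nilpotent group is the direct product of its Sylow $p$-subgroups, any subgroup of such a product splits (by coprimality, via Goursat's lemma) into a product of subgroups of the factors, and hence the CC-HSHP on $\Conj{G}$ decomposes into independent instances on each $\Conj{G_p}$, each solved by lemma~\ref{lma:p-group}. You instead run the recursive algorithm of lemma~\ref{lma:p-group} \emph{directly} on the whole nilpotent group, and redo its only group-theoretic ingredient --- the bound $\Pr(\Hchi_1)\le\tfrac12$ from equation~(\ref{eq:AKR-prob2}) --- by pushing the Sylow decomposition into the analysis rather than the algorithm: classes of $G/N$ inside $K/N$ factor coordinatewise, some coordinate of a nonidentity class is a nonidentity class of the $p$-group $(G/N)_p$ contained in $(K/N)_p$, and since that coordinate class size divides $\abs{(G/N)_p}$ it is a power of $p$ strictly below the $p$-power $\abs{(K/N)_p}$, forcing a loss of a factor $p\ge 2$. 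Both arguments rest on the same two facts (Sylow splitting of nilpotent groups and $p$-power divisibility of class sizes), but they deploy them differently, and each buys something. The paper's reduction is shorter given lemma~\ref{lma:p-group}, but it implicitly requires the algorithm to work with the Sylow factors: one must identify the subgroups $G_p$, check that the hiding function restricted to $\Conj{G_p}\subset\Conj{G}$ hides exactly $\subhyp{N_p}{G_p}$ (true, because conjugation and coset-multiplication act coordinatewise, but left unstated in the paper), and recombine $N=\prod_p N_p$ at the end. Your version keeps the algorithm identical to the $p$-group case --- the direct-product structure appears only in the correctness proof --- so it needs no algorithmic access to the Sylow decomposition, and your class-size bound holds uniformly over all intermediate normal subgroups $K\trianglelefteq G$ produced by the recursion (which, as you note, is exactly where the naive ``proper subset'' estimate fails and where the non-nilpotent affine-group counterexample of section~\ref{sect:Open Problems} lives). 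The remaining ingredients you cite --- the subhypergroup QFT of lemma~\ref{lma:subhypergroup-qft}, Watrous's superposition preparation over the solvable group $K$, the normality of $J=\Ker(\nu|_K)$ in $G$ and the containment $N\le J$, and the $O(\log\abs{G})$ recursion depth --- transfer exactly as in the paper's lemma~\ref{lma:p-group}, so your argument is complete.
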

\begin{proof}
A nilpotent group is a direct product of $p$-groups for different primes
\cite{Dummit_abstract_algebra}. This means that any subgroup must be a direct
product of subgroups, one in each of the $p$-groups.\footnote{This is, for
example, an immediate consequence of Goursat's Lemma \cite{Lang_algebra} (since
a quotient of a $p$-group and a quotient of a $q$-group, with $q \not= p$, can
only be isomorphic if they are both trivial groups).} Hence, it suffices to
solve the CC-HNSP in each of these $p$-groups.
\end{proof}
Finally, we can apply this result to the HSNP if we are given an oracle that
can be converted into a CC-HSHP oracle.
\begin{corollary}[\textbf{HNSP is easy, III}] Let $G$ be a nilpotent group. Suppose that we are given a hiding function $f :
G \rightarrow H$ that is a homomorphism. If we can efficiently implement the QFT
for $G$, compute with conjugacy classes of $G$ and $H$, and compute  kernels
of irreps when restricted to subgroups, then there is an efficient quantum
algorithm for the HNSP.
\end{corollary}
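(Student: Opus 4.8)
The plan is to assemble this corollary from two results already in place: the reduction theorem~\ref{thm:reduction2} (HNSP $\le$ CC-HSHP, II) and the nilpotent CC-HSHP algorithm of theorem~\ref{thm:easy3}. No new quantum machinery is needed; the work lies entirely in checking that the stated assumptions suffice to drive both pieces and to glue them together within the conjugacy-class picture.

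First I would convert the given oracle into a CC-HSHP oracle. Because $f : G \rightarrow H$ is a group homomorphism and we are assumed to compute efficiently with conjugacy classes of both $G$ and $H$, theorem~\ref{thm:reduction2} applies directly: the map sending $x$ to the conjugacy-class label of $f(x)$ is a class function, and hence descends to an oracle $\overline{f} : \Conj{G} \rightarrow \{0,1\}^*$ that hides the subhypergroup $\subhyp{N}{G} \le \Conj{G}$ corresponding to the hidden normal subgroup $N$. This step is purely classical and efficient under the conjugacy-class computability assumption.

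Next I would run the algorithm of theorem~\ref{thm:easy3} on this oracle. Since $G$ is nilpotent and we have, by hypothesis, an efficient QFT for $G$ --- which, by the results of appendix~\ref{app:CC implementation details}, implements the QFT for $\Conj{G}$ --- together with the ability to compute kernels of irreps restricted to subgroups, theorem~\ref{thm:easy3} yields an efficient quantum algorithm that recovers $\subhyp{N}{G}$ and thus $N$. The point worth emphasising --- and the only place where care is required --- is that, in contrast to the corollary accompanying theorem~\ref{thm:easy2}, we do \emph{not} need separate hiding functions or separate QFTs for the intermediate subgroups $K$ arising in the recursion. This is precisely what lemma~\ref{lma:subhypergroup-qft} buys us: the recursion of theorem~\ref{thm:easy3} stays inside $\Conj{G}$, descending to subhypergroups $\subhyp{K}{G} \le \Conj{G}$, and there the single QFT of $\Conj{G}$ simultaneously implements $\Fourier{\subhyp{K}{G}}$ in the coset basis of $\subhyp{K}{G}^\perp$, while the single oracle $\overline{f}$, being constant on conjugacy classes of $\Conj{G}$, automatically restricts to a valid CC-HSHP oracle for $\subhyp{K}{G}$. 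Consequently the combination of theorem~\ref{thm:reduction2}, theorem~\ref{thm:easy3}, and lemma~\ref{lma:subhypergroup-qft} closes the argument with no genuine obstacle beyond the bookkeeping of verifying that every computation remains efficient under the stated assumptions.
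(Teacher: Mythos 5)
Your proposal is correct and is essentially the proof the paper intends: the corollary appears without an explicit proof, but its justification—reduce via theorem~\ref{thm:reduction2} using the conjugacy-class computability of $G$ and $H$, then invoke theorem~\ref{thm:easy3} with the QFT of $G$ (which implements $\Fourier{\Conj{G}}$ per appendix~\ref{app:CC implementation details}) and the kernel assumption—is exactly your two-step composition, mirroring the written proofs of the earlier corollaries I and II. Your closing remark that lemma~\ref{lma:subhypergroup-qft} is what removes any need for per-subgroup oracles or QFTs accurately captures why theorem~\ref{thm:easy3} gets by with weaker assumptions than theorem~\ref{thm:easy2}.
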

We note that the class of nilpotent groups includes the Heisenberg group,
which, as we saw in Example~\ref{ex:non-convergence}, is a case where the
original AKR algorithm does not find the hidden normal subgroup with
polynomially many samples. Our last algorithm, however, solves the problem
efficiently in this case.

Our assumption in the results above that we can efficiently compute the kernel
of any irrep of $G$ when restricted to a normal subgroup may seem weaker than
the assumption of HRT that we can intersect the kernels of irreps (which HRT
does in order to find $N$). However, it is not hard to see that the two
assumptions are in fact equivalent. Certainly, if we can intersect kernels of
irreps, then we can use that to find $\Ker(\nu|_K)$ since the latter equals
$\Ker \nu \cap K$ and $K$ itself is the intersection of kernels of irreps (by
induction). Likewise, if we are given a list of irreps, one way to compute the
intersection of their kernels is to order them and then repeatedly compute the
kernel of the next irrep restricted to the intersection of the kernels of those
previous.

Hence, both HRT and our last algorithm require the same assumptions in order
for the classical parts of the algorithms to be efficient. Nonetheless, the
quantum parts of the two algorithms are fundamentally different, so our
algorithm demonstrates a new way in which this important problem can be solved
efficiently by quantum computers.

\subsection{Further results and open problems}
\label{sect:Open Problems}

We finish this section with some discussion on whether this last result can be extended further. The most natural next step beyond nilpotent groups would be to show that the algorithm works for super-solvable groups. We start with a positive example in that direction.

\begin{example}[\textbf{Dihedral Groups}]
As we saw above, the algorithm will work correctly provided that the
probability of measuring the trivial irrep is not too close to 1. By
equation~(\ref{eq:AKR-prob2}), this is given by $\sum_{C_{xN} \in(\subhyp{K/N}{G/N})}
(\w{C_{xN}} / \varw{\subhyp{K/N}{G/N}})^2$.

Without loss of generality, we may assume $N=\{e\}$ by instead looking at the
group $G/N$. For a dihedral group, such a quotient is either dihedral or
abelian. Since abelian groups are (trivially) nilpotent, we know the algorithm
works in that case already.

By our earlier arguments, the probability $\sum_{C \in (\Conj{K})_G} (\w{C} /
\varw{\Conj{K}})^2$ is bounded by a constant below one provided that the
fractional weights $\w{C} / \varw{\Conj{K}}$ are bounded by a constant below one. In
other words, our only worry is that there is a normal subgroup $K$ containing a
conjugacy class that is nearly as large as $K$.

Let us consider the dihedral group of order $2n$, generated by a rotation $a$
of order $n$ and a reflection $r$ of over 2. Most of the normal subgroups are
contained in the cyclic subgroup $\langle a \rangle$. These are subgroups of
the form $\langle a^d \rangle$ with $d$ dividing $n$. Every conjugacy class in
this subgroup contains either 1 or 2 elements (since $r^{-1} a^j r = a^{-j}$
and hence $r^{-1} a^{-j} r = a^j$). Since a nontrivial normal subgroup cannot
consist of one conjugacy class, the worst case would be when $K$ has 3 elements
and contains a conjugacy class with 2 elements. In that case, the probability
of measuring the trivial irrep could only be as large as $2/3$, which still a
constant (independent of $n$) less than one\footnote{Note that if $\w{C_{x}} / \varw{\subhyp{K}{G}}\leq c$, then character orthogonality  (\ref{eq:Character Orthogonality Subhypergroups}) lets us bound the probability of measuring  a character $\Hchi_\mu\in\mathcal{T}^*$  (\ref{eq:AKR-prob2}) since $\mathrm{Pr}(\Hchi_\mu)/\ws{\Hchi_\mu}= \sum_{C_{x} \in\subhyp{K}{G}}
(\w{C_{x}} / \varw{\subhyp{K}{G}})^2 |\mathcal{\Hchi}_\mu(C_x)|^2\leq c (\sum_{C_{x} \in\subhyp{K}{G}}
 \w{C_{x}} / \varw{\subhyp{K}{G}}|\mathcal{\Hchi}_\mu(C_x)|^2)=c/\ws{\Hchi_\mu\subhyp{K}{G}^\perp}$. Specifically, for any invertible character we get $\mathrm{Pr}(\Hchi_\mu)\leq c$.}.

If $n$ is odd, then any normal subgroup $K$ containing $r$ is the whole group,
and the largest conjugacy class contains every $a^j r$ for $j \in \Integer_n$,
which is half the elements, so we get a bound of $1/2$ in that case. If $n$ is
even, then there are two more normal subgroups, one containing $a^{2j} r$ for
each $j$ and one containing $a^{2j+1} r$, but both also contain all rotations
of the form $a^{2j}$, so at least half of the elements in these subgroups
are contained in 1--2 element conjugacy classes, and once again we get a bound
of $1/2$.

All together, this shows that the probability of measuring the trivial irrep is
at most $2/3$ for the dihedral groups, so the algorithm will succeed with high
probability.
\end{example}
On the other hand, we also have a negative example.

\begin{example}[\textbf{Super-solvable group} \cite{math-exchange-one-non-identity-element}]
We will consider the group of simple affine transformations over $\Integer_p$.
These are transformations of the form $x \mapsto ax + b$ for some $a \in
\Integer_p^\times$ and $b \in \Integer_p$, which we denote by $(a,b)$. These
form a group under composition.  In particular, applying $(a,b)$ and then
$(c,d)$ gives $acx + bc + d$, which shows that $(c,d) \cdot (a,b) = (ac, bc +
d)$. A simple calculation shows the formula for the commutator \[[(a,b),(c,d)] = (1,
c^{-1}(1-a^{-1})d - a^{-1}(1 - c^{-1})b).\] This implies that the commutator
subgroup $[G,G]$ is contained in the set $\set{(1,b)}{b \in \Integer_p}$. On
the other hand, taking $a = 1$, $c = 2^{-1}$, and $d = 0$ in this formula gives
the result $(1,b)$, so $[G,G]$ must contain all the elements of this set. If
we mod out $[G,G]$, then we are left with the  abelian group
$\Integer_p^\times$. We have proven that the group is super-solvable.

On the other hand, for any element $(1,d)$, taking $a=2^{-1}, c=1$ and $b=0$ in the
formula above gives the result $(1,-d)$, which is not the identity $(1,0)$.
This means that the group has a trivial center, and thus, it cannot be
nilpotent.

Another simple calculation shows that conjugating $(1,b)$ by $(c,0)$ gives us
$(1,c^{-1}b)$. Hence, the conjugacy class of $(1,b)$ with $b\neq 0$ contains every $(1,b')$ with $b' \not= 0$. This is all of the subgroup $[G,G]$ except for the identity
element $(1,0)$. Hence, once we have $K=[G,G]$, we can see by
equation~(\ref{eq:AKR-prob2}) with $N=\{e\}$ that the algorithm will get the
trivial irrep with high probability, so we can see that the algorithm will fail
to find $N$ in this case.
\end{example}

Put together, these results show that our last algorithm works for some
non-nilpotent, super-solvable groups (like the dihedral groups\footnote{It is
super-solvable since it is a semi-direct product of abelian   groups, and it is
easy to check that it is not nilpotent unless $n$ is a power of 2.}), but not
all super-solvable groups since it fails on the affine linear group.
Determining exactly which super-solvable groups the algorithm does succeed on
is an open problem.

\section*{Acknowledgements}

JBV thanks Maarten Van den Nest for motivating them into investigating the classical simulability of nonabelian quantum Fourier transforms \cite{bermejo2011classical}. We thank Aram Harrow,  Oliver Buerschaper, Nicolas Delfosse,  and Robert Raussendorf, Martin Schwarz and Norman J.\ Wildberger for  comments on the manuscript; R.\ Jozsa, for providing references; and the MIT Center for Theoretical Physics and the Max Planck Institute of Quantum Optics for hosting us in 2013 and 2014. We acknowledge funding from  the ENB QCCC program, SIQS, and the NSF grant CCF-1111382.

\addcontentsline{toc}{section}{References}
\bibliographystyle{utphys}
\bibliography{database}

\appendix

\section{Proof of theorem \ref{thm:Evolution of Stabilizer States}, part II}\label{app:A}

In this appendix, we derive equations (\ref{eq:Pauli gates are Clifford}-\ref{eq:QFTs are Clifford}) finishing the proof of theorem \ref{thm:Evolution of Stabilizer States}.  We treat the different types of normalizer gates separately below.
\begin{enumerate}
\item \textbf{Automorphism Gates.}
It follows from the definition in section \ref{sect:Hypergroups} that any hypergroup automorphism $\alpha$ fulfills  $n_{\alpha(a),\alpha(b)}^{\alpha(c)}=n_{a,b}^{c}$  and $\w{\alpha(a)}=\w{a}$ for all $a,b,c\in \mathcal{T}$. Combining these properties with  (\ref{eq:Pauli operators DEFINITION}) we derive (\ref{eq:Automorphism gates are Clifford}).
\item \textbf{Quadratic phase gates.} The RHS\ of  (\ref{eq:Quadratic Phase gates are Clifford}) follows because $D_\xi$ is diagonal, hence, commutes with $Z_\mathcal{T}(\mathcal{X}_\mu)$.

The LHS can be derived by explicitly evaluating the action of $D_\xi \PX{\mathcal{T}}(a)D_\xi^\dagger$  on basis states in $\mathsf{B}_\mathcal{T}=\{\ket{b},b\in\mathcal{T}\}$  using that, for any $c, c'\in ab$ and any quadratic function $\xi$ with associated bicharacter $B$, the following identities holds:
 \begin{align*}
  (\mathrm{i})\quad  \xi(c)=\xi(c')=\xi(ab),  \quad  (\mathrm{ii}) \quad \xi(c)= \xi(a)\xi(b)  B(a,b), \quad (\mathrm{iii}) \quad  B(a,b)= \Hchi_{\beta(a)}(b)
 \end{align*}
  for some  homomorphism $\beta$ from $\mathcal{T}$ onto $\mathcal{T}_\mathrm{inv}^*$. Above, (i) follows from the triangle inequality and given properties, namely, $\xi(ab)=\sum_{c\in ab} n_{ab}^{c}\xi(c)$,  $|\xi(c)|=1$, and $\sum_{c} n_{ab}^c{=}1$; (ii) follows from (i) and the definition of quadratic function; and last, the normal form for bicharacters (iii) can be obtained by extrapolating the group-setting argument given in \cite{VDNest_12_QFTs}, lemma 5. 
\item \textbf{Quantum Fourier transforms.} We derive (\ref{eq:QFTs are Clifford}) by explicitly computing the action of Pauli operators on the states  $\Fourier{\mathcal{T}}^\dagger\ket{\mathcal{X}_\mu}$ (which form a basis)  using  (\ref{eq:Quantum Fourier Transform over Hypergroup T}): 
  \begin{align}
  \PX{\mathcal{T}}(a)\Fourier{\mathcal{T}}^\dagger\ket{\mathcal{X}_\mu}&=\sum_{b\in \mathcal{T}} \sqrt{\tfrac{\w{b}\ws{\overline{\Hchi_{\mu}}}}{\varpi_{\mathcal{T}}}} \overline{\Hchi_{\mu}}(b)\PX{\mathcal{T}}(a)\ket{b} \\&\stackrel{(\ref{eq:Pauli operators DEFINITION})}{=}\sum_{b\in \mathcal{T}}
  \sqrt{\tfrac{\w{b}\ws{\overline{\Hchi_{\mu}}}}{\varpi_{\mathcal{T}}}} \overline{\Hchi_{\mu}}(b)\left(\sum_{c} \sqrt{\tfrac{\w{b}}{\w{c}}}\,  n_{a,b}^{c} \ket{c}\right)\notag \\
  &\stackrel{(\ref{eq:Reversibility Property})}{=}\sum_{c\in \mathcal{T}} \sqrt{\tfrac{\w{c}\ws{\overline{\Hchi_{\mu}}}}{\varpi_{\mathcal{T}}}} \left(\sum_{b}   n_{{\overline{a}},c}^{b} \overline{\Hchi_{\mu}}(b) \right)\ket{c}=\sum_{c\in \mathcal{T}} \sqrt{\tfrac{\w{c}\ws{\overline{\Hchi_{\mu}}}}{\varpi_{\mathcal{T}}}} \overline{\Hchi_{\mu}}({\overline{a} } ) \overline{\Hchi_{\mu}}(c) \ket{c}\notag\\  &= \mathcal{X}_{{\mu}}({{a} } )\Fourier{\mathcal{T}}^\dagger\ket{\mathcal{X}_\mu}= \Fourier{\mathcal{T}}^\dagger\PZ{{\mathcal{T}^*}}(a)\ket{\Hchi_\mu}
  \end{align} 
 \begin{align}
 \PZ{\mathcal{T}}(\mathcal{X}_\mu)\Fourier{\mathcal{T}}^\dagger\ket{\mathcal{X}_\nu}&\notag\stackrel{(\ref{eq:Pauli operators DEFINITION})}{=}\sum_{b\in \mathcal{T}} \sqrt{\frac{\w{b} \ws{\overline{\Hchi_{\nu}}}}{\varpi_{\mathcal{T}}}} \, \mathcal{X}_{\mu}(b)\overline{\Hchi_\nu}(b)\ket{b}
\\&= \sum_{b\in \mathcal{T}} \sqrt{\frac{\w{b} \ws{\overline{\Hchi_{\nu}}}}{\varpi_{\mathcal{T}}}} \,  \left(\sum_{\overline{\Hchi_\gamma}\in{\mathcal{T}^*}}m_{\mu\overline{\nu}}^{\overline{\gamma}} \mathcal{X}_{{\overline{\gamma}}}(b)\right)\ket{b}\\
&= \sum_{\overline{\Hchi_\gamma}\in{\mathcal{T}^*}} \sqrt{\frac{\ws{\overline{\Hchi_{\nu}}}}{\ws{\Hchi_{\overline{\gamma}}} }}  m_{\mu\overline{\nu}}^{\overline{\gamma}}\left(\sum_{b\in \mathcal{T}} \sqrt{\frac{\w{b} \ws{\Hchi_{\overline{\gamma}}}}{\varpi_{\mathcal{T}}}} \,  \mathcal{X}_{{\overline{\gamma}}}(b)\ket{b}\right)\notag\\
& =\Fourier{\mathcal{T}}^\dagger \sum_{\overline{\Hchi_\gamma}\in{\mathcal{T}^*}} \sqrt{\frac{\ws{\overline{\Hchi_{\nu}}}}{\ws{\Hchi_{\overline{\gamma}}}}}  m_{\mu\overline{\nu}}^{\overline{\gamma}}\ket{\mathcal{X}_{{\gamma}}}= \Fourier{\mathcal{T}}^\dagger \PX{{\mathcal{T}^*}}(\overline{\Hchi_{\mu}}) \ket{\Hchi_\nu},
 \end{align} 
where we used   $\ws{\Hchi_{\overline{\gamma}}}=\ws{\Hchi_{\gamma}}$ and $m_{\mu\overline{\nu}}^{\overline{\gamma}}=m_{\overline{\mu}\nu}^\gamma$ from section \ref{sect:Hypergroups}. The analogous statement for partial QFTs follows straightforwardly using that character hypergroup of $\mathcal{T}_1\times \cdots \times \mathcal{T}_m$ is $\mathcal{T}_1^*\times \cdots \times \mathcal{T}_m^*$ \cite{BloomHeyer95_Harmonic_analysis} and the tensor-product structure of Pauli operators (section \ref{sect:Pauli Operators}).
\item\textbf{Pauli gates.} We can use  (\ref{eq:QFTs are Clifford}) to get   $Z_\mathcal{T}(\mathcal{X}_\varsigma)\PX{\mathcal{T}}(a) Z_\mathcal{T}(\mathcal{X}_\varsigma)^\dagger = \mathcal{X}_{{\varsigma}}(a) \PX{\mathcal{T}}(a)$ and  $Z_\mathcal{T}(\mathcal{X}_\varsigma)\PZ{\mathcal{T}}(\mathcal{X_\mu})Z_\mathcal{T}(\mathcal{X}_\varsigma)^\dagger=\PZ{\mathcal{T}}(\mathcal{X_\mu})$  since invertible characters are quadratic functions with trivial $\mathcal{B}$ and $\beta$. Moreover, we can apply   (\ref{eq:QFTs are Clifford}) and repeat the argument in the character basis, obtaining $X_\mathcal{T}(s)\PX{\mathcal{T}}(a) X_\mathcal{T}(s)^\dagger=\PX{\mathcal{T}}(a)$,  $X_\mathcal{T}(s)\PZ{\mathcal{T}}(\mathcal{X_\mu}) X_\mathcal{T}(s)^\dagger = \mathcal{X_\mu}(\overline{s})\PZ{\mathcal{T}}(\mathcal{X_\mu})$. Equation (\ref{eq:Pauli gates are Clifford}) is derived combining these expressions.
\end{enumerate}

\section{Quadratic functions}\label{app:Quadratic functions}

We prove that the functions $\xi_{i}$, $\xi_{j}$, $\xi_{k}$ and $\xi$   defined in section \ref{sect:Quaternionic circuits} are quadratic. The quadraticity of $\xi_{x}$, with $x=i,j,k$, follows from the fact that the function  can be obtained by composing the quotient map $\overline{Q}_8\rightarrow \overline{Q}_8/\{\pm 1, \pm x\} \cong \Integers_2$, with the isomorphism $\overline{Q}_8/\langle x\rangle \rightarrow  \Integers_2$ and the map $\Integers_2\rightarrow \C:a \rightarrow i^a$; since the latter is a quadratic function of $\Integers_2$ \cite{VDNest_12_QFTs}, it follows easily that $\xi_x$ is a quadratic function of $\overline{Q}_8$. Note that in this derivation we implicitly use that  $\{\pm 1, \pm x\}$ is a subhypergroup of $\overline{Q}_8$  \cite{Roth75_Character_Conjugacy_Hypergroups}, that the quotient $\overline{Q}_8/S$ is an abelian hypergroup for any subhypergroup $S$, and that the quotient map $\overline{Q}_8\rightarrow \overline{Q}_8/S$ is a  hypergroup homomorphism \cite{Roth75_Character_Conjugacy_Hypergroups}.

To show that $\xi:\overline{Q}_8\times \overline{Q}_8\rightarrow\C$  is quadratic, we use the fact, prove below, that the function $B(C_x,C_y):=f_{C_x}(C_y)$ is a symmetric bi-character of $\overline{Q}_8$. Given that property as a  promise and recalling that $\xi((C_x,C_y))=B(C_x,C_y)$ (by definition), we can see that
\begin{align}
\xi\left((C_a,C_b)\cdot (C_c,C_d)\right)&=B(C_{a}C_c,C_b C_d)=B(C_a,C_b C_d)B(C_{c},C_b C_d))\notag\\
&=B(C_a,C_b)B(C_a,C_d)B(C_c,C_b)B(C_c,C_d)\notag\\
&=\xi\left((C_a,C_b)\right)\xi\left( (C_c,C_d)\right)B(C_a,C_d)B(C_c,C_b)\notag\\
&=\xi\left((C_a,C_b)\right)\xi\left( (C_c,C_d)\right)B'\left((C_a,C_b),(C_c,C_d)\right),
\end{align}
where we define $B'\left((C_a,C_b),(C_c,C_d)\right)=B(C_{a},C_d)B(C_{c},C_b)$. The latter is easily seen to be a bi-character of $\overline{Q}_8\times\overline{Q}_8$, so that $\xi$ is indeed quadratic.

It remains to show that  $B(C_x,C_y)$ is a symmetric bi-character. To see this, note,  that both the quotient hypergroup  $\overline{Q}_8/\{\pm 1\}$ and the subhypergroup of linear characters $\widehat{Q_8}_\ell$ of $Q_8$ are isomorphic to the Klein four group $\Integers_2\times \Integers_2$. Observe next that the map $C_x\rightarrow f_{C_x}$ is a homomorphism $\overline{Q}_8\rightarrow \widehat{Q_8}_\ell$, as it can be obtained composing the quotient map $\overline{Q}_8\rightarrow \overline{Q}_8/\{\pm 1\}$ with a chain of isomorphisms $\overline{Q}_8/\{\pm 1\} \rightarrow \Integers_2\times \Integers_2 \rightarrow \widehat{Q_8}_\ell$.  This latter fact implies that $B(C_x,C_y)=f_{C_x}(C_y)$ is a character in both arguments, hence, a bi-character. Finally, it is routine to check that $B(C_x,C_y)=B(C_y,C_x)$   by explicit evaluation, which completes the proof.

\section{Efficient vs.\ doubly efficient computable hypergroups}\label{app:Discret Log}

We  give an example of an efficient computable abelian hypergroup that cannot be   doubly efficiently computable unless we are given the ability to compute discrete logarithms over $\Integers_p^\times$, a problem that is believed to be hard for classical computers and is the basis of the Diffie-Hellman public-key cryptosystem \cite{DiffieHellman}. (Quantum computers can solve this problem using Shor's discrete-log algorithm \cite{Shor}. This problem reduces to the so-called hidden subgroup problem over $\Integers_{p-1}^2$ \cite{Nielsen02UniversalSimulations} for a certain hiding function $f$, which defines a group homomorphism from $\Integers_{p-1}^2$ to $\Integers_p^\times$ \cite{BermejoLinVdN13_BlackBox_Normalizers}.)

Considering now the group $\mathcal{T}=\Integers_{p-1}^2\times\Integers_p^\times$, which is manifestly efficiently computable following our definition, we can define an efficiently computable group automorphism $\alpha:\mathcal{T}\rightarrow \mathcal{T}: (m,x)\rightarrow(m,f(m)x)$.  We show that  $\mathcal{T}$ cannot be doubly efficiently computable unless the initial hidden subgroup problem and, hence, the discrete logarithm problem, can be solved in probabilistic polynomial time (which, up to date, is not possible).

First, we show that, if we are able to compute\footnote{We are implicitly assuming that there are efficient unique classical encodings for representing the characters of $\Integers_p^\times$, which is a strong yet \emph{weaker} assumption that $\Integers_p^\times$ being doubly efficiently computable.} $\alpha^*$, we must also be able to compute $f^*:\widehat{\Z}_p^\times \rightarrow \widehat{ \Z}_{p-1}^2$ (the dual of $f$, which is defined analogously to $\alpha^*$), since for any $\Hchi_{\mu,\nu}:=\Hchi_\mu\otimes\Hchi_\nu$ we have
\begin{equation}\notag
\Hchi_{\alpha^*(\mu,\nu)}(m,x)=(\Hchi_\mu\otimes\Hchi_\nu)(m,f(m)x)=\Hchi_\mu(m)\Hchi_\nu(f(m))\Hchi_\nu(x) = \Hchi_\mu(m){\Hchi_{f^*(\nu)}}(m) \Hchi_\nu(x),
\end{equation}
consequently, $\Hchi_{\alpha^*(\mu,\nu)} = \left(\Hchi_{\mu}\cdot \Hchi_{f^*(\nu)}\right) \otimes \Hchi_{\nu}$. Hence, if we can evaluate $\alpha^*$ on any character  $\Hchi_1\otimes \Hchi_{\mu}$, then we can determine  $\Hchi_{f^*(\nu)}$, the value of $f^*$ on $\nu$, for any $\Hchi_{\nu}$. If we now evaluate $f^*(\mu_i)$ on all  elements of a $O(\log p)$-sized randomly-obtained generating set $\{\Hchi_{\mu_i}\}$ of $\widehat{\Z}_{p}^\times$  and use existing classical algorithms \cite{BermejoVega_12_GKTheorem} to solve the system of equations $\{[f^*(\mu_i)](x)=\Hchi_{\mu_i}(f(x))=1, x\in \Integers_{p-1}^2 \}$, whose solutions are those $x$ for which $f(x)=e$, we have found (in these $x$'s) generators of the hidden subgroup. This finishes the reduction.

\section{Implementing normalizer circuits over $\Conj{G}$}
\label{app:CC implementation details}

In this section, we present more details on how to efficiently implement normalizer circuits over the hypergroup $\Conj{G}$ when we are working in the Hilbert space $\mathcal{H}_G = \set{\ket{g}}{g \in G}$ labeled by elements of the group. As described in section \ref{sect:Circuits nonabelian group}, normalizer circuits over $\Conj{G}$ can be thought of as operating entirely within the subspace $\mathcal{I}_G \le \mathcal{H}_G$ of conjugation invariant wavefunctions; however, we will describe these operations in this section in terms of how they operate on the entire Hilbert space. In section~\ref{sect:character basis operations}, we discuss operations applied in the character basis, and in section~\ref{sect:class basis operations}, we discuss the same in the conjugacy class basis.

\subsection{Working in the character basis}
\label{sect:character basis operations}

Normalizer circuits allow of the following operations to be performed in the character class basis: preparation of initial states; Pauli, automorphism, and quadratic phase gates; and measurement of final states. It should be easy to understand how each of these could be implemented efficiently if we worked in a basis $\set{\ket{\mu}}{\mu \in \mathrm{Irr}(G)}$ of irrep labels. However, the Hilbert space $\mathcal{H}_G$ is only naturally labeled by group elements, which is why, in section~\ref{sect:Circuits nonabelian group}, we defined the character basis states $\set{\ket{\Hchi_\mu}}{\Hchi_\mu \in \Repr{G}}$ in the element basis.

Below, we will describe how to implement an isometry $\ket{\Hchi_\mu} \stackrel{\tau}{\mapsto} \ket{\mu}$ and its inverse, using a readily available choice for the basis $\set{\ket{\mu}}{\mu\in\mathrm{Irr}(G)}$. It should then be clear that we can implement each of the above gates by applying $\tau$, performing the operation in the irrep label basis, and then applying $\tau^{-1}$. To prepare an initial state $\ket{\Hchi_\mu}$, we prepare $\ket{\mu}$ in the irrep label basis and then apply $\tau^{-1}$. Finally, to measure in the character basis, we apply $\tau$ and then read the irrep label.

Our definition of the irrep label basis $\set{\ket{\mu}}{\mu\in\mathrm{Irr}(G)}$ comes from the definition of the QFT over the group $G$. Recall  that the QFT over any finite group $G$ \cite{childs_vandam_10_qu_algorithms_algebraic_problems}, denoted $\mathcal{F}_G$, is a unitary gate that sends an element state $\ket{g}$, for any $g\in G$, to a weighted superposition $\abs{G}^{-1/2} \sum_{\mu\in\mathrm{Irr}(G)} d_\mu \ket{\mu, \mu(g)}$, where  $\ket{\mu}$ is a state  that labels the irrep $\mu$ and $\ket{\mu(g)}$ is a $d_\mu^2$ dimensional state defined via
\begin{equation}\label{eq:QFT over NAB group}
\ket{\mu(g)}=\left(\mu(g)\otimes I_{d_\mu}\right) \sum_{i=1}^{d_\mu} \frac{\ket{i,i}}{\sqrt{d_\mu}}= \sum_{i,j=1}^{d_\mu} \frac{[\mu(g)]_{i,j}}{\sqrt{d_\mu}}\ket{i,j}.
\end{equation}
This transformation $\Fourier{G}$ has been extensively studied in the HSP literature and efficient quantum implementations over many groups are currently known (including  the symmetric group,  wreath products of  polynomial-sized groups and metabelian groups \cite{childs_vandam_10_qu_algorithms_algebraic_problems}). 

To see how we can use this, let's look at what $\Fourier{G}$ does to a character class state. In (\ref{eq:Bases Nonabelian Group}), we defined the state $\ket{C_x}$, when living inside the Hilbert space $\mathcal{H}_G$, to be a uniform superposition over the elements in the class $C_x$. If we apply $\mathcal{F}_G$ to this state, the result is
\begin{eqnarray*}
\mathcal{F}_G \ket{C_x}
 &=& \frac{1}{\sqrt{\abs{C_x}}} \sum_{g \in C_x} \mathcal{F}_G \ket{g} \\
 &=& \frac{1}{\sqrt{\abs{C_x}}} \sum_{g \in C_x} \frac{1}{\sqrt{\abs{G}}} \sum_{\mu \in \mathrm{Irr}(G)} d_\mu \ket{\mu} \otimes \sum_{i,j=1}^{d_\mu} \frac{[\mu(g)]_{i,j}}{\sqrt{d_\mu}} \ket{i,j} \\
 &=& \frac{1}{\sqrt{\abs{C_x} \abs{G}}} \sum_{\mu \in \mathrm{Irr}(G)} d_\mu \ket{\mu} \otimes \sum_{i,j=1}^{d_\mu} \frac{[\sum_{g \in C_x} \mu(g)]_{i,j}}{\sqrt{d_\mu}} \ket{i,j}.
\end{eqnarray*}
To simplify further, we need to better understand the sum in the numerator on the right.

The sum $\sum_{g \in C_x} \mu(g)$ is more easily analyzed if we write it as $(\abs{C_x}/\abs{G}) \sum_{h \in G} \mu(x^h)$: by standard results on orbits of group actions \cite{Lang_algebra}, each $\mu(g)$, for $g \in C_x$, arises the same number of times in the sum $\sum_{h \in G} \mu(x^h)$, which hence must be $\abs{G}/\abs{C_x}$ times for each, so we have $(\abs{C_x}/\abs{G}) \sum_{h \in G} \mu(x^h) = \sum_{g \in C_x} \mu(g)$. The sum $(1/\abs{G})\sum_{h \in G} \mu(x^h)$ may be familiar, as it is well known to be $\tfrac{1}{d_\mu} \chi_\mu(x) I$ \cite{Serre_representation_theory}.\footnote{This is a simple application of Schur's lemma. This sum is a $G$-invariant map $\mathcal{H}_G \rightarrow \mathcal{H}_G$, so it must be a constant times the identity. The constant is easily found by taking the trace of the sum.}

Putting these parts together, we can see that
\begin{eqnarray*}
\mathcal{F}_G \ket{C_x}
 &=& \frac{1}{\sqrt{\abs{C_x} \abs{G}}} \sum_{\mu \in \mathrm{Irr}(G)} d_\mu \ket{\mu} \otimes \sum_{i=1}^{d_\mu} \frac{\abs{C_x} \chi_\mu(x)}{d_\mu \sqrt{d_\mu}} \ket{i,i} \\
 &=& \sqrt{\frac{\abs{C_x}}{\abs{G}}} \sum_{\mu \in \mathrm{Irr}(G)} d_\mu \frac{\chi_\mu(x)}{d_\mu} \left(\frac{1}{\sqrt{d_\mu}} \sum_{i=1}^{d_\mu} \ket{\mu,i,i} \right),
\end{eqnarray*}
which is rewritten in our usual hypergroup notation as
\begin{equation}\label{eq:group Fourier on class}
\mathcal{F}_G \ket{C_x} = \sum_{\Hchi_\mu \in \Repr{G}} \sqrt{\frac{w_{C_x} w_{\Hchi_\mu}}{w_\Conj{G}}} \Hchi_\mu(C_x) \left(\frac{1}{\sqrt{d_\mu}} \sum_{i=1}^{d_\mu} \ket{\mu,i,i} \right).
\end{equation}
This precisely mirrors the definition of $\mathcal{F}_\Conj{G}$ with $\ket{\Hchi_\mu}$ replaced by $d_\mu^{-1/2} \sum_{i=1}^{d_\mu} \ket{\mu,i,i}$. We will denote the latter state below by $\ket{\mu_\mathrm{diag}}$. Thus, it follows by (\ref{eq:Character Orthogonality}) that $\Fourier{G} \ket{\Hchi_\mu} = \ket{\mu_\mathrm{diag}}$.

To implement the operation $\tau$, we apply $\Fourier{G}$ and then \emph{carefully} discard the matrix index registers.\footnote{In full detail, we do the following. First, apply the map $\ket{i,j} \mapsto \ket{i,j-i}$, which gives $\ket{i,0}$ when applied to $\ket{i,i}$. Next, write down $d_\mu$ in a new register and then invoke the Fourier transform over $\Integer_{d_\mu}$ on the first index register. The result of this will always be $\ket{0}$, so after uncomputing $d_\mu$, we are left with the state $\ket{0,0}$ in the index registers regardless of the value of $\mu$. At that point, they are unentangled and can be safely discarded.} By the above discussion, we can see that this maps $\ket{\Hchi_\mu}$ to the state $\ket{\mu}$, so this implements the operation $\tau$ correctly for any conjugation invariant state.

To implement the operation $\tau^{-1}$, we do the above in reverse. Starting with a state $\ket{\mu}$, we adjoin matrix index registers, prepare a uniform superposition over $\ket{1}, \dots, \ket{d_\mu}$ in the first index register using the inverse Fourier transform over the abelian group $\Integer_{d_\mu}$, and then copy the first index register to the second\footnote{Or rather, apply the map $\ket{i,j} \mapsto \ket{i, i+j}$, which gives $\ket{i,i}$ when applied to $\ket{i,0}$.} to get the state $\ket{\mu_\mathrm{diag}}$. Finally, we apply $\Fourier{G}^\dagger$ to get the state $\ket{\Hchi_\mu}$ per the calculations above.

As discussed earlier, the operations $\tau$ and $\tau^{-1}$ are all that we need in order to implement each of the required operations of normalizer circuits over $\Conj{G}$ in the character basis.

\subsection{Working in the character class basis}
\label{sect:class basis operations}

Most of the time, gates applied in the conjugacy class basis arise from operations on the whole group. For example, automorphisms of conjugacy classes often arise from automorphisms of the group. Likewise, Pauli Z operators in the conjugacy class basis are applications of characters, which are defined on the whole group, and Pauli X operators can also be implemented using multiplication in the group. Hence, it remains only discuss how to prepare initial states and measure in the conjugacy class basis.

For this, we need to assume that we can perform certain operations on conjugacy classes, as described in the following definition.

\begin{definition}\label{def:compute with CC}
Let $C_1, \dots, C_m$ be the conjugacy classes of $G$. Consider the following operations for working with conjugacy classes:
\begin{itemize}
\item Given a conjugacy class label $i$, produce the size of this class, $\abs{C_i}$.
\item Given an $x \in G$, produce the pair $(i,j)$, where $x = x_j$ in the class $C_i = \{x_1, \dots, x_t\}$.
\item Given a pair $(i,j)$, produce the element $x_j$ from $C_i$.
\end{itemize}
If each of these operations can be performed efficiently, then we say that we can \emph{compute efficiently with conjugacy classes} of $G$.
\end{definition}

We note that this assumption is trivial for abelian groups since each element is in its own conjugacy class. For some common examples of nonabelian groups, such as the dihedral and Heisenberg groups (and their higher nilpotent generalizations), elements are normally encoded in this manner already, so no additional assumption is actually required. For other common examples like the symmetric group, while elements are not always encoded directly in this manner, it is easy to see how the above calculations can be performed efficiently. In general, while we must formally make this assumption, we are not aware of any group for which these calculations cannot be performed efficiently.

If we can compute efficiently with conjugacy classes of $G$, then we can prepare initial states as follows. Starting with the conjugacy class label $i$ in a register, we first compute the size $\abs{C_i}$ into a new register. Next, we adjoin another new register and invoke the inverse Fourier transform over the abelian group $\Integer_{\abs{C_i}}$. After uncomputing the size $\abs{C_i}$, we are left with the superposition $M^{-1/2} \sum_{j=1}^M \ket{i,j}$, where $M = \abs{C_i}$. Finally, we apply the operation that turns pairs into group elements to get $M^{-1/2} \sum_{j=1}^M \ket{x_j}$, where $x_1, \dots, x_M$ are the elements of $C_i$, which is the desired initial state.

To perform a measurement in the conjugacy class basis, we can do the reverse of how we prepared the initial states in order to produce a conjugacy class label $\ket{i}$ in a register. Alternatively, we can simply measure in the group element basis and then, afterward, compute the conjugacy class of this element. These two approaches will give identical measurement probabilities.

Finally, we note that the two operations just described are the equivalent of the operations $\tau$ and $\tau^{-1}$ from section~\ref{sect:character basis operations} for the conjugacy class basis.\footnote{Indeed, the separation of a group element label into a conjugacy class label and an index label is analogous to how, in the space of irreducible representations, we separate each basis element into an irrep label and matrix index labels. It is frequently assumed that we can separate the latter into different registers whenever convenient, so our assumption that we can do the same for conjugacy classes is only affording the same convenience for the hypergroup $\Conj{G}$ that is often assumed for $\Repr{G}$.} As a result, if we do have gates that can be easily implemented on conjugacy classes but do not extend easily to the whole group, then we can implement these gates in the same manner as in the character basis: apply $\tau$ to convert into a basis of conjugacy class labels $\set{\ket{i}}{C_i \in \Conj{G}}$, apply the gate in this basis, and then apply $\tau^{-1}$ move back to the conjugacy class basis in $\mathcal{H}_G$.

Thus, we can see that the ability to compute efficiently with conjugacy classes of $G$ allows us to fully implement normalizer circuits operations applied in the conjugacy class basis. If we also have an efficient QFT for $G$, then as we saw in the previous section, we can implement normalizer circuits operations applied in the character basis as well. Together, these two assumptions allow us to fully implement normalizer circuits over $\Conj{G}$ when working in the Hilbert space $\mathcal{H}_G$.

\end{document}